\documentclass[ejs]{imsart}

%% Packages
\RequirePackage{amsthm,amsmath,amsfonts,amssymb}
\RequirePackage[authoryear]{natbib}%% uncomment this for author-year citations
\RequirePackage[colorlinks,citecolor=blue,urlcolor=blue]{hyperref}
\RequirePackage{graphicx}
%\arxiv{2010.00000}
% local packages
\startlocaldefs
\usepackage[utf8]{inputenc}
\usepackage{psfrag,epsf}
\usepackage{enumerate}
\usepackage{url} 
\usepackage{algorithm,algorithmic}
\usepackage{array}
\usepackage{multirow}
\usepackage{IEEEtrantools}
\usepackage{mathtools}
\usepackage{comment}
\usepackage{bbm}
\usepackage{multirow}
\usepackage{diagbox}
\usepackage{booktabs}
\usepackage{bigints}
\usepackage{soul}
\usepackage{siunitx}
%%%%%%%%%%%%%%%%%%%%%%%%%%%%%%%%%%%%%%%%%%%%%%
%%                                          %%
%% Uncomment next line to change            %%
%% the type of equation numbering           %%
%%                                          %%
%%%%%%%%%%%%%%%%%%%%%%%%%%%%%%%%%%%%%%%%%%%%%%
%\numberwithin{equation}{section}
%%%%%%%%%%%%%%%%%%%%%%%%%%%%%%%%%%%%%%%%%%%%%%
%%                                          %%
%% For Axiom, Claim, Corollary, Hypothesis, %%
%% Lemma, Theorem, Proposition              %%
%% use \theoremstyle{plain}                 %%
%%                                          %%
%%%%%%%%%%%%%%%%%%%%%%%%%%%%%%%%%%%%%%%%%%%%%%
\theoremstyle{plain}
\newtheorem{thm}{Theorem}
\numberwithin{thm}{section}
\newtheorem{lemma}{Lemma}
\numberwithin{lemma}{section}
\newtheorem{cor}{Corollary}
\numberwithin{cor}{section}
\newtheorem{prop}{Proposition}
\numberwithin{prop}{section}

%%%%%%%%%%%%%%%%%%%%%%%%%%%%%%%%%%%%%%%%%%%%%%
%%                                          %%
%% For Assumption, Definition, Example,     %%
%% Notation, Property, Remark, Fact         %%
%% use \theoremstyle{definition}            %%
%%                                          %%
%%%%%%%%%%%%%%%%%%%%%%%%%%%%%%%%%%%%%%%%%%%%%%
\theoremstyle{definition}

\newtheorem{assp}{Assumption}
\newtheorem{remark}{Remark}
\numberwithin{remark}{section}

%\newtheorem*{fact}{Fact}
%%%%%%%%%%%%%%%%%%%%%%%%%%%%%%%%%%%%%%%%%%%%%%
%%                                          %%
%% For Case use \theoremstyle{remark}       %%
%%                                          %%
%%%%%%%%%%%%%%%%%%%%%%%%%%%%%%%%%%%%%%%%%%%%%%
%\theoremstyle{remark}
%\newtheorem{case}{Case}
%%%%%%%%%%%%%%%%%%%%%%%%%%%%%%%%%%%%%%%%%%%%%%
%% Please put your definitions here:
\def\hmu{\hat{\mu}}
\def\bH{\mathbb{H}}
\def\bc{\boldsymbol{c}}
\def\bZ{\boldsymbol{Z}}
\def\fU{\mathfrak{U}}

\newcommand{\E}{\mathbb{E}}

\newcommand{\R}{\mathbb{R}}
\newcommand{\N}{\mathbb{N}}

\newcommand{\U}{\mathcal{U}}
\newcommand{\D}{\mathcal{D}}
\newcommand{\Sb}{\mathbb{S}}
\newcommand{\Sc}{\mathcal{S}}
\newcommand{\T}{\mathcal{T}}
\newcommand{\G}{\mathcal{G}}
\newcommand{\Lp}{\mathcal{L}}
\newcommand{\p}{\mathbb{P}}
\newcommand{\err}{\varepsilon}
\newcommand{\HS}{\mathcal{H}}

\newcommand{\amin}{\text{argmin}}
\newcommand{\innerproduct}[2]{\left\langle #1, #2 \right\rangle}
\newcommand{\mbf}{\mathbf}
\newcommand{\eps}{\varepsilon}
\newcommand{\bbi}{\mathbbm{1}}
\newcommand{\Cov}{\text{Cov}}
\newcommand{\Var}{\text{Var}}
\newcommand{\brac}[1]{\left \{ #1 \right \}}
\newcommand{\df}[1]{#1^{(1)}}
\newcommand{\C}{\mathcal{C}}
\DeclarePairedDelimiterX{\sbrac}[1]{[}{]}{#1}
\DeclarePairedDelimiterX{\pbrac}[1]{(}{)}{#1}
\DeclarePairedDelimiterX{\norm}[1]{\lVert}{\rVert}{#1}
\DeclarePairedDelimiterX{\abs}[1]{\lvert}{\rvert}{#1}
\DeclarePairedDelimiter{\ceil}{\lceil}{\rceil}

\newcommand{\proofpart}[2]{%
  \par
  \addvspace{\medskipamount}%
  \noindent \textbf{Part #1:} #2\par\nobreak
  \addvspace{\smallskipamount}%
  \@afterheading
}
\newcommand\numberthis{\addtocounter{equation}{1}\tag{\theequation}}

%%%%%%%%%%%%%%%%%%%%%%%%%%%%%%%%%%%%%%%%%%%%%%
\endlocaldefs

\begin{document}
\begin{frontmatter}
\title{Two-sample inference for sparse functional data}
%\title{A sample article title with some additional note\thanksref{t1}}
\runtitle{Two-sample inference}
%\thankstext{T1}{A sample additional note to the title.}

\begin{aug}
%%%%%%%%%%%%%%%%%%%%%%%%%%%%%%%%%%%%%%%%%%%%%%%
%% Only one address is permitted per author. %%
%% Only division, organization and e-mail is %%
%% included in the address.                  %%
%% Additional information can be included in %%
%% the Acknowledgments section if necessary. %%
%% ORCID can be inserted by command:         %%
%% \orcid{0000-0000-0000-0000}               %%
%%%%%%%%%%%%%%%%%%%%%%%%%%%%%%%%%%%%%%%%%%%%%%%
\author{\fnms{Chi}~\snm{Zhang}\ead[label=e1]{c378zhan@uwaterloo.ca}},
\author{\fnms{Peijun}~\snm{Sang}\ead[label=e2]{peijun.sang@uwaterloo.ca}}
\and
\author{\fnms{Yingli}~\snm{Qin}\ead[label=e3]{yingli.qin@uwaterloo.ca}}
%%%%%%%%%%%%%%%%%%%%%%%%%%%%%%%%%%%%%%%%%%%%%%
%% Addresses                                %%
%%%%%%%%%%%%%%%%%%%%%%%%%%%%%%%%%%%%%%%%%%%%%%
%\address[A]{Department,
%University or Company Name\printead[presep={,\ }]{e1}}

%\address[B]{Department,
%University or Company Name\printead[presep={,\ }]{e2,e3}}
%\runauthor{F. Author et al.}

\address{Department of Statistics and Actuarial Science\\
University of Waterloo, Ontario, Canada\\ \printead[presep={\ }]{e1,e2,e3}}
\runauthor{C. Zhang et al.}
\end{aug}

\begin{abstract}
We propose a novel test procedure for comparing mean functions across two groups within the reproducing kernel Hilbert space (RKHS) framework. Our proposed method is adept at handling sparsely and irregularly sampled functional data when observation times are random for each subject. Conventional approaches, which are built upon functional principal components analysis, usually assume a homogeneous covariance structure across groups. Nonetheless, justifying this assumption in real-world scenarios can be challenging. To eliminate the need for a homogeneous covariance structure, we first develop a linear approximation for the mean estimator under the RKHS framework; this approximation is a sum of i.i.d. random elements, which naturally leads to the desirable pointwise limiting distributions. Moreover, we establish weak convergence for the mean estimator, allowing us to construct a test statistic for the mean difference. Our method is easily implementable and outperforms some conventional tests in controlling type I errors across various settings. We demonstrate the finite sample performance of our approach through extensive simulations and two real-world applications.
\end{abstract}

%\begin{keyword}[class=MSC]
%\kwd[Primary ]{00X00}
%\kwd{00X00}
%\kwd[; secondary ]{00X00}
%\end{keyword}

\begin{keyword}
\kwd{mean difference detection}
\kwd{pointwise confidence interval}
\kwd{reproducing kernel Hilbert space}
\kwd{weak convergence}
\end{keyword}

\end{frontmatter}
%%%%%%%%%%%%%%%%%%%%%%%%%%%%%%%%%%%%%%%%%%%%%%
%% Please use \tableofcontents for articles %%
%% with 50 pages and more                   %%
%%%%%%%%%%%%%%%%%%%%%%%%%%%%%%%%%%%%%%%%%%%%%%
%\tableofcontents

\section{Introduction}
Due to advances in data collection techniques, functional data analysis (FDA) has gained increasing prominence in modern data analysis. FDA offers a non-parametric framework, especially when repeated measurements from each subject are considered as discrete observations from realizations of continuous random functions. A comprehensive introduction to FDA can be found in monographs such as \cite{ramsayFDA2005}, \cite{Hsing2015FDA} and \cite{KokoszkaFDA2017}.

In the context of FDA, there are primarily two types of functional data that have been extensively studied: dense functional data and sparse functional data. {While there is no universally accepted standard for differentiating between these two types, relevant discussions can be found in \cite{TCaiAOS11} and \cite{zhang2016sparse}}. Generally speaking, dense functional data involve a large number of densely spaced observations per subject, which are often obtained through automated high-frequency data collection instruments. Sparse functional data consist of only a few irregularly spaced observations per subject, which are commonly encountered in longitudinal studies. Estimating mean functions for both types has been studied extensively. For dense functional data, \cite{RiceAndSilverman1991JRSSB} proposed recovering trajectories for each subject via smoothing splines, then deriving the sample mean function based on the recovered trajectories. {For sparse functional data, \cite{FY2005} advocated aggregating observations across subjects and then using local linear smoothing to estimate the mean function.} Motivated by this idea, \cite{TCaiAOS11} employed smoothing splines for mean function estimation from aggregated data. 

This paper focuses on statistical inference for two-sample mean functions. Well-established methods are available in the context of dense functional data. Under the assumption that the number of observations grows faster than the number of subjects, the approximation error incurred by the pre-processing  becomes negligible in subsequent analysis \citep{JinTingZhangAOS2007}. This is the primary reason why various statistical inference approaches exist for two-sample mean functions in the context of dense functional data. {Specifically, the methods proposed by \cite{JinTingZhangJSTP2010} and \cite{JinTingZhangSJS2014}, which were initially intended for fully observed data, suggest that utilizing pre-smoothed individual trajectories in test statistics is asymptotically equivalent to using fully observed data.}

However, conducting inference for two-sample mean functions in the context of sparse functional data remains relatively unexplored. In cases with only a few irregularly spaced observations per subject, recovering trajectories from such sparse data could result in significant estimation errors, which cannot be overlooked in subsequent analysis. Consequently, developing a statistically sound procedure for mean function inference becomes more challenging. Existing methods, to our knowledge, either consider a polynomial-type null hypothesis \citep{LRT2014SJS} or assume a homogeneous covariance structure for two groups \citep{JRSSC2016, ejs2sample}.

Testing if a mean function is a polynomial can be overly restrictive when prior knowledge about the mean function is lacking. Additionally, determining the polynomial order can be challenging. On the other hand, the homogeneous covariance assumption allows us to project individual trajectories onto basis functions derived from pooled data. These basis functions are eigenfunctions of the common covariance function, known as functional principal components in the FDA literature. Without assuming a common covariance structure, trajectories from different groups are projected onto different directions, making comparisons of mean functions based on projections misleading. However, verifying this assumption can be quite challenging in itself. For example, \cite{pigoli_distances_2014} proposed an infinite-dimensional version of the Procrustes size-and-shape distance for covariance operators. However, the theoretical properties of this test statistic are not yet fully established. As a result, the calculation of p-values using this statistic depends on re-sampling techniques, such as permutations.

In this paper, we propose a novel procedure for testing the mean difference based on sparse functional data within the RKHS framework. The proposed method does not entail a homogeneous covariance for the two groups of functional data. We begin by aggregating data within each group, enabling us to borrow information from data across all subjects to obtain a group-wise mean function estimator via smoothing splines. We then develop a {\it functional Bahadur representation} for this estimator, which is crucial for establishing the asymptotic properties of the mean estimator. This representation was first developed by \cite{zfsannals13}, and we adopt the same term throughout the paper. Specifically, it allows us to represent the estimation error as a summation of i.i.d. random elements plus higher-order terms that are negligible in the asymptotic analysis.
With the aid of this representation, we establish pointwise asymptotic distributions and weak convergence for the mean estimator within each group. Under the independence assumption of these two groups, we subsequently develop pointwise and global tests for the mean difference between the two groups. To evaluate finite-sample performance of our proposed method, we conduct extensive simulation studies to investigate its performance in estimating the mean function and testing the mean difference between the two groups. 

The remainder of this article is organized as follows. In Section \ref{sec:model_and_est}, we outline the construction of a mean function estimator via smoothing splines for sparse functional data. In Section \ref{sec:theory}, we develop asymptotic results for the mean estimator, and design a bootstrap-based algorithm to implement the testing procedure with statistical guarantees. Detailed proofs are relegated to the Appendix. We showcase finite-sample performance of the proposed method under various settings in Section \ref{sec:Simulation_studies}.  We consider two real applications of the proposed method in Section \ref{sec:real_data_examples}: diffusion tensor imaging data and Beijing air quality data. Finally, we conclude the paper in Section \ref{sec:conclusion}.

\section{Model and Estimation}\label{sec:model_and_est}
In Section \ref{sec:setup}, we introduce our model, formalize the hypothesis of detecting mean differences in two groups of functional data, and review existing methods. Subsequently, in Section \ref{sec:mean_est}, we propose a smoothing-spline-based estimator for the mean function within each group under the RKHS framework, leading to the immediate availability of a mean difference estimator.

\subsection{Problem Setup} \label{sec:setup}
In this paper, we consider a two-sample problem involving data generated from the following model: 
\begin{equation}\label{eq:model}
    Y_{gij} = X_{gi}(T_{gij}) + \err_{gij}, \quad T_{gij} \in \T,
\end{equation}
where $g=1,2$, $i=1, 2, \ldots, n_{g}$ and $j=1, 2, \ldots, N_{gi}$. Here $g$ is the group index.  Without loss of generality, let $\T = [0, 1]$. 

\begin{enumerate}
    \item[(i)] Let $Y_{gij}$ denote the $j$th noisy observation of the $i$th random function $X_{gi}$;
    \item[(ii)] Let $T_{gij}$'s denote i.i.d. observation times with a common group-wise probability density defined over $\T$; 
    \item[(iii)] Let $\err_{gij}$'s denote i.i.d. random errors with mean zero and variance $\sigma_{g, \err}^2$;
    \item[(iv)] Let $N_{gi}$'s denote numbers of observations, assumed to be i.i.d. within group $g$.
\end{enumerate}
We further assume that every $\err_{gij}$ is independent of the random process $X_g$ and the observation time $T_{gij}$; $N_{gi}$'s are independent of all other random objects in the model. Notice that $N_{gi}$'s characterize sparseness of the observations in group $g$. This model structure has been widely adopted in the FDA literature, including \cite{FY2005}, \cite{li2010uniform}, \cite{zhang2016sparse}, \cite{ejs2sample} and references therein. 

For $g = 1, 2$, we assume $\E\{X_{g}(t)\} = \mu_g(t)$ for all $t \in \T$. In this paper, we focus on two types of hypothesis testing problems. The first is commonly referred to as pointwise testing, where the null hypothesis is defined for every specific time point $t^* \in \T$:
\begin{equation}\label{eq:pt-hypo}
    H_0: \mu_1(t^*) = \mu_2(t^*), ~\text{ VS }~ H_1: \mu_1(t^*) \neq \mu_2(t^*).
\end{equation}

The second type is global testing, which formulates the testing problem for the entire domain $\T$:
\begin{equation}\label{eq:hypo}
    H_0: \mu_1(t) = \mu_2(t), ~ \forall t \in \T ~\text{ VS }~ H_1: \mu_1(t) \neq \mu_2(t) \text{, for some } t \in \T.
\end{equation}
{The hypothesis testing problem described in \eqref{eq:pt-hypo} is focused on testing mean differences at specific time points. However, the global test may not align with the pointwise test even for all $t\in\T$, at the same significance level. Furthermore, in contrast to \eqref{eq:pt-hypo}, the testing problem in \eqref{eq:hypo} is more complex due to the infinite-dimensional nature of the parameter of interest.} Addressing this problem with a finite number of observations entails dimension reduction techniques. Functional principal components analysis (fPCA) plays a pivotal role in this regard and has been extensively explored in the FDA literature. In classical fPCA, the covariance function $\Cov(X_{gi}(t), X_{gi}(s)) \triangleq c_g(t,s)$ can be orthogonally expanded using a sequence of non-increasing and non-negative eigenvalues $\theta_{gk}$ and their corresponding eigenfunctions $\phi_{gk}$: $c_g(t,s) = \sum_{k} \theta_{gk}\phi_{gk}(t)\phi_{gk}(s)$. Then, the Karhunen–Lo\'{e}ve expansion yields
\begin{equation}\label{eq:KL_expansion}
    X_{gi}(t) = \mu_g(t) + \sum_{k=1}^{\infty}\xi_{gik}\phi_{gk}(t),
\end{equation}
where $\xi_{gik} = \int_{\mathcal{T}}\{X_{gi}(t) - \mu_g(t)\}\phi_{gk}(t) dt$ is the $k$th fPC score satisfying:
\begin{enumerate}
    \item[(i)] $\E\xi_{gik} = 0$;
    \item[(ii)] $\Cov(\xi_{gik_1}, \xi_{gik_2}) = \theta_{gk}\delta_{k_1k_2}$, which equals $\theta_{gk}$ if $k_1 = k_2$, and $0$ otherwise.
\end{enumerate}
Here $k, k_1, k_2 \geq 1$, $g=1,2$, and $i = 1, \ldots n_g$.

{Assuming $c_1(t,s) = c_2(t,s)$, the eigenfunctions for both groups are identical. Under the null hypothesis in \eqref{eq:hypo}, the difference between the groups of random functions arises from the randomness of fPC scores as described in \eqref{eq:KL_expansion}.
In practice, we usually consider a truncated version of \eqref{eq:KL_expansion}, by retaining only the first $p$ eigenfunctions estimated from $Y_{gij}$'s within each group. Interested readers can refer to \cite{ramsayFDA2005} and \cite{FY2005} for further details.  Here $p$ can be determined via cross-validation or criteria such as the Bayesian information \citep{FY2005}. Such a truncation converts the infinite-dimensional inference problem to a finite-dimensional one. Following this strategy, \cite{ejs2sample} constructed a $\chi^2_{p}$ test statistic by assuming joint normality of scores and random errors. To compare the distributions of these groups of functional data, \cite{JRSSC2016} proposed the Anderson–Darling test for comparing the distribution of the two groups of fPC scores. 
It should be noted that the homogeneous covariance assumption is vital for fPCA-based methods to perform dimension reduction, although verifying it in real-world scenarios can be challenging.}

\subsection{Estimation of Mean Functions} \label{sec:mean_est}
In the FDA literature, it is common to assume that discrete observations are taken from sample paths that are realizations of a random process, typically with certain smoothness properties \citep{Wang2016FDAReview}. In this paper, we assume that almost surely the sample path of $X_g$ for $g = 1, 2$ resides in an $m$th order Sobolev space, which is defined as
\begin{equation*}\label{def:SobolevSpace}
\begin{aligned}
    \HS := H^{m}(\T) = \{f: \T \to \R \mid f, f^{(1)}, f^{(2)}, &\ldots, f^{(m-1)} \text{ are absolutely} \\
    &\text{ continuous and }f^{(m)} \in \Lp_2(\T)\},
\end{aligned}
\end{equation*}
where $f^{(v)}$ denotes the $v$th order derivative of $f$ and $$\Lp_2(\T) = \left\{f: \T \to \R ~\Big | ~\int_\T f^2(t) dt < \infty \right\},$$ denotes the collection of square integrable functions on $\T$. This assumption also appears in \cite{TCaiAOS11}. Furthermore, as demonstrated in the same reference, the mean function $\mu_g$ belongs to $\HS$, as indicated by the following inequality:
\begin{equation*}
    \int_{\T} \{\mu_g^{(m)}(t)\}^2dt \leq \E\left[\int_{\T} \{X_{g}^{(m)}(t)\}^2dt\right].
\end{equation*}

Choosing an appropriate mean function estimator hinges on the subject sampling frequency within the same group. For fully observed data $X_{g1}, \ldots, X_{gn_g}$, the sample mean is a natural choice. However, in practice, observations are often discretely collected over fixed or random grids. {Estimating the mean function for dense functional data commonly involves a two-step method.   Firstly, a pre-smoothing procedure is employed to reconstruct individual trajectories denoted as $\tilde{X}_{gi}$ via non-parametric techniques like smoothing splines \citep{RiceAndSilverman1991JRSSB}, local polynomial regression \citep{JinTingZhangAOS2007}, and B-splines \citep{Cardot2000JNS}. The subsequent analysis then utilizes $\tilde{X}_{gi}$ in place of $X_{gi}$ in the sample mean estimator.}

The pre-smoothing approach becomes inappropriate with limited observations per subject due to insufficient information for accurate trajectory recovery. In sparse settings, \cite{FY2005} suggested aggregating data across all subjects to estimate the mean function through local linear regression. They established consistency and uniform convergence rates for both mean and covariance estimators. Furthermore, \cite{zhang2016sparse} provided convergence rates and pointwise limiting distributions for the mean estimator using local linear regression under various weighting schemes. Additionally, they presented a systematic framework for differentiating dense functional data from sparse functional data, which is based on the ratio of the average number of observations per subject to the total number of subjects.

Inspired by the aggregation idea proposed in \cite{FY2005}, we introduce a novel mean function estimator within the RKHS framework. Specifically, we compute the smoothing splines estimate for $\mu_g$ in each group $g$ by minimizing a loss function $\ell(f, \lambda_g)$ defined as follows:
\begin{align}
    \hat{\mu}_{\lambda_g} &= \underset{f \in \HS}{\amin}~ \ell(f, \lambda_g) \nonumber \\
    &:=  \underset{f \in \HS}{\amin} \Bigg[\frac{1}{2M_g}\sum_{i=1}^{n_g}\sum_{j=1}^{N_{gi}}\big\{Y_{gij} -  f(T_{gij})\big\}^2 + \frac{\lambda_g}{2} J(f, f)\Bigg], \label{eq:mean_optimization}
\end{align}
where $M_g = \sum_{i=1}^{n_g}N_{gi}$ denotes the total number of observations across $n_g$ subjects in group $g$, $J(f, f) = \int_\T \{f^{(m)}(t)\}^2dt$ is the roughness penalty, and $\lambda_g > 0$ denotes the smoothing parameter controlling the trade-off between fidelity to the data and roughness of the estimated mean function. {It is important to point out that we assign the same weight to each observation in \eqref{eq:mean_optimization}, while \cite{TCaiAOS11} assigned the same weight to each subject. They showed that their proposed mean estimator attained the minimax convergence rate, but did not develop any inferential tools for the mean function.}

When $\HS$ is equipped with the following (squared) norm
\begin{equation}\label{eq:norm_representer_thm}
    \norm{f}_{R}^2 = \sum_{\nu = 0}^{m-1}\brac{\int_{\T}f^{(\nu)}(t)dt}^2 + \int_{\T}\brac{f^{(m)}(t)}^2dt,
\end{equation}
$\HS$ becomes an RKHS \citep[Chapter~2.3]{gu2013smoothing}. 
By the well-known representer theorem for an RKHS \citep{wahba1990spline}, the mean function estimator, i.e., the solution to the optimization problem \eqref{eq:mean_optimization}, has a finite-dimensional representation. This result is summarized in the following proposition. Let $\HS_0$ denote the null space of $J$, that is, $\HS_0 = \{f \in \HS\mid J(f, f) = 0\}$, and $\HS_1$ the orthogonal complement of $\HS_0$ in $\HS$.
Suppose the basis functions of $\HS_0$ are $\{\phi_1, \phi_2, \ldots, \phi_m\}$ and $R(s, t)$ denotes the reproducing kernel of $\HS_1$ associated with the norm defined in \eqref{eq:norm_representer_thm}.
\begin{prop}\label{prop:representer_thm}
The solution to the minimization problem \eqref{eq:mean_optimization} can be expressed as 
\begin{equation}\label{eq:representer_thm}
    \hat{\mu}_{\lambda_g}(t) = \sum_{k=1}^{m}d_{gk}\phi_k(t) + \sum_{i=1}^{n}\sum_{j=1}^{N_{gi}}c_{gij}R(t_{gij}, t),
\end{equation}
for some coefficients $d_{g1}, d_{g2}, \ldots, d_{gk}$ and $c_{g11}, c_{g12}, \ldots, c_{gn_gN_{gn}}$.
\end{prop}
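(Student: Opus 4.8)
The plan is to prove this as the classical representer theorem via an orthogonal-projection argument, exploiting the fact that the penalty $J$ depends only on the $\HS_1$-component of $f$, whereas the data-fidelity term depends on $f$ only through finitely many evaluation functionals.

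First I would fix the group $g$ and, to lighten notation, write $\{t_\ell\}_{\ell=1}^{M_g}$ for the pooled design points $\{T_{gij}\}$. Using the direct-sum decomposition $\HS = \HS_0 \oplus \HS_1$ associated with the norm \eqref{eq:norm_representer_thm}, I would write any candidate $f \in \HS$ as $f = f_0 + f_1$ with $f_0 \in \HS_0$ and $f_1 \in \HS_1$. By construction of this decomposition \citep[Chapter~2.3]{gu2013smoothing}, the penalty is carried entirely by the $\HS_1$-part, namely $J(f,f) = \norm{f_1}_{\HS_1}^2$, and $R$ is the reproducing kernel of $\HS_1$, so that $f_1(t) = \langle f_1, R(t, \cdot)\rangle_{\HS_1}$ for every $t \in \T$.

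The key step is to further split $\HS_1$ as $\HS_1 = \mathcal{V} \oplus \mathcal{V}^{\perp}$, where $\mathcal{V} = \mathrm{span}\{R(t_\ell, \cdot): \ell = 1, \ldots, M_g\}$ is the finite-dimensional span of the representers at the design points. Writing $f_1 = f_{1,\parallel} + f_{1,\perp}$ accordingly, the reproducing property gives $f_1(t_\ell) = \langle f_1, R(t_\ell, \cdot)\rangle_{\HS_1} = \langle f_{1,\parallel}, R(t_\ell, \cdot)\rangle_{\HS_1} = f_{1,\parallel}(t_\ell)$, since $f_{1,\perp}$ is orthogonal to every $R(t_\ell, \cdot)$. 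Hence each evaluation $f(t_\ell) = f_0(t_\ell) + f_{1,\parallel}(t_\ell)$ does not depend on $f_{1,\perp}$, and so the residual-sum-of-squares term in $\ell(f, \lambda_g)$ is unchanged if we delete $f_{1,\perp}$. On the other hand, the Pythagorean identity yields $J(f,f) = \norm{f_{1,\parallel}}_{\HS_1}^2 + \norm{f_{1,\perp}}_{\HS_1}^2 \geq \norm{f_{1,\parallel}}_{\HS_1}^2$, with equality if and only if $f_{1,\perp} = 0$. Therefore setting $f_{1,\perp} = 0$ strictly decreases the objective unless it already vanishes, so any minimizer must lie in $\HS_0 \oplus \mathcal{V}$. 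Expanding its $\HS_0$-part in the basis $\{\phi_1, \ldots, \phi_m\}$ and its $\mathcal{V}$-part in the representers $\{R(t_\ell, \cdot)\}$ gives exactly the representation \eqref{eq:representer_thm}.

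I expect the only real obstacle to be bookkeeping rather than analysis: one must verify carefully that, under the norm \eqref{eq:norm_representer_thm}, the semi-norm $J$ restricted to $\HS_1$ coincides with the squared $\HS_1$-norm and that $R$ reproduces precisely the $\HS_1$-component, so that $f_{1,\perp}$ genuinely drops out of every evaluation functional. These structural facts are supplied by the RKHS construction in \citet[Chapter~2.3]{gu2013smoothing}; once they are in place, the projection argument above is immediate. A minor additional point is that existence of a minimizer is guaranteed because, after reducing to the finite-dimensional space $\HS_0 \oplus \mathcal{V}$, the objective becomes a coercive strictly convex quadratic in the coefficients $(d_{gk}, c_{gij})$.
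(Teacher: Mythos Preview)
Your argument is correct and is precisely the classical orthogonal-projection proof of the representer theorem. The paper does not actually prove Proposition~\ref{prop:representer_thm} but simply invokes ``the well-known representer theorem for an RKHS'' with a citation to \cite{wahba1990spline}; your proposal supplies exactly the standard proof behind that citation (as in \citealp[Chapter~2.3]{gu2013smoothing}), so there is no discrepancy in approach.
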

A suitable choice for $\phi_k$ and $R(\cdot, \cdot)$ is $\phi_k = t^{k - 1}, k=1, 2, \ldots, m$ and $$R(s, t) = \frac{1}{(m!)^2}B_m(t)B_m(s) - \frac{1}{(2m)!}B_{2m}(\abs{t-s}),$$ respectively. Here $B_{l}$ stands for the $l$th Bernoulli polynomials. For further details, please refer to Chapter 2.3 of \cite{gu2013smoothing}.

The choice of $\lambda_g$ has a great impact on the performance of mean function estimation and inference for the mean function difference. A commonly used data-driven method for choosing $\lambda$ is through \textit{generalized cross-validation} (GCV). 
Since it can be shown that $c_{gij}$ and $d_{gk}$ linearly depend on $\mbf{y}_g$ in \eqref{eq:representer_thm}, we express $\hat{\boldsymbol{\mu}}_{\lambda_g} = S(\lambda_g)\mbf{y}_g$, where 
$$\hat{\boldsymbol{\mu}}_{\lambda_g} = (\hat{\mu}_{\lambda_g}(t_{g11}), \hat{\mu}_{\lambda_g}(t_{g12}), \ldots, \hat{\mu}_{\lambda_g}(t_{gn_gN_{n_g}}))^{'},$$ and  
$$\mbf{y}_g = (y_{g11}, y_{g12}, \ldots, y_{gn_gN_g})^{'}.$$ 
Here $S(\lambda_g)$ is the smoothing matrix. To determine $\lambda_g$, we minimize the the GCV score defined as 
\begin{equation}\label{eq:GCV_score}
    \text{GCV}(\lambda_g) = \frac{n^{-1}\|\hat{\boldsymbol{\mu}}_{\lambda_g} - \mbf{y}_g\|_2^2}{[n^{-1}\text{tr}\{I - S(\lambda_g)\}]^2},
\end{equation}
where $\|\cdot\|_2$ denotes the vector Euclidean norm. The choice of $\lambda_g$ is crucial for this procedure, and more details can be found in  \cite{GCV1979} and Chapter 3.2 of \cite{gu2013smoothing}. 

\section{Theoretical Properties} \label{sec:theory}

Before presenting the main results, we establish necessary notations and assumptions in Section \ref{subsec:assumptions} that define the data-generating process for sparse functional data and ensure estimation consistency. {Section \ref{sec:RKHS} introduces a new inner product that makes $\HS$ an RKHS with desirable properties. With the aid of these properties, we establish the functional Bahadur representation (FBR) in Section \ref{sec:FBR}. This representation leads to the pointwise limiting distributions (Section \ref{subsec:pointwise}) and weak convergence (Section \ref{subsec:global}) for the mean estimator of each group; they are then extended to the estimated mean difference. Subscripts are omitted when assumptions and properties apply equally to both groups.} 

\subsection{General Assumptions}\label{subsec:assumptions}
\begin{assp}\label{assp:data_gm}
For $i= 1, 2, \ldots, n, j = 1, 2, \ldots, N_{i}$, $(Y_{ij}, T_{ij})$ generated by model (\ref{eq:model}), are copies of $(Y, T)$ with a bounded joint density function $f(y, t)$. The marginal density function of $T$, denoted as $\pi(t)$, is bounded away from both infinity and zero.
\end{assp}

\begin{assp}\label{assp:measurement_error}
$\E(\eps) = 0$ and $\text{Var}(\eps) = \sigma_{\eps}^{2}< \infty$. Furthermore, $\eps_{ij}, j = 1, \dots, N_i, i = 1, \ldots, n$ are i.i.d. copies of $\eps$ and are independent of $T_{ij}$.
\end{assp}

\begin{assp}\label{assp:num_of_obs}
The number of observations for the $i$th subject, $N_{i}, i = 1, \ldots, n$, are i.i.d. copies of a positive integer-valued random variable $N$ satisfying $\E(N) = \mu_N < \infty$ and $\Var(N) = \sigma^2_N< \infty$. Furthermore, $\p(N > 1) > 0$. Additionally, for $i = 1, \ldots, n$, $N_i$ is independent of $T_{ij}$ and $\eps_{ij}$.
\end{assp}

\begin{assp}\label{assp:smooth_of_X}
    The sample path of the stochastic process $X$ in model \eqref{eq:model} has an $m$th-order derivative, which is denoted by $X^{(m)}$, almost surely. Moreover, $X^{(m)} \in \mathcal{L}_2(\T)$, i.e., $\E[\int_{\T} \{X_{g}^{(m)}(t)\}^2dt] < \infty$. 
\end{assp}
These assumptions are common in the FDA literature. For example, Assumptions \ref{assp:data_gm}-\ref{assp:num_of_obs} were adopted by \cite{FY2005} for fPCA in the context of sparse functional data. Assumption \ref{assp:data_gm} ensures that observation $Y_{ij}$'s are randomly and evenly spread on the domain $\T$ for each subject, facilitating reasonable estimators of the mean function when observations are aggregated. Assumption 2 ensures that the second moment of the error term is finite. The assumption of independent errors within the same subject can be relaxed. For instance, one could assume $\eps(t)$ is a square-integrable and weakly stationary process. In such a case, an additional term accounting for the within-subject variability would appear in the variance expressions provided in Theorems \ref{thm:ptwise_limiting_dist_biased} and \ref{thm:weak_conv}. Assumption \ref{assp:num_of_obs} characterizes the 
sparseness of the observations. To see this, let $M = \sum_{i=1}^{n}N_i$. Then we have $$\E(M) = \E\left(\sum_{i =1}^{n} N_{i}\right) = n\mu_N = O(n)\textrm{~as~}\mu_N < \infty.$$ Further, we can show that $M = O_p(n)$ by the weak law of large numbers. This indicates that the total number of observations per group scales with the sample size, meaning the number of observations per subject does not increase with respect to the sample size. Assumption \ref{assp:smooth_of_X} is introduced for theoretical convenience, which ensures continuity and boundedness of sample paths and the covariance function in the compact domain. This assumption with $m = 2$ was considered in \cite{zhu2014structured} and \cite{kong2016partially}. Let $\mbf{Y}_{i} = (Y_{i1}, Y_{i2}, \ldots, Y_{iN_i})^{'}$ and $\mbf{T}_{i} = (T_{i1}, T_{i2}, \ldots, T_{iN_i})^{'}$. Assumptions \ref{assp:measurement_error} and \ref{assp:num_of_obs} indicate $\{(\mbf{Y}_{i}, \mbf{T}_{i}, N_{i})\}_{i=1}^{n}$ are i.i.d in model \eqref{eq:model}. {The independence of subjects allows for the application of the central limit theorem, to establish asymptotic results for the mean estimator.}

\subsection{Reproducing Kernel Hilbert Space} \label{sec:RKHS}
To test hypotheses in \eqref{eq:pt-hypo} and \eqref{eq:hypo}, we develop the FBR for a first-order approximation to the mean estimator. The FBR entails a properly defined RKHS, where the mean estimator resides. 
Following \cite{zfsannals13}, for $m > 1/2$, we consider $\HS = H^{m}(\T)$ as an RKHS, endowed with the inner product:
\begin{equation}\label{eq:inner_prod}
    \innerproduct{f}{g} = \E\left\{f(T)g(T)\right\} + \lambda J(f, g).
\end{equation}
Henceforth, $\norm{f}$ represents the norm induced by \eqref{eq:inner_prod} for any $f \in \HS$, and $K$ stands for the corresponding reproducing kernel of $\HS$. Consequently, $K$, serving as a function from $\T \times \T$ to $\R$, satisfies the following: 
\begin{equation*}\label{eq:reproducing_property}
   \innerproduct{K_t}{f} = f(t) \text{  for any } f \in \HS~\text{and}~t \in \T,
\end{equation*}
 where $K_t(\cdot) = K(t, \cdot) \in \HS$. 
Moreover, we introduce a positive self-adjoint operator $W_\lambda: \HS \to \HS$ satisfying \begin{center}
    $\innerproduct{W_\lambda f}{g} = \lambda J(f,g)$ for any $f, g \in \HS$. 
\end{center}
Define $V(f, g)$ as $\E\{f(T)g(T)\}$; thus, the inner product $\innerproduct{f}{g}$ defined in \eqref{eq:inner_prod} becomes the sum of $V(f, g)$ and $\innerproduct{W_\lambda f}{g}$. It is evident that if $T$ is uniformly distributed over $[0, 1]$, then $V(f, f)$ simplifies to the squared $L_2$ norm.

Suppose $\{a_\nu\}_{\nu=1}^{\infty}$ and $\{b_\nu\}_{\nu=1}^{\infty}$ are two positive sequences. Denote $a_\nu \asymp b_\nu$ if 
\begin{center}
    $0 < \underline\lim_{\nu \to \infty} a_\nu/b_\nu \leq \overline\lim_{\nu \to \infty} a_\nu/b_\nu < \infty$.
\end{center}
Denote the sup-norm of $g\in \HS$ and the $L_2$ norm of $f \in \mathcal{L}_2(\T)$ as  
$$\norm{g}_{\sup} = \sup_{t\in \T}\abs{g(t)} \textrm{~and~} \|f\|_2 = \left\{\int_{t\in \T} f^2(t) dt\right\}^{1/2},$$
respectively. Finally, let $\sum_{v} = \sum_{v = 1}^{\infty}$ for convenience.

\begin{assp}\label{assp:Fourier_expansion}
There exists a sequence of eigenfunctions $h_v \in \HS$ such that 
$\sup_{v \in \N_+}\norm{h_v}_{\sup} < \infty$ and a non-decreasing sequence of eigenvalues $\gamma_v \asymp v^{2m}$ that satisfy
\begin{equation*}
    V(h_u, h_v) = \delta_{uv} ~~\text{and}~~J(h_u, h_v) = \gamma_u\delta_{uv}, \hspace{3em}u, v \in \N.
\end{equation*}
Particularly, every $g \in \HS$ admits a Fourier expansion $g = \sum_{v} V(g, h_v)h_v$. 
\end{assp}

Assumption \ref{assp:Fourier_expansion} is pivotal in establishing the asymptotic results, as it establishes the connection between the two bilinear forms $V$ and $J$. Such assumptions are widely adopted in the FDA literature to develop consistency and the minimax convergence rate of estimators; see \cite{zfsannals13} and \cite{sun2018optimal} for example. 
This assumption leads to an analytic expression for any function in $\HS$, and plays an important role in determining upper bounds for several key functions. The following two lemmas follow immediately from Assumption \ref{assp:Fourier_expansion} and are crucial for subsequent theoretical developments; for example, see Remark \ref{remark:unbiased_remark}.

\begin{lemma}\label{prop:func_expression_by_h}
Under Assumption \ref{assp:Fourier_expansion}, for any $g \in \HS$, we have $$\norm{g}^2 = \sum_{u}\abs{V(g, h_u)}^2(1 + \lambda\gamma_u),~K_t(\cdot) = \sum_{v} \frac{h_v(t)}{1+\lambda\gamma_v}h_v(\cdot),$$
and $W_\lambda h_v(\cdot) = \lambda \gamma_v/(1 + \lambda \gamma_v)h_v(\cdot).$
\end{lemma}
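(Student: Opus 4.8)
The plan is to reduce all three identities to one polarized form of the inner product expressed in the eigenbasis $\{h_v\}$, and then verify each candidate expression by checking its defining property. Assumption \ref{assp:Fourier_expansion} supplies the Fourier expansion $g = \sum_v V(g, h_v)h_v$ together with the biorthogonality relations $V(h_u, h_v) = \delta_{uv}$ and $J(h_u, h_v) = \gamma_u\delta_{uv}$. Expanding both arguments in this basis and using bilinearity of $V$ and $J$, I obtain for $f, g \in \HS$
$$\innerproduct{f}{g} = V(f, g) + \lambda J(f, g) = \sum_u V(f, h_u)\,V(g, h_u)\,(1 + \lambda\gamma_u).$$
Setting $f = g$ gives the first identity $\norm{g}^2 = \sum_u \abs{V(g, h_u)}^2(1 + \lambda\gamma_u)$ at once, so the rest of the work reduces to recognizing the right objects in the remaining two formulas.

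For the kernel, I would define the candidate $\tilde{K}_t = \sum_v \frac{h_v(t)}{1 + \lambda\gamma_v}h_v$ and first confirm that it lies in $\HS$: by the norm formula just derived, its squared norm equals $\sum_v h_v(t)^2/(1 + \lambda\gamma_v)$, which is finite since $\sup_v \norm{h_v}_{\sup} < \infty$ and $\gamma_v \asymp v^{2m}$ with $m > 1/2$ force $\sum_v (1 + \lambda\gamma_v)^{-1} < \infty$. Next, from $V(\tilde{K}_t, h_u) = h_u(t)/(1 + \lambda\gamma_u)$ and the polarized inner product above, I get $\innerproduct{\tilde{K}_t}{f} = \sum_u h_u(t)\,V(f, h_u) = f(t)$, the final step being the Fourier expansion of $f$ evaluated at $t$. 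By uniqueness of the reproducing kernel this identifies $\tilde{K}_t$ with $K_t$.

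For $W_\lambda$, I would analogously set $\psi_v = \frac{\lambda\gamma_v}{1 + \lambda\gamma_v}h_v$ and verify its defining relation $\innerproduct{\psi_v}{g} = \lambda J(h_v, g)$ for every $g \in \HS$. The left-hand side, via the polarized inner product and $V(\psi_v, h_u) = \frac{\lambda\gamma_v}{1 + \lambda\gamma_v}\delta_{uv}$, equals $\lambda\gamma_v\,V(g, h_v)$; the right-hand side, using $J(h_v, g) = \sum_u V(g, h_u)\,J(h_v, h_u) = \gamma_v\,V(g, h_v)$, equals the same quantity. Since $W_\lambda$ is characterized by this relation, $W_\lambda h_v = \psi_v$, which is the third identity.

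The main obstacle I anticipate is not the algebra but the analytic justification for manipulating the infinite series: interchanging the bilinear forms $V$ and $J$ with infinite summation, confirming that term-by-term point evaluation is legitimate, and establishing that $\tilde{K}_t$ genuinely defines an element of $\HS$ rather than a merely formal series. All of these hinge on the uniform bound $\sup_v \norm{h_v}_{\sup} < \infty$ together with the spectral growth $\gamma_v \asymp v^{2m}$ for $m > 1/2$, which jointly guarantee absolute convergence of the relevant series and continuity of the point-evaluation functional; I would take care to invoke these explicitly rather than treat the formal computations as automatic.
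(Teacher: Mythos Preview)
Your proposal is correct and follows essentially the same route as the paper: both expand in the eigenbasis $\{h_v\}$ and exploit $V(h_u,h_v)=\delta_{uv}$, $J(h_u,h_v)=\gamma_u\delta_{uv}$ to read off each identity. The only minor difference is that the paper identifies the Fourier coefficients of $K_t$ and $W_\lambda h_v$ directly from their a~priori membership in $\HS$, whereas you propose candidates and verify the defining properties via uniqueness; your explicit attention to series convergence is a welcome refinement that the paper's proof omits.
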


\begin{lemma}\label{prop:K_and_Wf_bound}
Under Assumption \ref{assp:Fourier_expansion}, $\norm{W_{\lambda}f}^2 \leq \lambda J(f, f)$ and $\norm{K_t}^2 \leq C_K^2 h^{-1}$, for any $f \in \HS$ and $t \in \T$, where $C_K$ is a universal constant that is not associated with $t$.
\end{lemma}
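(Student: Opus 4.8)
The plan is to reduce both inequalities to elementary estimates on sums over the eigenbasis $\{h_v\}$ supplied by Assumption \ref{assp:Fourier_expansion}, using the spectral formulas already recorded in Lemma \ref{prop:func_expression_by_h}. The key observation is that every object in sight --- the norm $\norm{\cdot}$, the penalty $J$, the operator $W_\lambda$, and the kernel $K_t$ --- is diagonalized by $\{h_v\}$, so each claim becomes a scalar comparison indexed by $v$.

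For the first bound I would start from the Fourier expansion $f = \sum_v V(f, h_v) h_v$ and apply $W_\lambda$ termwise, using $W_\lambda h_v = \frac{\lambda \gamma_v}{1+\lambda\gamma_v} h_v$. Since $V(h_u, h_v) = \delta_{uv}$, the $u$th $V$-coefficient of $W_\lambda f$ is $V(f, h_u)\,\frac{\lambda\gamma_u}{1+\lambda\gamma_u}$. Feeding this into the norm identity $\norm{g}^2 = \sum_u \abs{V(g,h_u)}^2 (1+\lambda\gamma_u)$ gives
\[
\norm{W_\lambda f}^2 = \sum_u \abs{V(f,h_u)}^2 \frac{(\lambda\gamma_u)^2}{1+\lambda\gamma_u}.
\]
Because $J(f,f) = \sum_u \abs{V(f,h_u)}^2 \gamma_u$, the claim reduces to the pointwise inequality $\frac{\lambda\gamma_u}{1+\lambda\gamma_u} \le 1$, which holds trivially since $\lambda\gamma_u \ge 0$; this part is routine.

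For the second bound I would use the reproducing property to write $\norm{K_t}^2 = \innerproduct{K_t}{K_t} = K_t(t)$, and then the series for $K_t$ from Lemma \ref{prop:func_expression_by_h} to obtain $\norm{K_t}^2 = \sum_v \frac{h_v(t)^2}{1+\lambda\gamma_v}$. The uniform bound $\sup_v \norm{h_v}_{\sup} < \infty$ replaces $h_v(t)^2$ by a constant uniformly in $t$, leaving the task of estimating $\sum_v (1+\lambda\gamma_v)^{-1}$. The hard part will be showing that this sum is of order $\lambda^{-1/(2m)} = h^{-1}$: invoking $\gamma_v \asymp v^{2m}$ I would lower-bound $\gamma_v$ by $c\,v^{2m}$ and compare the tail to $\int_0^\infty (1+\lambda c\, x^{2m})^{-1}\,dx$; the substitution $y = (\lambda c)^{1/(2m)} x$ factors out exactly $\lambda^{-1/(2m)}$ times the finite integral $\int_0^\infty (1+y^{2m})^{-1}\,dy$, which converges precisely because $m > 1/2$.

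Collecting the constants, none of which depend on $t$, yields $\norm{K_t}^2 \le C_K^2 h^{-1}$ with the advertised universal $C_K$. The only subtlety to watch is making the integral comparison rigorous, since $\asymp$ controls $\gamma_v$ only asymptotically; this is handled by splitting off a finite head of the series and bounding it by a constant, which is absorbed into $C_K$ after recalling $h = \lambda^{1/(2m)}$.
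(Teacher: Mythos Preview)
Your proposal is correct and follows essentially the same approach as the paper: both parts are reduced to coefficient-wise inequalities via the eigen-expansion of Lemma \ref{prop:func_expression_by_h}, and the first inequality is handled identically. For the second bound the paper also arrives at $\sum_v (1+\lambda\gamma_v)^{-1}$ but estimates it by a discrete head/tail split at an index $V\asymp h^{-1}$ rather than your integral comparison; these are equivalent standard calculations and yield the same constant structure.
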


We present the following lemma that confirms the existence of the eigen-system within $\HS$ as outlined in Assumption \ref{assp:Fourier_expansion}. A similar result can be found in \cite{zfsannals13}.

\begin{lemma}\label{lem:exist_of_eigensys}
Suppose Assumption \ref{assp:data_gm} holds, and the density function of $T$, $\pi(t)$, has up to $(2m-1)$th order continuous derivatives on $\T$. 
Then the eigenvalues $\gamma_v$ and the corresponding eigenfunctions $h_v$, solved from the following ordinary differential equation systems,
\begin{gather*}
    (-1)^{m}h_v^{2m}(\cdot) = \gamma_v\pi(\cdot)h_v(\cdot), \\
    h_v^{j}(0) = h_v^{j}(1) =0,~ j=m, m+1, \ldots, 2m-1,
\end{gather*}
satisfy Assumption \ref{assp:Fourier_expansion} if we normalize $h_v$ such that $V(h_v, h_v) = 1$.
\end{lemma}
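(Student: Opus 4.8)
The plan is to recognize the boundary value problem as the strong form of a variational (weak) eigenvalue problem attached to the pair of bilinear forms $V$ and $J$, and then to invoke spectral theory for a compact self-adjoint operator together with sharp ODE asymptotics to verify each clause of Assumption~\ref{assp:Fourier_expansion}.

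First, I would record that under Assumption~\ref{assp:data_gm} we may write $V(f,g)=\E\{f(T)g(T)\}=\int_\T f(t)g(t)\pi(t)\,dt$, so $V$ is the $\pi$-weighted $L_2$ inner product; since $\pi$ is bounded away from $0$ and $\infty$, $V(\cdot,\cdot)$ is positive definite on $\HS$ and induces a norm equivalent to the ordinary $L_2$ norm. The form $J(f,g)=\int_\T f^{(m)}g^{(m)}$ is positive semidefinite with null space the polynomials of degree $<m$. Integrating $J(h_v,g)$ by parts $m$ times and collecting boundary terms shows that the strong equation $(-1)^m h_v^{(2m)}=\gamma_v\pi h_v$ together with the stated conditions $h_v^{(j)}(0)=h_v^{(j)}(1)=0$, $j=m,\dots,2m-1$, is precisely the Euler--Lagrange / natural-boundary-condition characterization of the weak eigenproblem
\begin{equation*}
 J(h_v,g)=\gamma_v V(h_v,g)\qquad\text{for all }g\in\HS .
\end{equation*}
The conditions for $j=m,\dots,2m-1$ are exactly what is needed to annihilate the boundary terms for every test function $g\in H^m(\T)$, which explains why no conditions on the lower-order derivatives are imposed.

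Second, I would construct the eigensystem. Endow $\HS$ with the inner product $B(f,g)=V(f,g)+J(f,g)$, which is equivalent to the standard $H^m(\T)$ inner product. Define $L:\HS\to\HS$ by $B(Lf,g)=V(f,g)$; then $L$ is bounded, self-adjoint and positive, and it is \emph{compact} because the embedding $H^m(\T)\hookrightarrow L_2(\T)$ is compact while $V$ is continuous on $L_2$. The spectral theorem yields a $B$-orthonormal basis $\{h_v\}$ with $Lh_v=\mu_v h_v$, $\mu_v\downarrow 0$. Rewriting $B(Lh_v,g)=\mu_v B(h_v,g)$ as $V(h_v,g)=\mu_v\{V(h_v,g)+J(h_v,g)\}$ gives $J(h_v,g)=\gamma_v V(h_v,g)$ with $\gamma_v=(1-\mu_v)/\mu_v$, i.e.\ the weak problem above. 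Renormalizing so that $V(h_v,h_v)=1$, the relations $V(h_u,h_v)=\delta_{uv}$ and $J(h_u,h_v)=\gamma_u\delta_{uv}$ follow from $B$-orthogonality, and completeness of $\{h_v\}$ gives the Fourier expansion $g=\sum_v V(g,h_v)h_v$. For the rate $\gamma_v\asymp v^{2m}$ I would use the Courant--Fischer characterization
\begin{equation*}
 \gamma_v=\min_{\substack{S\subset\HS\\ \dim S=v}}\ \max_{\substack{f\in S\\ f\neq0}}\frac{J(f,f)}{V(f,f)} ,
\end{equation*}
and bound $\pi_{\min}\|f\|_2^2\le V(f,f)\le\pi_{\max}\|f\|_2^2$ to sandwich $\gamma_v$ between the eigenvalues of the two constant-weight problems; for a constant weight the eigenfunctions are explicit (trigonometric) and the eigenvalues grow like $v^{2m}$, so the two-sided comparison yields the claim.

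Finally, the uniform bound $\sup_v\norm{h_v}_{\sup}<\infty$ is the main obstacle, since the $V$-normalization controls only an $L_2$-type average while the eigenfunctions oscillate ever faster as $v\to\infty$. Here I would return to the ODE $(-1)^m h_v^{(2m)}=\gamma_v\pi h_v$ and invoke the Birkhoff--Naimark asymptotic theory for self-adjoint $2m$th-order boundary value problems with the $(2m-1)$-times continuously differentiable coefficient $\pi$: after rescaling by the large parameter $\gamma_v^{1/2m}$, each normalized eigenfunction admits an asymptotic expansion whose leading term is a trigonometric function of bounded amplitude, with a remainder uniform in $v$. This yields $\norm{h_v}_{\sup}\le C$ for all large $v$, and the finitely many remaining eigenfunctions are bounded individually. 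This asymptotic analysis of the higher-order ODE, rather than the functional-analytic construction, is where the real work lies and is the step I expect to be most delicate, mirroring the corresponding argument in \cite{zfsannals13}.
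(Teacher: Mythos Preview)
Your proposal is correct and is essentially a faithful sketch of the argument behind Proposition~2.2 of \cite{zfsannals13}, which is exactly what the paper invokes: the paper's own proof consists of a single sentence citing that proposition. You have simply unpacked the content of that citation---the variational reformulation via integration by parts, the compact self-adjoint operator construction giving the simultaneous diagonalization $V(h_u,h_v)=\delta_{uv}$, $J(h_u,h_v)=\gamma_u\delta_{uv}$ and completeness, the Courant--Fischer comparison with constant-weight problems for $\gamma_v\asymp v^{2m}$, and the Birkhoff asymptotics for the uniform sup-norm bound---so there is no substantive difference in approach.
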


Finally, we derive the first and second-order Fr\'{e}chet derivatives of $\ell(f, \lambda)$ in (\ref{eq:mean_optimization}) with respect to $f$ as detailed below:
\begin{align}
    D\ell(f, \lambda)\Delta g &= -\frac{1}{M}\sum_{i=1}^{n}\sum_{j=1}^{N_i}\left\{Y_{ij} - f(T_{ij})\right\}\innerproduct{K_{T_{ij}}}{\Delta g} + \innerproduct{W_\lambda f}{\Delta g} \nonumber \\
    &= \innerproduct{S_{M}(f) + W_\lambda f}{\Delta g} \nonumber\\
    &= \innerproduct{S_{M, \lambda}(f)}{\Delta g} \nonumber,
\end{align}
and
\begin{align}
    D^2 \ell(f, \lambda)\Delta g \Delta h &= \frac{1}{M}\sum_{i=1}^{n}\sum_{j=1}^{N_i}\innerproduct{K_{T_{ij}}}{\Delta g}\innerproduct{K_{T_{ij}}}{\Delta h} + \innerproduct{W_\lambda \Delta g}{\Delta h} \nonumber \\
    &\triangleq \innerproduct{DS_{M, \lambda}(f)\Delta g}{\Delta h} \nonumber,
\end{align}
where 
\begin{equation} \label{eq-SM}
\begin{aligned}
&S_{M, \lambda}(f) = -\frac{1}{M}\sum_{i=1}^{n}\sum_{j=1}^{N_i}\{Y_{ij} - f(T_{ij})\}K_{T_{ij}} + W_\lambda f \triangleq S_{M}(f) + W_\lambda f, \\
&DS_{M, \lambda}(f)\Delta g = \frac{1}{M}\sum_{i=1}^{n}\sum_{j=1}^{N_i}\Delta g(T_{ij})K_{T_{ij}} + W_\lambda\Delta g.
\end{aligned}
\end{equation}

\subsection{Functional Bahadur Representation} \label{sec:FBR}
In this section, we present the FBR for $\hat{\mu}_{\lambda}$, the solution to (\ref{eq:mean_optimization}). {For ease of notation, let $h = \lambda^{1/2m}$, where $\lambda$ is the smoothing parameter in \eqref{eq:mean_optimization}.} We define the set  $$\mathcal{G}=\left\{g \in \mathcal{H}:\|g\|_{\text {sup }} \leq 1, J(g, g) \leq C_{K}^{-2} h \lambda^{-1}\right\},$$ where $C_K$ is specified in Lemma \ref{prop:K_and_Wf_bound}. For the formulation of the FBR, we introduce  an empirical process $Z_M(g)$ for any $g \in \mathcal{G}$ and $t \in \T$,
\begin{equation*}
    Z_{M}(g)(t)=\frac{1}{\sqrt{M}} \sum_{i=1}^{n}\sum_{j=1}^{N_{i}}\left[\psi_{n}(Y_{ij}, T_{ij}; g) K_{T_{ij}}(t)-E\left\{\psi_{n}(Y, T; g) K_{T}(t)\right\}\right].
\end{equation*}
Here $\psi_n(T; g)$ is a real-valued function defined on $\T \times \mathcal{G}$, which may depend on $n$, the total number of subjects in the sample. Provided certain smoothness conditions are met for $\psi_n$, we can establish the following concentration inequality.
\begin{lemma}\label{lem:concen_inequa}
Suppose that $\psi_{n}$ satisfies the following Lipschitz continuous condition:
\begin{equation*}\label{assp:Lipschitz}
    \left|\psi_{n}(T ; f)-\psi_{n}(T ; g)\right| \leq C_{K}^{-1} h^{1 / 2}\norm{f-g}_{\text {sup}} 
\end{equation*}
for any $f, g \in \mathcal{G}$, where $C_{K}$ is specified in Lemma \ref{prop:K_and_Wf_bound}. Then we have
\begin{equation*}\label{eq:concen_inequality}
    \lim _{n \rightarrow \infty} \Pr\left(\sup _{g \in \G} \frac{\norm{Z_{M}(g)}}{h^{-(2m-1) /(4m)}\norm{g}_{\sup}^{1-1 /(2 m)}+n^{-1/2}} \leq(5 \log \log M)^{1 / 2}\right)=1.
\end{equation*}
\end{lemma}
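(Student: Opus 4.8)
The plan is to treat $Z_M$ as a Hilbert-space-valued empirical process indexed by $g\in\G$ and to establish the uniform bound through a peeling-and-chaining argument in the style of \cite{zfsannals13}; the unusual exponents in the denominator will emerge directly from a Dudley entropy integral over a Sobolev ball. First I would recast $Z_M(g)$ as a normalized sum of i.i.d.\ subject-level terms. Although the observation-level summands within a subject are dependent (they share the trajectory $X_i$), grouping them as $\xi_i(g)=\sum_{j=1}^{N_i}[\psi_n(Y_{ij},T_{ij};g)K_{T_{ij}}-\E\{\psi_n(Y,T;g)K_T\}]$ produces i.i.d.\ mean-zero $\HS$-valued elements by Assumption~\ref{assp:num_of_obs} (the independence of $N_i$ and Wald's identity give $\E\xi_i(g)=0$), so that $Z_M(g)=M^{-1/2}\sum_{i=1}^n\xi_i(g)$. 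Since $M=\sum_i N_i=n\mu_N(1+o_p(1))$ by the weak law of large numbers, I would replace the random factor $M^{-1/2}$ by $(n\mu_N)^{-1/2}$ up to a negligible multiplier, reducing the claim to a genuine i.i.d.\ empirical-process bound.

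Next I would identify the intrinsic (variance) pseudo-metric that governs the chaining. The Lipschitz hypothesis $|\psi_n(T;f)-\psi_n(T;g)|\le C_K^{-1}h^{1/2}\|f-g\|_{\sup}$ combined with the reproducing-kernel bound $\norm{K_t}^2\le C_K^2 h^{-1}$ of Lemma~\ref{prop:K_and_Wf_bound} yields $\E\norm{\xi_i(f)-\xi_i(g)}^2\lesssim\|f-g\|_{\sup}^2$, the factor $h^{1/2}$ in the Lipschitz constant cancelling the $h^{-1/2}$ coming from $\norm{K_t}$. Hence the natural pseudo-metric on $\G$ is a constant multiple of the sup-norm. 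I would then invoke the classical metric-entropy estimate for an $m$th-order Sobolev ball, $\log N(\eps,\G,\|\cdot\|_{\sup})\lesssim(\rho/\eps)^{1/m}$, where $\rho\asymp h^{1/2-m}$ bounds $J(g,g)^{1/2}$ over $\G$ (from $J(g,g)\le C_K^{-2}h\lambda^{-1}$ and $\lambda=h^{2m}$); the boundedness of the eigenfunctions and $\gamma_v\asymp v^{2m}$ from Assumption~\ref{assp:Fourier_expansion} underpin this entropy bound.

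I would then run the bound on dyadic shells $\G_k=\{g\in\G:2^{-k-1}<\|g\|_{\sup}\le 2^{-k}\}$. On each shell the Dudley integral evaluates to $\int_0^{2^{-k}}(\rho/\eps)^{1/(2m)}\,d\eps\asymp\rho^{1/(2m)}(2^{-k})^{1-1/(2m)}\asymp h^{-(2m-1)/(4m)}\|g\|_{\sup}^{1-1/(2m)}$, where the algebraic identity $(2m-1)/(4m)=\tfrac12\{1-1/(2m)\}$ makes the powers of $h$ and of $\|g\|_{\sup}$ align exactly with the first term of the denominator. A Talagrand-type exponential concentration inequality for $\sup_{g\in\G_k}\norm{Z_M(g)}$, anchored through the chaining increments, then controls each shell; the additive $n^{-1/2}$ absorbs both the contribution of the anchor point and the shells with $\|g\|_{\sup}$ below the resolution $\sim n^{-1/2}$, where the entropy integral no longer dominates. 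Finally, a union bound over the $O(\log M)$ relevant shells inflates the per-shell threshold by a factor $\sqrt{\log\log M}$, and a Borel--Cantelli argument upgrades the resulting tail bounds to the stated in-probability limit, calibrating the explicit constant to $5$.

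The main obstacle, I expect, is twofold. First, obtaining the sharp constant $5$ together with the precise $\log\log M$ rate rules out a crude chaining bound and forces a careful shell-by-shell application of a concentration inequality with explicit constants, with exact bookkeeping of the union over shells so that the tail sum remains summable. Second, the fractional exponents hinge on cleanly decoupling the random normalization $M$ from the within-subject dependence without inflating the variance; keeping the variance estimate tight, rather than settling for the crude $\E\norm{Z_M(g)}^2\lesssim\|g\|_{\sup}^2$, is what ultimately pins down the exponent $1-1/(2m)$ and the accompanying power of $h$.
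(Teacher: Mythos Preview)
Your proposal is correct and follows essentially the same route as the paper, which does not spell out the argument but simply defers to Lemma~3.2 of \cite{zfsannals13}. Your outline---subject-level i.i.d.\ grouping to handle the random $N_i$, the sup-norm pseudo-metric induced by the Lipschitz hypothesis and Lemma~\ref{prop:K_and_Wf_bound}, the Sobolev-ball entropy bound, dyadic peeling with a Dudley integral that produces the exponent $1-1/(2m)$, and an iterated-logarithm calibration via concentration plus Borel--Cantelli---is precisely the structure of that reference's proof adapted to the present sparse setting.
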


Based on Lemma \ref{lem:concen_inequa}, we can proceed to establish the consistency of $\hat{\mu}_\lambda$ as described below. 
\begin{lemma}\label{lem:convergence}
Suppose Assumption \ref{assp:data_gm} - \ref{assp:Fourier_expansion} hold. Further, we assume 
\begin{equation*}
h = o(1) ~\text{and}~    \{nh^{3 - 1/(2m)}\}^{-1/2}(\log\log n)^{1/2} = o(1).
\end{equation*}
Then $\norm{\hat{\mu}_\lambda - \mu} = O_p(d_n)$, where $d_n = h^m + (nh)^{-1/2}$. In particular, $\hat{\mu}_{\lambda}$ achieves the optimal convergence rate if $h \asymp n^{-1/(2m+1)}$.
\end{lemma}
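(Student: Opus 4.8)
The plan is to exploit the fact that the quadratic loss makes the score map $S_{M,\lambda}$ an \emph{affine} function of $f$, so that its linearization around $\mu$ is exact. First I would invoke first-order optimality: since $\hat\mu_\lambda$ minimizes the convex functional $\ell(\cdot,\lambda)$ over $\HS$, the Fr\'echet derivative vanishes, giving $S_{M,\lambda}(\hat\mu_\lambda)=0$. Because $S_{M,\lambda}$ is affine with constant derivative $DS_{M,\lambda}\triangleq DS_{M,\lambda}(\mu)$ (see \eqref{eq-SM}), writing $\hat g=\hat\mu_\lambda-\mu$ yields the exact identity $DS_{M,\lambda}\,\hat g=-S_{M,\lambda}(\mu)$. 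The whole problem then reduces to (a) bounding the score at the truth $\norm{S_{M,\lambda}(\mu)}$, and (b) showing that the random operator $DS_{M,\lambda}$ is invertible with an $O_p(1)$ inverse, so that $\norm{\hat g}\le\norm{DS_{M,\lambda}^{-1}}_{\mathrm{op}}\,\norm{S_{M,\lambda}(\mu)}$.

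For (a), I would split $S_{M,\lambda}(\mu)=W_\lambda\mu-\frac1M\sum_{i,j}\{X_i(T_{ij})-\mu(T_{ij})\}K_{T_{ij}}-\frac1M\sum_{i,j}\eps_{ij}K_{T_{ij}}$. The deterministic bias term obeys $\norm{W_\lambda\mu}^2\le\lambda J(\mu,\mu)=O(h^{2m})$ by Lemma~\ref{prop:K_and_Wf_bound} (recall $\lambda=h^{2m}$ and $\mu\in\HS$), contributing $O(h^m)$. Both stochastic terms are mean zero, so I would bound them in second moment: conditioning on $\{N_i\}$ and using independence of the errors (respectively of the subject-level summands $U_i=\sum_j\{X_i(T_{ij})-\mu(T_{ij})\}K_{T_{ij}}$ across $i$), together with $\norm{K_t}^2\le C_K^2h^{-1}$ and the boundedness of the covariance, gives $\E\norm{\cdot}^2=O((nh)^{-1})$; the within-subject cross terms are of smaller order and are handled via the expansion of $K_t$ in Lemma~\ref{prop:func_expression_by_h}, which yields $\E\{K(T,T')^2\}=\sum_v(1+\lambda\gamma_v)^{-2}=O(h^{-1})$ for independent $T,T'$. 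Hence $\norm{S_{M,\lambda}(\mu)}=O_p(h^m+(nh)^{-1/2})=O_p(d_n)$.

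For (b), the key observation is that the population counterpart of $DS_{M,\lambda}$ is the identity: since $\innerproduct{\E\{g(T)K_T\}}{\tilde g}=V(g,\tilde g)$ and $\innerproduct{W_\lambda g}{\tilde g}=\lambda J(g,\tilde g)$, we get $\E\{g(T)K_T\}+W_\lambda g=g$, so $(DS_{M,\lambda}-\mathrm{Id})g=\frac1M\sum_{i,j}g(T_{ij})K_{T_{ij}}-\E\{g(T)K_T\}$, which equals $\frac{C_K h^{-1/2}}{\sqrt M}Z_M(g)$ for the choice $\psi_n(T;g)=C_K^{-1}h^{1/2}g(T)$; this choice is precisely what makes $\psi_n$ satisfy the Lipschitz hypothesis of Lemma~\ref{lem:concen_inequa}. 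To turn this into an operator-norm bound on the sphere $\{\norm{g}=1\}$, I would rescale: any such $g$ satisfies $\norm{g}_{\sup}\le C_K h^{-1/2}$ and $J(g,g)\le h^{-2m}$, so $g':=C_K^{-1}h^{1/2}g\in\G$; by linearity $\norm{(DS_{M,\lambda}-\mathrm{Id})g}=C_K h^{-1/2}\norm{(DS_{M,\lambda}-\mathrm{Id})g'}$, and applying Lemma~\ref{lem:concen_inequa} uniformly over $\G$ (where $\norm{g'}_{\sup}\le1$) gives $\sup_{\norm{g}=1}\norm{(DS_{M,\lambda}-\mathrm{Id})g}=O_p\big(\{nh^{3-1/(2m)}\}^{-1/2}(\log\log n)^{1/2}\big)=o_p(1)$, the stated rate condition being exactly what forces the right-hand side to vanish. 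Consequently $DS_{M,\lambda}$ is invertible with probability tending to one and $\norm{DS_{M,\lambda}^{-1}}_{\mathrm{op}}=O_p(1)$, so $\norm{\hat g}=O_p(d_n)$. Optimizing $d_n=h^m+(nh)^{-1/2}$ over $h$ balances the two terms at $h\asymp n^{-1/(2m+1)}$, giving the optimal rate.

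I expect step (b) to be the main obstacle: the concentration inequality in Lemma~\ref{lem:concen_inequa} is stated only over the constrained class $\G$ and with a specific Lipschitz normalization, so the delicate part is the rescaling that transfers the bound to the full unit ball of $\HS$ while tracking the exact power of $h$. It is precisely the matching of the resulting exponent $-3/2+1/(4m)$ against the hypothesis $\{nh^{3-1/(2m)}\}^{-1/2}(\log\log n)^{1/2}=o(1)$ that certifies $\norm{DS_{M,\lambda}-\mathrm{Id}}_{\mathrm{op}}=o_p(1)$. By comparison, the second-moment bookkeeping in (a), including the randomness of $M$ (absorbed using $N\ge 1$, hence $M\ge n$) and the within-subject dependence, is routine.
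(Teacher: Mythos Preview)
Your approach is correct and uses the same core ingredients as the paper (Lemma~\ref{prop:K_and_Wf_bound} for the bias, Lemma~\ref{lem:concen_inequa} for the random operator, and $DS_\lambda=\mathrm{id}$), but organizes them more directly. The paper does not solve $DS_{M,\lambda}\hat g=-S_{M,\lambda}(\mu)$ head-on; instead it runs a two-stage contraction mapping argument: first it constructs an intermediate target $f_\lambda$ with $S_\lambda(f_\lambda)=0$ (which, since $DS_\lambda=\mathrm{id}$, is just $f_\lambda=\mu-W_\lambda\mu$, giving $\norm{f_\lambda-\mu}=O(h^m)$), and then it applies the Banach fixed-point theorem to $T_2(\delta)=\delta-S_{M,\lambda}(f_\lambda+\delta)$ on a ball of radius $O((nh)^{-1/2})$, verifying that $T_2$ is a contraction via exactly your operator-norm bound $\norm{DS_{M,\lambda}-\mathrm{id}}_{\mathrm{op}}=o_p(1)$. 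Your route exploits the affine structure of the quadratic loss to bypass the fixed-point machinery entirely; the paper's framework is more template-like and would transfer to non-quadratic losses where $D^2S_{M,\lambda}\ne0$. One small refinement of your sketch: your claim that the within-subject cross terms are of smaller order is correct, and the mechanism is Cauchy--Schwarz, $\abs{\E[c(T,T')K(T,T')]}\le\{\E c(T,T')^2\}^{1/2}\{\E K(T,T')^2\}^{1/2}=O(h^{-1/2})$, versus $O(h^{-1})$ for the diagonal; the paper instead bounds both by $O(h^{-1})$ via $\abs{\langle K_{T_{1k}},K_{T_{1l}}\rangle}\le\sup_t\norm{K_t}^2$, which is cruder but suffices.
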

We are now ready to introduce the crucial technical instrument FBR for $\hat{\mu}_\lambda$.
\begin{thm}[functional Bahadur representation]\label{thm:FBR}
Suppose all conditions in Lemma \ref{lem:convergence} are satisfied. Then we have
\begin{equation*}
    \norm[\big]{\hat{\mu}_\lambda - \mu + S_{M, \lambda}(\mu)} = O_{p}(a_n),
\end{equation*}
where $a_n = n^{-1/2}h^{-(6m-1)/4m}\{h^m + (nh)^{-1/2}\}(\log \log n)^{1/2}$.
\end{thm}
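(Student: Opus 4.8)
The plan is to turn the estimating equation into an exact linear identity, thanks to the quadratic loss, and then to control the resulting remainder with the concentration bound of Lemma \ref{lem:concen_inequa}. Since $\hat\mu_\lambda$ minimizes $\ell(\cdot,\lambda)$, its Fr\'echet derivative vanishes, giving the first-order condition $S_{M,\lambda}(\hat\mu_\lambda)=0$. Because $\ell$ is quadratic in $f$, the map $S_{M,\lambda}$ is affine and $DS_{M,\lambda}$ is independent of the base point, so $S_{M,\lambda}(\hat\mu_\lambda)-S_{M,\lambda}(\mu)=DS_{M,\lambda}(\hat\mu_\lambda-\mu)$ holds exactly, with no Taylor remainder. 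Writing $g:=\hat\mu_\lambda-\mu$ and combining the two displays yields $S_{M,\lambda}(\mu)=-DS_{M,\lambda}(g)$, hence
\[
  \hat\mu_\lambda-\mu+S_{M,\lambda}(\mu)=g-DS_{M,\lambda}(g)=-\bigl(DS_{M,\lambda}(g)-g\bigr),
\]
so it suffices to bound $\norm{DS_{M,\lambda}(g)-g}$.

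The second step is to recognize $DS_{M,\lambda}(g)-g$ as a centered empirical process. The reproducing property and the decomposition $\innerproduct{f}{g}=V(f,g)+\innerproduct{W_\lambda f}{g}$ identify $\E\{g(T)K_T\}$ as the representer of $V(g,\cdot)$, namely $\E\{g(T)K_T\}=(I-W_\lambda)g$; since $DS_{M,\lambda}(g)=\tfrac1M\sum_{i,j}g(T_{ij})K_{T_{ij}}+W_\lambda g$, this shows the population version of $DS_{M,\lambda}$ is the identity operator and gives
\[
  DS_{M,\lambda}(g)-g=\frac1M\sum_{i=1}^{n}\sum_{j=1}^{N_i}\Bigl[g(T_{ij})K_{T_{ij}}-\E\{g(T)K_T\}\Bigr]=C_Kh^{-1/2}M^{-1/2}Z_M(g),
\]
where $Z_M$ is the process of Lemma \ref{lem:concen_inequa} evaluated at the choice $\psi_n(Y,T;g)=C_K^{-1}h^{1/2}g(T)$. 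This $\psi_n$ does not depend on $Y$ and has Lipschitz constant exactly $C_K^{-1}h^{1/2}$, so it meets the hypothesis of that lemma.

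The crux is that the lemma controls $Z_M$ uniformly over $g\in\G$, whereas $g=\hat\mu_\lambda-\mu$ is data-dependent and a priori need not lie in $\G$; I would therefore normalize. From Lemma \ref{lem:convergence}, $\norm{g}=O_p(d_n)$; together with $\norm{K_t}^2\le C_K^2h^{-1}$ (Lemma \ref{prop:K_and_Wf_bound}) this gives $\norm{g}_{\sup}\le C_Kh^{-1/2}\norm{g}=O_p(h^{-1/2}d_n)$, while $\norm{g}^2\ge\lambda J(g,g)$ gives $J(g,g)^{1/2}=O_p(\lambda^{-1/2}d_n)$. Both constraints defining $\G$ are then met after dividing by a factor $c_n=O_p(h^{-1/2}d_n)$, so $g/c_n\in\G$ with probability tending to one. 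Using the linearity of $g\mapsto Z_M(g)$ and applying Lemma \ref{lem:concen_inequa} to $g/c_n$, on an event of probability tending to one
\[
  \norm{Z_M(g)}\le c_n(5\log\log M)^{1/2}\Bigl(h^{-(2m-1)/(4m)}(\norm{g}_{\sup}/c_n)^{1-1/(2m)}+n^{-1/2}\Bigr).
\]

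The last step is bookkeeping. Substituting $c_n=O_p(h^{-1/2}d_n)$ and $\norm{g}_{\sup}=O_p(h^{-1/2}d_n)$, the factors $\norm{g}_{\sup}^{1-1/(2m)}c_n^{1/(2m)}$ collapse to $O_p(h^{-1/2}d_n)$, the first term dominates $c_nn^{-1/2}$ because $h=o(1)$ and $m>1/2$, and multiplying by the prefactor $C_Kh^{-1/2}M^{-1/2}\asymp n^{-1/2}h^{-1/2}$ (using $M=O_p(n)$ and $\log\log M\asymp\log\log n$) reassembles the power of $h$ into $h^{-(6m-1)/(4m)}$, yielding
\[
  \norm{DS_{M,\lambda}(g)-g}=O_p\!\Bigl(n^{-1/2}h^{-(6m-1)/(4m)}\{h^m+(nh)^{-1/2}\}(\log\log n)^{1/2}\Bigr)=O_p(a_n).
\]
The main obstacle is exactly this passage through Lemma \ref{lem:concen_inequa} at the random argument $\hat\mu_\lambda-\mu$: it forces the normalization-into-$\G$ argument and a careful propagation of the consistency rates through the non-homogeneous exponent $1-1/(2m)$, and checking that the $n^{-1/2}$ contribution is genuinely of smaller order is the delicate part of the accounting.
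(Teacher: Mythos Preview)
Your proposal is correct and follows essentially the same route as the paper: both reduce $\hat\mu_\lambda-\mu+S_{M,\lambda}(\mu)$ to the centered empirical process $M^{-1}\sum_{i,j}[g(T_{ij})K_{T_{ij}}-\E\{g(T)K_T\}]$ with $g=\hat\mu_\lambda-\mu$, normalize $g$ into $\mathcal G$ using Lemma~\ref{lem:convergence} and Lemma~\ref{prop:K_and_Wf_bound}, and then invoke Lemma~\ref{lem:concen_inequa} with $\psi_n(T;g)=C_K^{-1}h^{1/2}g(T)$. The only cosmetic difference is that you obtain the key identity directly from the affinity of $S_{M,\lambda}$, whereas the paper writes it via a Taylor expansion with vanishing second derivative; the normalization and rate bookkeeping match.
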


Recall that $S_{M, \lambda}(f)$ is defined in \eqref{eq-SM} for any $f \in \HS$ as 
$$S_{M, \lambda}(f) = -\frac{1}{M}\sum_{i=1}^{n}\sum_{j=1}^{N_i}\{Y_{ij} - f(T_{ij})\}K_{T_{ij}} + W_\lambda f.$$ The first term can be treated as the average of i.i.d. random variables indexed by subjects. The second term is the bias induced by the penalty term. Theorem \ref{thm:FBR} provides a first-order approximation to the difference between the mean estimator and the population mean. 

\subsection{Pointwise Asymptotic Distribution} \label{subsec:pointwise}
In this subsection we develop pointwise asymptotic distributions for $\hat{\mu}_{\lambda}$ based on the FBR established in Theorem \ref{thm:FBR}. Subsequently, the pointwise asymptotic distribution for $\hat{\mu}_{\lambda_1} - \hat{\mu}_{\lambda_2}$ can be readily obtained due to the independence between the two groups. We introduce the following notation for convenience:
\begin{equation}\label{eq:G_functions}
\begin{gathered}
    I_1^{uv} = \iint_{t,s \in \T}c(t,s)h_u(t)h_v(s)\pi(t)\pi(s)dtds,\\
    I_2^{uv} = \int_{t \in \T}c(t,t)h_u(t)h_v(t)\pi(t)dt,\\
    \mathcal{S}_i(t,s) = N_i\sum_{u, v}\frac{h_u(t)h_v(s)}{(1+\lambda \gamma_u)(1 + \lambda \gamma_v)}\brac{(N_i-1)I^{uv}_1 + I^{uv}_2 + \sigma_{\err}^2 \delta_{uv}}, i \leq n, 
\end{gathered}    
\end{equation}
where 
$\{h_u\mid u \in \N\}$ are the basis functions of $\HS$ defined in Assumption \ref{assp:Fourier_expansion}. The following assumption ensures that the tail probability of the random objects in model \eqref{eq:model} decays at an exponential rate. 

\begin{assp}\label{assp:exponential_tails}
There exist some positive constants $C_\err$, $C_X$ and $C_N$ such that
\begin{align*}
   (i)~ \E\brac{\exp(C_X\norm{X})} < \infty, 
   ~  (ii)~ \E\brac{\exp(C_\err \abs{\err})} < \infty, ~(iii)~ \E\brac{\exp(C_N N)} < \infty
\end{align*}
for $X$, $\err$, and $N$ in model \eqref{eq:model}. Furthermore, there exists a constant $C > 0$ such that for any $t \in \T$,
$(iv)~ \E[\brac{X(t)-\mu(t)}^4] \leq C \pbrac{\E\sbrac{\brac{X(t)-\mu(t)}^2}}^2.$
\end{assp}
Proposition 3.2 in \cite{zfsannals15} demonstrates that part (i) of Assumption \ref{assp:exponential_tails} is satisfied if $X$ is a Gaussian process with a square-integrable mean function, for any $C_X \in (0, 1/4)$. Part (ii) is met if $\err$ is normally distributed, which was adopted in \cite{FY2005}. If $N$ follows a Poisson distribution, part (iii) is fullfilled. The fourth moment condition (iv) is commonly seen in literature; see Theorem 8.3.5 in \cite{Hsing2015FDA} for example. In particular, $C$ can be taken as 3 in part (iv) when $X(t)$ is a Gaussian process.
Additionally, $\err$ has a finite fourth moment due to (ii).

Under Assumption \ref{assp:exponential_tails}, we can readily derive the pointwise asymptotic distribution of $\hat{\mu}_\lambda$ using Theorem \ref{thm:FBR}. {We define the biased mean function as $\mu_{b} = (\mathrm{id} - W_\lambda)\mu$, where $\text{id}$ is the identity operator in $\HS$ that maps any function within $\HS$ to itself. The bias term comes from the penalization of the function's smoothness in \eqref{eq:mean_optimization}. Indeed, the mean of the asymptotic distribution in the following theorem is $\mu_b$, and the process of bias removal is elaborated in Remark \ref{remark:unbiased_remark}.}

\begin{thm}\label{thm:ptwise_limiting_dist_biased}
Assume all conditions in Lemma \ref{lem:convergence} and Assumption \ref{assp:exponential_tails} are satisfied. We also assume,  
\begin{equation}\label{eq:ptwise_variance}
    \lim_{n\to \infty}h\sum_{i=1}^{n}\E\brac{\frac{n}{M^2}\Sc_i(t,t)} \triangleq \sigma_t^2, ~\forall t \in \T.
\end{equation}
Additionally, if $a_n = o(n^{-1/2})$, where $a_n$ is specified in Theorem \ref{thm:FBR}, and 
$$\log^2(n)\exp\brac{-\frac{(nh)^{1/2}}{\log(n)}} = o(h^2),$$ then we have
\begin{equation*}
    \frac{(nh)^{1/2}\brac{\hat{\mu}_{\lambda}(t) - \mu_{b}(t)}}{\sigma_t} \overset{d}{\to} N(0, 1).
\end{equation*}
\end{thm}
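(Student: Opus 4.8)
The plan is to combine the functional Bahadur representation of Theorem \ref{thm:FBR} with a Lindeberg--Feller central limit theorem applied to a sum indexed by subjects, handling the random denominator $M$ separately by Slutsky's theorem.

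First I would convert the FBR, which is stated in the RKHS norm $\norm{\cdot}$, into a pointwise statement. Recalling $\mu_b=(\mathrm{id}-W_\lambda)\mu$ and that $S_{M,\lambda}(\mu)=-\frac{1}{M}\sum_{i,j}\{Y_{ij}-\mu(T_{ij})\}K_{T_{ij}}+W_\lambda\mu$, Theorem \ref{thm:FBR} gives
\[
\hat\mu_\lambda-\mu_b=\frac{1}{M}\sum_{i=1}^{n}\sum_{j=1}^{N_i}\{Y_{ij}-\mu(T_{ij})\}K_{T_{ij}}+R,\qquad \norm{R}=O_p(a_n).
\]
Evaluating at $t$ through the reproducing property $g(t)=\innerproduct{K_t}{g}$ and bounding $|R(t)|=|\innerproduct{K_t}{R}|\le\norm{K_t}\,\norm{R}\le C_K h^{-1/2}\norm{R}$ via Lemma \ref{prop:K_and_Wf_bound}, the scaled remainder satisfies $(nh)^{1/2}R(t)=O_p(n^{1/2}a_n)=o_p(1)$ under the hypothesis $a_n=o(n^{-1/2})$. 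This reduces the theorem to a CLT for the linear term $L_n(t):=(nh)^{1/2}M^{-1}\sum_i U_i(t)$, where $U_i(t):=\sum_{j=1}^{N_i}\{Y_{ij}-\mu(T_{ij})\}K_{T_{ij}}(t)$.

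Next I would exploit the i.i.d.\ structure. Because $\{(\mbf{Y}_{i},\mbf{T}_{i},N_{i})\}$ are i.i.d.\ across $i$, the $U_i(t)$ are i.i.d.; conditioning on $T_{ij}$ and using $\E\{X_i(t)\}=\mu(t)$ together with the zero mean and independence of $\err$ shows $\E\{U_i(t)\}=0$, while a direct second-moment computation (splitting the double sum into diagonal $j=j'$ and off-diagonal $j\neq j'$ pairs and expanding $K_s(t)=\sum_v(1+\lambda\gamma_v)^{-1}h_v(t)h_v(s)$ from Lemma \ref{prop:func_expression_by_h}) identifies $\E\{U_i(t)^2\mid N_i\}=\Sc_i(t,t)$ with $\Sc_i$ as in \eqref{eq:G_functions}. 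A law-of-total-variance argument (the conditional mean being zero) then yields $\V\{L_n(t)\}=h\sum_i\E\{nM^{-2}\Sc_i(t,t)\}\to\sigma_t^2$ by \eqref{eq:ptwise_variance}. To remove the random denominator I would write $L_n(t)=(n\mu_N/M)\,\tilde L_n(t)$ with $\tilde L_n(t):=h^{1/2}\mu_N^{-1}n^{-1/2}\sum_i U_i(t)$; since $M/n\to\mu_N$ in probability (weak law, Assumption \ref{assp:num_of_obs}), $n\mu_N/M\overset{d}{\to}1$, so by Slutsky it suffices to prove $\tilde L_n(t)/\sigma_t\overset{d}{\to}N(0,1)$. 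Its summands are genuinely i.i.d.\ across $i$ for each fixed $n$, with $\V\{\tilde L_n(t)\}=h\mu_N^{-2}\E\{\Sc_1(t,t)\}$, which matches the limit in \eqref{eq:ptwise_variance} because $M=n\mu_N\{1+o_p(1)\}$ renders $\E\{nM^{-2}\Sc_i\}\sim(n\mu_N^2)^{-1}\E\{\Sc_i\}$.

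Finally I would verify the Lindeberg condition for the triangular array $\xi_{ni}:=h^{1/2}\mu_N^{-1}n^{-1/2}U_i(t)$, i.e.\ $\frac{h}{\mu_N^2}\E\{U_1(t)^2\,\bbi(|U_1(t)|>\epsilon\mu_N(n/h)^{1/2})\}\to 0$ for every $\epsilon>0$. This is the main obstacle. The difficulty is that the summands depend on $n$ through $\lambda=h^{2m}$ in $K$, the number of terms $N_i$ inside each $U_i(t)$ is random and unbounded, and the kernel evaluations obey only $|K_{T_{ij}}(t)|\le C_K^2 h^{-1}$, which diverges as $h\to0$; hence $|U_1(t)|$ can be as large as a multiple of $h^{-1}N_1\max_j|Y_{1j}-\mu(T_{1j})|$. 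I would bound the truncated second moment by truncating $N_1$ and $|Y_{1j}-\mu(T_{1j})|$ (equivalently $\norm{X}$ and $|\err|$), invoking the exponential-tail bounds of Assumption \ref{assp:exponential_tails} to show that the event $\{|U_1(t)|>\epsilon\mu_N(n/h)^{1/2}\}$ forces $N_1\max_j|Y_{1j}-\mu(T_{1j})|\gtrsim(nh)^{1/2}$, whose probability decays like $\exp\{-c(nh)^{1/2}/\log n\}$ once the random number of summands is accounted for. Balancing this tail against the $h^{-1}$-order moment of $U_1(t)^2$ and the prefactor $h$ is precisely what the rate condition $\log^2(n)\exp\{-(nh)^{1/2}/\log(n)\}=o(h^2)$ secures, so the Lindeberg sum vanishes. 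Lindeberg--Feller then delivers $\tilde L_n(t)/\sigma_t\overset{d}{\to}N(0,1)$, and Slutsky completes the argument; the passage from centering at $\mu_b$ to centering at $\mu$ is addressed separately in Remark \ref{remark:unbiased_remark}.
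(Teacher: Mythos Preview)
Your proposal is correct and follows the same overall architecture as the paper (FBR reduces to a subject-indexed sum, then Lindeberg--Feller), but with one structural difference worth noting. You explicitly factor out the random denominator $M$ via $L_n(t)=(n\mu_N/M)\,\tilde L_n(t)$ and dispose of it by Slutsky, so that the Lindeberg array $\xi_{ni}=h^{1/2}\mu_N^{-1}n^{-1/2}U_i(t)$ is genuinely i.i.d.\ in $i$. The paper instead keeps $M$ inside each summand, defining $\widetilde\Theta_i(t)=(nh)^{1/2}M^{-1}\Theta_i(t)$ and applying Lindeberg directly to $\{\widetilde\Theta_i\}$; since $M=\sum_i N_i$ couples all summands, the paper's claim that ``the $\widetilde\Theta_i(t)$'s are independent'' is loose, and your Slutsky step is a cleaner way to make that passage rigorous. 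The price you pay is that your variance $\V\{\tilde L_n(t)\}=h\mu_N^{-2}\E\{\Sc_1(t,t)\}$ is not literally the quantity assumed to converge in \eqref{eq:ptwise_variance}; you correctly flag that $\E\{nM^{-2}\Sc_i\}\sim(n\mu_N^2)^{-1}\E\{\Sc_i\}$ is needed to reconcile them, and this can be justified by conditioning on $N_i$ and using concentration of $M-N_i$.

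On the Lindeberg step, the paper is more explicit: it bounds $\E[\widetilde\Theta_i^2\bbi_{\{|\widetilde\Theta_i|>e\}}]$ by Cauchy--Schwarz into a fourth-moment factor $(\E\widetilde\Theta_i^4)^{1/2}=O((nh)^{-1})$ and a tail factor $\P(\sup_t|\widetilde\Theta_i|>e)^{1/2}$, the latter controlled by a dedicated lemma that produces exactly the $\log^2(n)\exp\{-c(nh)^{1/2}/\log n\}$ term plus $\P(N_i>c\log n)$. Your sketch (``balancing this tail against the $h^{-1}$-order moment of $U_1(t)^2$'') has the right ingredients but does not name the mechanism; simply multiplying a second moment by a tail probability is not valid without a H\"older/Cauchy--Schwarz split, so when you fill in details you will want the fourth-moment route (or an equivalent truncation argument) to make the rate condition $\log^2(n)\exp\{-(nh)^{1/2}/\log n\}=o(h^2)$ do its work.
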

\begin{remark}
To better appreciate the variance term as outlined in \eqref{eq:ptwise_variance}, we consider a specific case where $N_i = m_0$ almost surely, where $m_0$ is a fixed integer no less than one. A direct computation leads to the following result:
\begin{equation*}
    \sigma_t^2 = \lim_{n\to \infty}h\sum_{u, v} \frac{h_u(t)h_v(t)}{(1 + \lambda \gamma_u)(1 + \lambda \gamma_v)}\left(\frac{m_0-1}{m_0}I^{uv}_1 + \frac{1}{m_0}I^{uv}_2 + \frac{\sigma^2_{\err}}{m_0}\delta_{uv}\right).
\end{equation*}
Notice that, the term $\brac{h\sigma_{\err}^2\sum_u h_u^2(t)/(1+\lambda\gamma_u)^2}/m_0$ reflects the variability resulting from the application of smoothing splines within the RKHS framework, a phenomenon frequently noted in the literature. This can be seen in Theorem 3.5 of \cite{zfsannals13} and Theorem 3.3 of \cite{semiCoxJASA2021}, for instance. The remaining terms account for the variability introduced by the stochastic nature of $X$ and $T$.
\end{remark}

If the bias term, $W_{\lambda}(\mu)$, is negligible in comparison to the variance term, we have the following result for $\hat{\mu}(t)$.
\begin{cor}\label{cor:ptwise_limiting_dist_unbiased}
Suppose the conditions in Theorem \ref{thm:ptwise_limiting_dist_biased} hold and 
\begin{equation*}
    \lim_{n\to \infty}(nh)^{1/2}(W_\lambda\mu)(t) = -b_{t} = 0.
\end{equation*}
 Then we have
\begin{equation*}
    \frac{(nh)^{1/2}\{\hat{\mu}_{\lambda}(t) - \mu(t)\}}{\sigma_{t}} \overset{d}{\to} N(0, 1),
\end{equation*}
where $\sigma_t^2$ is defined in \eqref{eq:ptwise_variance}.
\end{cor}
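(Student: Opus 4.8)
The plan is to reduce this corollary to Theorem~\ref{thm:ptwise_limiting_dist_biased} by an explicit bias decomposition followed by Slutsky's theorem, since the only difference between the two statements is whether the estimator is centered at the biased target $\mu_b$ or at the true mean $\mu$. First I would recall that $\mu_b = (\mathrm{id} - W_\lambda)\mu$, so that $\mu_b(t) - \mu(t) = -(W_\lambda\mu)(t)$, and record the exact algebraic identity
\begin{equation*}
    \hat{\mu}_{\lambda}(t) - \mu(t) = \{\hat{\mu}_{\lambda}(t) - \mu_b(t)\} - (W_\lambda\mu)(t).
\end{equation*}
Multiplying through by $(nh)^{1/2}/\sigma_t$, which is legitimate because $\sigma_t^2$ in \eqref{eq:ptwise_variance} is assumed to be a finite and positive limit, yields
\begin{equation*}
    \frac{(nh)^{1/2}\{\hat{\mu}_{\lambda}(t) - \mu(t)\}}{\sigma_t} = \frac{(nh)^{1/2}\{\hat{\mu}_{\lambda}(t) - \mu_b(t)\}}{\sigma_t} - \frac{(nh)^{1/2}(W_\lambda\mu)(t)}{\sigma_t}.
\end{equation*}

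Next I would treat the two summands on the right-hand side separately. The first term converges in distribution to $N(0,1)$ directly by Theorem~\ref{thm:ptwise_limiting_dist_biased}, whose hypotheses are inherited in the statement of the corollary. The second term is purely deterministic, since $W_\lambda\mu$ is a fixed element of $\HS$ carrying no randomness; by the additional assumption $(nh)^{1/2}(W_\lambda\mu)(t) \to -b_t = 0$, this term satisfies $(nh)^{1/2}(W_\lambda\mu)(t)/\sigma_t \to 0$ as $n \to \infty$. Slutsky's theorem then combines a sequence converging in distribution to $N(0,1)$ with a deterministic sequence tending to $0$, delivering the claimed convergence.

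Because the argument is essentially a one-line decomposition, there is no serious analytic obstacle here; the only points requiring care are confirming that the bias term is genuinely deterministic, so that Slutsky's theorem applies with a constant shift rather than a random one, and that the normalization by $\sigma_t$ is well defined, i.e. $\sigma_t \in (0, \infty)$. All of the substantive work, namely the asymptotic normality of the estimator centered at $\mu_b$ and the variance computation yielding $\sigma_t^2$, is already supplied by Theorem~\ref{thm:ptwise_limiting_dist_biased}, so this corollary merely transports that result from the biased center $\mu_b$ to the true mean $\mu$ under the stated negligibility of the penalty-induced bias $W_\lambda\mu$.
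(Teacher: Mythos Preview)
Your proposal is correct and matches the paper's approach: the paper does not give an explicit proof of this corollary, treating it as an immediate consequence of Theorem~\ref{thm:ptwise_limiting_dist_biased} once the bias $(nh)^{1/2}(W_\lambda\mu)(t)$ is assumed to vanish, and your decomposition-plus-Slutsky argument is precisely the standard way to make that implication rigorous. The algebraic identity $\hat{\mu}_\lambda(t)-\mu(t)=\{\hat{\mu}_\lambda(t)-\mu_b(t)\}-(W_\lambda\mu)(t)$ is exactly the intended bridge, and your observations that the second term is deterministic and that $\sigma_t\in(0,\infty)$ are the only details needed.
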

\begin{remark}\label{remark:unbiased_remark}
   It should be noted that even when $\lim_{n\to \infty}(nh)^{1/2}(W_\lambda\mu)(t)$ exists, the bias term $-b_t$ may not necessarily be zero. 
    In the context of nonparametric regression using smoothing splines, \cite{zfsannals13} investigated  appropriate conditions under which the bias term becomes negligible. The key idea is to ensure that $b_t/\sigma_t = o(1)$, i.e., the squared bias is dominated by the variance term. A sufficient condition to achieve this is
   \begin{equation*}
       \sum_{v} V^2(\mu, h_v) \gamma_v^2 < \infty \text{ and } nh^{4m} = o(1).
   \end{equation*}
Given this condition, it can be shown that:
\begin{align*}
    \abs{W_\lambda \mu(t)} & = \abs[\Big]{W_\lambda \sum_v V(\mu, h_v) h_v(t)} \\
    &\leq \sum_v \abs[\Big]{V(\mu, h_v)} \abs[\bigg]{\frac{\lambda \gamma_v}{1 + \lambda \gamma_v}h_v(t)} \\
    &\leq \lambda\sup_{v \in \N}\norm{h_v}_{\sup} \left\{\sum_vV^2(\mu, h_v) \gamma_v^2\right\}^{1/2}\left\{\sum_v \frac{1}{(1 + \lambda \gamma_v)^2}\right\}^{1/2}\\
    &\leq O(\kappa_n), \textrm{~where~}\kappa_n = h^{2m-1/2} \brac{{\sum_vV^2(\mu, h_v) \gamma_v^2}}^{1/2}.
\end{align*}
Denote $N_0$ as the smallest integer satisfying the condition  $$\frac{3\sigma_t^2}{4} \geq h\sum_{i=1}^{n}\E\brac{\frac{n}{M^2}\Sc_i(t,t)}. $$ Then for $n \geq N_0$, it follows that:
\begin{align*}
    \frac{b_t^2}{\sigma_t^2} &\leq \frac{3nhW_\lambda^2 \mu(t)}{4h\sum_{i=1}^{n}\E\brac{n\Sc_i(t,t)/M^2}} \\
    & \leq C_0nh^{4m}\sum_vV^2(\mu, h_v) \gamma_v^2 = o(1).
\end{align*}
\end{remark}

We now establish the asymptotic distribution for the estimator of the mean difference, under the assumption that the two groups are independent.  The sample sizes (and corresponding penalty parameters) for the two groups are denoted as $n_1~(\lambda_1)$ and $n_2~(\lambda_2)$, respectively. Let $h_g = \lambda_g^{1/(2m)}$ for $g = 1, 2$. The following theorem is a direct result of applying Theorem \ref{thm:ptwise_limiting_dist_biased} and Corollary \ref{cor:ptwise_limiting_dist_unbiased}.
\begin{thm}\label{thm:two_group_ptwise_dist}
Suppose the conditions in Corollary \ref{cor:ptwise_limiting_dist_unbiased} are satisfied. Additionally, we assume
\begin{equation*}
    \lim_{n_g\to \infty}(n_gh_g)^{1/2}(W_{\lambda_g}\mu_g)(t) = 0,
\end{equation*}
and
\begin{equation*}
    \lim_{n_g\to \infty}h_g\sum_{i=1}^{n}\E\brac{\frac{n_g}{M_g^2}\Sc_{g,i}(t, t)} = \sigma_{g,t}^2, \forall t \in \T, ~g=1, 2. 
\end{equation*}
In this context, $\Sc_{g,i}(t,s)$ is the group-specific variant of $\Sc_i(t,s)$ as outlined in \eqref{eq:G_functions}. Here, the general terms $c(t,s), \pi(t)$, and $N_i$ are replaced by their respective group-specific equivalents $c_g(t,s), \pi_g(t)$ and $N_{gi}$ for $g=1, 2$. Furthermore, if
\begin{align*}
a_{g,n} = o(n_g^{-1/2}) \text{ and } \log^2(n_g)\exp\brac{-(n_gh_g)^{1/2}/\log(n_g)} = o(h_g^2),
\end{align*}
where $a_{g,n}$ is the group-specific counterpart of $a_n$ as defined in Theorem \ref{thm:FBR}, we have for any $t\in \T$, 
\begin{equation} \label{eq-ptCLT}
    \left(\frac{\sigma_{1, t}^2}{n_1h_1} + \frac{\sigma_{2, t}^2}{n_2h_2}\right)^{-1/2}\left[\{\hat{\mu}_{\lambda_1}(t) - \hat{\mu}_{\lambda_2}(t)\} - \{\mu_1(t) - \mu_2(t)\}\right] \overset{d}{\to} N(0, 1),
\end{equation}
where $\sigma_{g,t}^2$ denotes the pointwise variance for group $g$ as outlined previously.
\end{thm}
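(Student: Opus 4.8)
The plan is to reduce the two-sample statement to the one-sample result already in hand, invoking Corollary~\ref{cor:ptwise_limiting_dist_unbiased} group-wise and then gluing the two marginal limits together through the independence of the two samples. First I would verify that the stated hypotheses are precisely the group-specific instances of the conditions in Corollary~\ref{cor:ptwise_limiting_dist_unbiased}: the assumption $\lim_{n_g\to\infty}(n_g h_g)^{1/2}(W_{\lambda_g}\mu_g)(t)=0$ realizes the vanishing-bias requirement $b_t=0$; the variance assumption supplies $\sigma_{g,t}^2$; and the pair $a_{g,n}=o(n_g^{-1/2})$ together with the exponential-decay condition are exactly the group-$g$ analogues of the hypotheses feeding Theorem~\ref{thm:ptwise_limiting_dist_biased}. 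Applying the corollary separately for $g=1$ and $g=2$ then yields
\[
U_{g,n} := \frac{(n_g h_g)^{1/2}\{\hat{\mu}_{\lambda_g}(t) - \mu_g(t)\}}{\sigma_{g,t}} \overset{d}{\to} N(0,1), \qquad g = 1, 2.
\]

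Next I would assemble the difference. Writing $s_n^2 = \sigma_{1,t}^2/(n_1 h_1) + \sigma_{2,t}^2/(n_2 h_2)$ and introducing the weights $w_{g,n} = \sigma_{g,t}/\{s_n (n_g h_g)^{1/2}\}$, the identity $\hat{\mu}_{\lambda_g}(t)-\mu_g(t) = \sigma_{g,t}(n_g h_g)^{-1/2} U_{g,n}$ shows that the standardized statistic in \eqref{eq-ptCLT} is exactly $w_{1,n}U_{1,n} - w_{2,n}U_{2,n}$, with $w_{1,n}^2 + w_{2,n}^2 = 1$ by construction. Because the two samples are independent, $U_{1,n}$ and $U_{2,n}$ are independent for every $n$, and each converges marginally to $N(0,1)$; hence $(U_{1,n},U_{2,n}) \overset{d}{\to} (Z_1,Z_2)$ with $Z_1,Z_2$ independent standard normals.

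The only point demanding care — and hence the \emph{main obstacle} — is that the theorem imposes no joint rate condition relating $n_1 h_1$ and $n_2 h_2$, so the weights $w_{g,n}$ need not converge. I would resolve this by a subsequence argument: the weights lie in the compact arc $\{(w_1,w_2):w_1^2+w_2^2=1,\ w_g\ge 0\}$, so along any subsequence one extracts a further subsequence on which $w_{g,n}\to w_g$; along it, independence and Slutsky's lemma give $w_{1,n}U_{1,n}-w_{2,n}U_{2,n} \overset{d}{\to} w_1 Z_1 - w_2 Z_2 \sim N(0,\,w_1^2+w_2^2)=N(0,1)$, the limiting variance being $1$ irrespective of the limiting weights. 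Since every subsequence admits a further subsequence with the same $N(0,1)$ limit, the whole sequence converges to $N(0,1)$. Equivalently, and avoiding subsequences altogether, one can argue through characteristic functions: by independence the characteristic function of the standardized statistic factorizes as $\E[e^{i\tau w_{1,n} U_{1,n}}]\,\E[e^{-i\tau w_{2,n}U_{2,n}}]$, and the uniform-on-compacts convergence of the characteristic functions of $U_{g,n}$ to $e^{-\tau^2/2}$ (a consequence of the marginal weak convergence) drives each factor to $e^{-\tau^2 w_{g,n}^2/2}$, whose product is the constant $e^{-\tau^2/2}$; Lévy's continuity theorem then delivers the claimed $N(0,1)$ limit. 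Either route completes the argument, and the essential content is just the two one-sample central limit theorems plus cross-group independence.
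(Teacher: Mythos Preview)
Your proposal is correct and follows the same approach as the paper, which does not supply a separate proof but simply states that Theorem~\ref{thm:two_group_ptwise_dist} ``is a direct result of applying Theorem~\ref{thm:ptwise_limiting_dist_biased} and Corollary~\ref{cor:ptwise_limiting_dist_unbiased}'' together with the independence of the two groups. Your treatment is in fact more careful than the paper's: you explicitly address the absence of any assumed ratio condition on $n_1h_1/(n_2h_2)$ (in contrast to Theorem~\ref{thm:weak_conv_two_group}, which does impose one) and show via a subsequence/compactness or characteristic-function argument that the limiting variance is $1$ regardless of how the weights behave---a point the paper leaves implicit.
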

\begin{remark}
    Suppose all notation is the same as in Theorem \ref{thm:two_group_ptwise_dist}, and all associated conditions are satisfied. The pointwise hypothesis testing outlined in \eqref{eq:pt-hypo} is conducted as follows. Let
    \begin{equation*}
        \widehat{\mathfrak{T}}_{pt}(t) = \left(\frac{\sigma_{1, t}^2}{n_1h_1} + \frac{\sigma_{2, t}^2}{n_2h_2}\right)^{-1/2}\{\hat{\mu}_{\lambda_1}(t) - \hat{\mu}_{\lambda_2}(t)\}
    \end{equation*}
For every fixed time point $t^{*} \in \T$, under $H_0$, we reject the null hypothesis at level $\alpha$ if $$\Pr\brac{\abs{\widehat{\mathfrak{T}}_{pt}(t^{*})} > q_{\alpha/2}} < \alpha,$$
where $q_{1-\alpha/2}$ denotes the $(1-\alpha/2)\times100\%$ quantile of a standard normal distribution. Under $H_{\alpha}$, 
\begin{align*}
    \widehat{\mathfrak{T}}_{pt}(t) &= \left(\frac{\sigma_{1, t}^2}{n_1h_1} + \frac{\sigma_{2, t}^2}{n_2h_2}\right)^{-1/2}\sbrac{\hat{\mu}_1(t) - \hat{\mu}_2(t) - \brac{\mu_1(t^{*}) - \mu_2(t^{*})}} \\
    &+ \left(\frac{\sigma_{1, t}^2}{n_1h_1} + \frac{\sigma_{2, t}^2}{n_2h_2}\right)^{-1/2}\brac{\mu_1(t^{*}) - \mu_2(t^{*})}.
\end{align*}
Note that, $n_gh_g \to \infty$ as $n \to \infty$ by the condition $\{nh^{3 - 1/(2m)}\}^{-1}(\log\log n) = o(1)$. Therefore, $\widehat{\mathfrak{T}}_{pt}(t) \to \infty$ if $n \to \infty$ under $H_\alpha$.
\end{remark}

\subsection{Weak Convergence} \label{subsec:global}
This subsection is dedicated to establishing weak convergence of $\hat{\mu}_\lambda - \mu$ as a random function in $\HS$, extending it to two groups for a global test for the mean difference. Without loss of generality, we assume that $T_{ij}$ follows a uniform distribution on $[0, 1]$.

We define a covariance operator $\C: \HS \to \HS$ as follows:
\begin{equation*}
    \mathcal{C}(f) = \E\brac{\innerproduct{X-\mu}{f}_{2}(X-\mu)},
\end{equation*}
where $\innerproduct{f}{g}_{2} = \int_{\T} f(t)g(t) dt$ for any $f, g \in \Lp_2(\T)$.
Clearly, we have
\begin{equation}\label{eq:I_{1, ij}}
    \innerproduct{\C(h_u)}{h_v}_{2} = \iint_{t,s \in \T}c(t,s)h_u(t)h_v(s)dtds = I_{1}^{uv},
\end{equation}
where $I_{1}^{uv}$ is defined in \eqref{eq:G_functions} as $T$ is uniformly distributed on [0, 1]. The proof of weak convergence relies on the decay rate of the integral \eqref{eq:I_{1, ij}} as the indices $u, v \to \infty$. 

In the FDA literature, it is common to assume that the covariance operator is a trace-class operator; see \cite{Hall2007Annals} and \cite{zhou_functional_2022} for examples.
If $h_v(t)$ coincides with eigenfunctions of the covariance operator $\C$ for $v \geq 1$, the associated eigenvalue, which is defined as
\begin{align*}
    \theta_v = \innerproduct{\C(h_v)}{h_v}_{2} &= \iint_{t,s \in \T}c(t,s)h_v(t)h_v(s)dtds,
\end{align*}
is usually assumed to have a polynomial decay rate, i.e., 
$\theta_v \leq c v^{-\tau}$
with some universal constant $c$ and $\tau > 1$. This decay rate of $\theta_v$ ensures that $\C$ is a trace-class operator. Therefore, under this assumption, $\C$ is also guaranteed to be a Hilbert-Schmidt operator. Moreover, the Hilbert-Schmidt norm of $\C$ does not depend on the choice of orthonormal basis functions. Besides that, Assumption \ref{assp:Fourier_expansion} implies $\{h_v\}_{v=1}^{\infty}$ forms a basis in $\Lp_2(\T)$ when $T$ follows a uniform distribution on $[0, 1]$. Given these considerations, we introduce the following assumption.
\begin{assp}\label{assp:HS_norm_of_C}
The Hilbert-Schmidt norm of the covariance operator is finite, i.e., $\norm{\C}_{\mathrm{HS}}^2 = \sum_{v}\norm{\C(h_v)}_{2}^{2}< \infty$. Further, 
\begin{gather*}
    \iint_{t,s \in \T}c(t,s)h_u(t)h_v(s)dtds \leq u^{-\tau/2}v^{-\tau/2}\text{~and}\\
   \int_{t \in \T}c(t,t)h_u(t)h_v(t)dt \leq u^{-\tau/2}v^{-\tau/2}\text{~for some~}\tau > 1.
\end{gather*}
\end{assp}

{To establish the weak convergence for $\hat{\mu}_\lambda - \mu$, we denote $C(\T)$ as the collection of all continuous functions on $\T$, which is equipped with the distance metric $$\rho(f,g) = \sup_{t \in \T}\abs{f(t) - g(t)}$$
for any $f, g \in C(\T)$.}

\begin{thm}\label{thm:weak_conv}
Under Assumption \ref{assp:HS_norm_of_C}, and considering all the conditions outlined in  Theorem \ref{thm:ptwise_limiting_dist_biased}, along with the sufficient condition in Remark \ref{remark:unbiased_remark}, and also
\begin{equation}\label{eq:Gaussian_process_kernel}
    \lim_{n\to \infty}h\sum_{i=1}^{n}\E\brac{\frac{n}{M^2}\Sc_i(t,s)} = C_Z(t,s),
\end{equation}
we have
\begin{equation*}
    \Sb_n(t) = (nh)^{1/2}\{\hat{\mu}_{\lambda}(t) - \mu(t)\}_{t\in \T} \leadsto  \{Z(t)\}_{t\in \T}
\end{equation*}
in $C(\T)$, where $Z$ denotes a Gaussian process with zero mean and covariance function $C_Z$ as described in \eqref{eq:Gaussian_process_kernel}.
\end{thm}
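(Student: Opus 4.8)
The plan is to prove weak convergence in $C(\T)$ via the classical two-part criterion: convergence of the finite-dimensional distributions together with asymptotic tightness. The starting point is the functional Bahadur representation of Theorem \ref{thm:FBR}, which gives $\hat{\mu}_\lambda - \mu = -S_{M,\lambda}(\mu) + r_n$ with $\norm{r_n} = O_p(a_n)$ and $a_n = o(n^{-1/2})$ under the hypotheses inherited from Theorem \ref{thm:ptwise_limiting_dist_biased}. Writing $S_{M,\lambda}(\mu) = S_M(\mu) + W_\lambda\mu$ and invoking the bias bound $\abs{W_\lambda\mu(t)} = O(\kappa_n)$ from Remark \ref{remark:unbiased_remark}, the condition $nh^{4m}=o(1)$ yields $(nh)^{1/2}\sup_{t\in\T}\abs{W_\lambda\mu(t)} \to 0$. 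The reproducing inequality $\abs{g(t)} \le \norm{K_t}\,\norm{g} \le C_K h^{-1/2}\norm{g}$ (Lemma \ref{prop:K_and_Wf_bound}) converts the $\HS$-norm remainder into a uniform one of order $(nh)^{1/2}h^{-1/2}a_n = O_p(n^{1/2}a_n) = o_p(1)$. Hence it suffices to establish weak convergence of the leading process $(nh)^{1/2}\{-S_M(\mu)(t)\}_{t\in\T}$, where $-S_M(\mu)(t) = \frac{1}{M}\sum_{i=1}^n\sum_{j=1}^{N_i}\{Y_{ij}-\mu(T_{ij})\}K_{T_{ij}}(t)$, to $Z$ in $C(\T)$, and then to transfer this to $\Sb_n$ by Slutsky's theorem in the sup-metric.

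For the finite-dimensional distributions I would fix $t_1,\dots,t_d \in \T$ and apply the Cram\'er--Wold device. Since $-S_M(\mu)(t) = \frac{1}{M}\sum_{i=1}^n \zeta_i(t)$ with $\zeta_i(t) = \sum_{j=1}^{N_i}\{Y_{ij}-\mu(T_{ij})\}K_{T_{ij}}(t)$ i.i.d.\ across subjects, and $M/n \to \mu_N$ in probability, any linear combination $\sum_k a_k\Sb_n(t_k)$ is, up to $o_p(1)$, a normalized i.i.d.\ sum to which the Lindeberg--Feller CLT applies; the exponential-tail Assumption \ref{assp:exponential_tails} supplies the Lindeberg bound exactly as in the pointwise proof of Theorem \ref{thm:ptwise_limiting_dist_biased}. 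The limiting covariance of $(\Sb_n(t_1),\dots,\Sb_n(t_d))$ is $(C_Z(t_k,t_l))_{k,l}$ by \eqref{eq:Gaussian_process_kernel}, and Assumption \ref{assp:HS_norm_of_C}, through the series $K_t(\cdot)=\sum_v h_v(t)h_v(\cdot)/(1+\lambda\gamma_v)$ of Lemma \ref{prop:func_expression_by_h} and the decays $I_1^{uv}, I_2^{uv} \le u^{-\tau/2}v^{-\tau/2}$, guarantees that the covariance series converges and that $C_Z$ is a genuine continuous kernel, so $Z$ is a well-defined Gaussian process.

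The crux, and the step I expect to be the main obstacle, is tightness. Following Billingsley's modulus-of-continuity criterion, it is enough to produce a uniform-in-$n$ moment bound $\E\abs{\Sb_n(t)-\Sb_n(s)}^{2} \le L\,\abs{t-s}^{1+\beta}$ for some $\beta>0$, together with tightness at one point (already implied by the fidis). The increment is driven by $K_T(t)-K_T(s) = \sum_v \{h_v(t)-h_v(s)\}h_v(T)/(1+\lambda\gamma_v)$, so after taking expectations its contribution reduces to a double series weighted by $(1+\lambda\gamma_u)^{-1}(1+\lambda\gamma_v)^{-1}$, the increments $\{h_u(t)-h_u(s)\}\{h_v(t)-h_v(s)\}$, and the covariance integrals $I_1^{uv}, I_2^{uv}$. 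Using $\sup_v\norm{h_v}_{\sup}<\infty$ from Assumption \ref{assp:Fourier_expansion}, Lipschitz control of the eigenfunctions (so $\abs{h_v(t)-h_v(s)} \lesssim \min(1, v\abs{t-s})$ via the ODE of Lemma \ref{lem:exist_of_eigensys}), the growth $\gamma_v \asymp v^{2m}$, and the decay $\tau>1$, I would bound this series by a positive power of $\abs{t-s}$ after balancing the number of low-frequency modes against the localization scale $h=\lambda^{1/(2m)}$; the polynomial decay $\tau>1$ is precisely what renders the off-diagonal ($I_1^{uv}$) part summable. The delicate point is that $\norm{K_t}$ blows up like $h^{-1/2}$, so the $(nh)^{1/2}$ normalization must be shown to offset this concentration uniformly in $n$; controlling the trade-off between $h$ and $\abs{t-s}$ is the technically demanding part, and is where the moment assumptions and the decay rate are indispensable. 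An equivalent route is a chaining/maximal-inequality argument in the spirit of Lemma \ref{lem:concen_inequa}, replacing the class $\G$ by the increments $K_t - K_s$.

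Combining fidi convergence with tightness gives $(nh)^{1/2}\{-S_M(\mu)\} \leadsto Z$ in $C(\T)$, and the uniform negligibility of both the bias and the Bahadur remainder from the first step transfers this to $\Sb_n \leadsto Z$. Finally, the global two-group statement follows immediately: since the two samples are independent, the pair of scaled estimators converges jointly to independent Gaussian processes $(Z_1, Z_2)$, and the continuous mapping theorem yields weak convergence of the mean-difference process to $Z_1 - Z_2$, which underpins the global test for \eqref{eq:hypo}.
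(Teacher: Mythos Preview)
Your overall architecture---decompose $\Sb_n$ via the Bahadur representation into remainder, bias, and leading term $(nh)^{1/2}\{-S_M(\mu)\}$, dispose of the first two uniformly, then prove fidi convergence by Cram\'er--Wold plus Lindeberg and tightness separately---is exactly the paper's plan. The fidi argument you sketch is essentially identical to the paper's Part~3.

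The gap is in your primary tightness strategy. A direct second-moment Kolmogorov--Chentsov bound $\E\abs{\Sb_n(t)-\Sb_n(s)}^{2}\le L\abs{t-s}^{1+\beta}$ uniformly in $n$ is \emph{not} available here. Carrying out the increment computation with the eigen-expansion of $K_t$ (this is Lemma~\ref{lem:G_functions} in the appendix) yields
\[
\E\abs{I_{3,n}(t)-I_{3,n}(s)}^{2}\;\le\; c_1\,h^{\tau}\abs{t-s}\;+\;c_2\,\abs{t-s}^{2},
\]
where the first term comes from the covariance pieces $\G_1,\G_2$ (controlled via Assumption~\ref{assp:HS_norm_of_C}) and the second from the noise piece $\G_3$. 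The $c_2\abs{t-s}^{2}$ part is fine, but the $c_1 h^{\tau}\abs{t-s}$ part is only \emph{linear} in $\abs{t-s}$; when $\abs{t-s}\ll h^{\tau}$ it dominates, so no uniform exponent $1+\beta$ with $\beta>0$ can be extracted from the second moment alone. Your ``balancing low-frequency modes against $h$'' heuristic does not repair this: the obstruction is that the kernel $K_t$ concentrates at scale $h$, and below that scale the increment moment degrades.

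The paper's resolution is a genuine two-scale argument rather than a single moment inequality. It invokes Lemma~A.1 of \cite{Kley2016Bernoulli}: on scales $d(s,t)\ge \bar\eta=h/2$ one has $c_1 h^{\tau}\abs{t-s}+c_2\abs{t-s}^{2}\le (2c_1+c_2)\abs{t-s}^{2}$, so the Orlicz-norm Lipschitz bound $\norm{I_{3,n}(t)-I_{3,n}(s)}_{\Psi}\le C_\Psi\,d(s,t)$ holds there and feeds the chaining integral. On the fine scale $d(s,t)\le \bar\eta$ one abandons moments and uses Bernstein's inequality directly, with the almost-sure envelope $\sup_{t}\abs{\widetilde\Theta_i(t)-\widetilde\Theta_i(s)}\le 2C_{\widetilde\Theta}\log^{2}(n)/(n^{1/2}h)$ supplied by Lemma~\ref{lem:tilde_Theta_upper_bound}; the resulting tail bound, multiplied by the packing number $\D(\bar\eta,d)\asymp h^{-1}$, still vanishes. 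Your fallback suggestion of ``chaining in the spirit of Lemma~\ref{lem:concen_inequa}'' is pointing toward this, but the missing ingredient is precisely the Bernstein step at sub-$h$ scales using the exponential-tail Assumption~\ref{assp:exponential_tails}; without it the chaining entropy integral alone does not close.
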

The subsequent theorem is an extension of Theorem \ref{thm:weak_conv} and employs the same notation as those used in Theorem \ref{thm:two_group_ptwise_dist}.
\begin{thm}\label{thm:weak_conv_two_group}
Suppose all conditions outlined in Theorem \ref{thm:weak_conv} are satisfied and 
\begin{equation*}
    \lim_{n_1, n_2 \to \infty}\frac{n_1h_1}{n_2h_2} = r
\end{equation*}
for some positive real number $r$.
Additionally, we assume
\begin{equation}\label{eq:Gaussian_process_kernel_group_wise}
    \lim_{n_g\to \infty}h_g\sum_{i=1}^{n_g}\E\brac{\frac{n_g}{M_g^2}\Sc_{g,i}(t,s)} = C_{Z_g}(t,s), ~~ g=1, 2.
\end{equation}
Then,
\begin{equation*}
    \Sb_{d}(t) = \brac{\frac{(n_1h_1)(n_2h_2)}{n_1h_1 + n_2h_2}}^{1/2}\{(\hat{\mu}_{\lambda_1} - \hat{\mu}_{\lambda_2})(t) - (\mu_1 - \mu_2)(t)\}_{t\in \T} \leadsto  \{Z_d(t)\}_{t\in \T}
\end{equation*}
in $C(\T)$, where $Z_d$ is a Gaussian process with zero mean and covariance function $$C_{d} = \frac{1}{r+1}C_{Z_1} + \frac{r}{r+1}C_{Z_2},$$ 
where $C_{Z_1}$ and $C_{Z_2}$ are defined in \eqref{eq:Gaussian_process_kernel_group_wise}.
\end{thm}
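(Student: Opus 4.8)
\textbf{Proof proposal for Theorem \ref{thm:weak_conv_two_group}.}

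The plan is to write $\Sb_d$ as a deterministic linear combination of the two single-group processes to which Theorem \ref{thm:weak_conv} applies, then to upgrade the two marginal weak convergences to a joint one using the independence of the groups, and finally to pass the limit through a continuous linear map. For $g = 1, 2$ set $\Sb_n^{(g)}(t) = (n_g h_g)^{1/2}\{\hat{\mu}_{\lambda_g}(t) - \mu_g(t)\}$. Applying Theorem \ref{thm:weak_conv} within each group (its hypotheses being inherited from the assumptions of the present theorem and from \eqref{eq:Gaussian_process_kernel_group_wise}), we obtain $\Sb_n^{(g)} \leadsto Z_g$ in $C(\T)$, where $Z_g$ is the mean-zero Gaussian process with covariance $C_{Z_g}$.

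The algebraic step is immediate: since $\hat{\mu}_{\lambda_g}(t) - \mu_g(t) = (n_g h_g)^{-1/2}\Sb_n^{(g)}(t)$, a direct computation of the scaling factor gives
\begin{equation*}
    \Sb_d(t) = \alpha_n \Sb_n^{(1)}(t) - \beta_n \Sb_n^{(2)}(t),
\end{equation*}
where $\alpha_n = \{n_2 h_2/(n_1 h_1 + n_2 h_2)\}^{1/2}$ and $\beta_n = \{n_1 h_1/(n_1 h_1 + n_2 h_2)\}^{1/2}$. The ratio condition $n_1 h_1/(n_2 h_2) \to r$ then forces $\alpha_n \to \alpha := (r+1)^{-1/2}$ and $\beta_n \to \beta := \{r/(r+1)\}^{1/2}$, so that $\Sb_d$ is asymptotically the fixed combination $\alpha \Sb_n^{(1)} - \beta \Sb_n^{(2)}$.

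Next I would establish the joint convergence $(\Sb_n^{(1)}, \Sb_n^{(2)}) \leadsto (Z_1, Z_2)$ in $C(\T) \times C(\T)$ with $Z_1$ and $Z_2$ independent. This follows from the marginal convergences together with the independence of the two samples: for each $n$ the law of the pair factorizes as a product measure, whose convergence to the product of the marginal limits is obtained from convergence of finite-dimensional distributions and joint tightness (the latter being automatic, since a finite product of tight families is tight and each $\Sb_n^{(g)}$ is tight by its weak convergence in $C(\T)$). The continuous linear map $(f, g) \mapsto \alpha f - \beta g$ from $C(\T) \times C(\T)$ to $C(\T)$ and the continuous mapping theorem then yield $\alpha \Sb_n^{(1)} - \beta \Sb_n^{(2)} \leadsto \alpha Z_1 - \beta Z_2 =: Z_d$. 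Since $Z_1 \perp Z_2$, the limit $Z_d$ is a mean-zero Gaussian process with covariance $\alpha^2 C_{Z_1} + \beta^2 C_{Z_2} = \frac{1}{r+1}C_{Z_1} + \frac{r}{r+1}C_{Z_2} = C_d$, exactly as claimed.

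It remains to absorb the discrepancy between the true coefficients $\alpha_n, \beta_n$ and their limits. Writing $\Sb_d = (\alpha \Sb_n^{(1)} - \beta \Sb_n^{(2)}) + (\alpha_n - \alpha)\Sb_n^{(1)} - (\beta_n - \beta)\Sb_n^{(2)}$, the two remainder terms vanish in $C(\T)$: weak convergence of each $\Sb_n^{(g)}$ gives $\norm{\Sb_n^{(g)}}_{\sup} = O_p(1)$, and multiplying by the deterministic null sequences $\alpha_n - \alpha$ and $\beta_n - \beta$ renders these terms $o_p(1)$ in sup-norm. A Slutsky argument in the metric space $C(\T)$ then transfers the limit of the main term to $\Sb_d$, completing the proof. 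I expect the main obstacle to be the passage from marginal to joint weak convergence: although the independence of the groups makes it conceptually routine, it must be argued carefully in the function-space setting rather than inferred from one-dimensional marginals, and it should be paired with the metric-space version of Slutsky's lemma used to handle the converging coefficients.
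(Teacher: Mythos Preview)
Your proposal is correct and is precisely the argument the paper has in mind: the paper does not give a separate proof of Theorem \ref{thm:weak_conv_two_group}, presenting it simply as an extension of Theorem \ref{thm:weak_conv} under independence of the two groups, and your decomposition $\Sb_d = \alpha_n \Sb_n^{(1)} - \beta_n \Sb_n^{(2)}$ together with joint tightness, independence, the continuous mapping theorem, and a Slutsky argument in $C(\T)$ is exactly how that extension is carried out.
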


\begin{remark}\label{remark:global_test_construction}
For global testing of $\mu_1(t) - \mu_2(t)$ as outlined in \eqref{eq:hypo}, we employ the Cram\'{e}r-von Mises criterion \citep{AndersonCMTest1962}. Indeed, Theorem \ref{thm:weak_conv_two_group} and the continuous mapping theorem allow us to establish the following:
\begin{align*}
    & ~~~~\frac{(n_1h_1)(n_2h_2)}{n_1h_1 + n_2h_2}\int_{t \in \T}\brac{(\hat{\mu}_{\lambda_1} - \hat{\mu}_{\lambda_2})(t) - (\mu_1 - \mu_2)(t)}^2 dt \\
    & \overset{d} \rightarrow \int_{t\in \T}\brac{Z_d(t)}^2dt \\
    & = \int_{t\in\T}\brac{\sum_{\nu}\eta_\nu^{1/2}\zeta_{\nu} \varphi_{\nu}(t)}^2dt\\
    & = \sum_{\nu}\eta_\nu\chi^2_{\nu}(1) \triangleq \chi^2_{w},
\end{align*}
where $(\eta_\nu, \varphi_\nu)$'s stand for eigenvalue-eigenfunction pairs of the covariance function of $Z_d$. Additionally, $\chi^2_{\nu}(1)$'s are i.i.d. $\chi^2$ distributed random variables, each with one degree of freedom. The series comprising these weighted chi-squared random variables can be approximated either through truncation or using a Welch–Satterthwaite type $\chi^2$ approximation \citep{sumWeightedChi2Greg2017}.
\end{remark}
\begin{remark}
We can construct the global testing outlined in \eqref{eq:hypo}. Suppose all conditions in Theorem \ref{thm:weak_conv_two_group} are satisfied. Let
\begin{equation*}
    \widehat{\mathfrak{T}}_{glo}^2 = \frac{(n_1h_1)(n_2h_2)}{n_1h_1 + n_2h_2}\int_{t \in \T}\brac{(\hat{\mu}_{\lambda_1} - \hat{\mu}_{\lambda_2})(t)}^2 dt.
\end{equation*}
We reject the null hypothesis at level $\alpha$ when $\widehat{\mathfrak{T}}_{glo}^2 > q_{\chi^2_{w}, \alpha}$, where $q_{\chi^2_{w}, \alpha}$ is the $(1-\alpha)\times 100\%$ quantile of a $\chi^2_{w}$ defined in Remark \ref{remark:global_test_construction}. Under $H_\alpha$, we have
\begin{align*}
    \widehat{\mathfrak{T}}_{glo}^2 &= \frac{(n_1h_1)(n_2h_2)}{n_1h_1 + n_2h_2}\int_{t \in \T}\sbrac{\hat{\mu}_{\lambda_1}(t) - \hat{\mu}_{\lambda_2}(t) - \brac{\mu_1(t) - \mu_2(t)}}^2 dt \\
    &+ \frac{(n_1h_1)(n_2h_2)}{n_1h_1 + n_2h_2}\int_{t \in \T}\brac{\mu_1(t) - \mu_2(t)}^2 dt.
\end{align*}
The above calculation implies that, under $H_\alpha$, $\widehat{\mathfrak{T}}_{glo}^2 \to \infty$ when $n \to \infty$.
\end{remark}

\subsection{Bootstrap Calibration}\label{subsec:bootstrap_validity}
Ideally, the asymptotic results established in Sections \ref{subsec:pointwise} and \ref{subsec:global} can be directly employed to construct pointwise confidence intervals and test the global null. Nonetheless, implementing these results in practice entails solving ODEs as indicated in Lemma \ref{lem:exist_of_eigensys} and estimating nuisance parameters, such as the covariance function of $X$. 
On the one hand, our numerical studies show that the convergence rates of the limits in equation \eqref{eq:ptwise_variance} and equation \eqref{eq:Gaussian_process_kernel} are slow. On the other hand, the computational time increases exponentially as two-dimensional smoothing is required to estimate the covariance function. To circumvent these technical challenges and enhance computational efficiency, we develop a multiplier bootstrap procedure for performing the inference task. Algorithm \ref{algo:bootstrap_algo} presents the details to implement the bootstrap method to construct pointwise confidence intervals for $\mu_1(t) - \mu_2(t)$ and conduct hypothesis testing for \eqref{eq:hypo}.
% The purpose of this section is to present such a procedure and establish its validity.
\begin{algorithm}[!ht]
  \caption{Proposed multiplier bootstrap procedure}
  \begin{enumerate}
    \item Generate i.i.d. bootstrap weights $U_{i, b}^{g}$ independent of the data $\{T_{gij}, Y_{gij}\}$, where $1\leq i\leq n_g, 1\leq b\leq B$, and $g=1, 2$. Here, $U_{i, b}^{g} = 1 + 2^{-1/2}$ with probability $2/3$ and $U_{i, b}^{g} = 1 + 2^{1/2}$ with probability $1/3$.
    \item The $b$th bootstrap estimator for group $g$ is computed as follows:
    \begin{equation*}
        \hat{\mu}_{\lambda_g,b} = \underset{f \in \HS}{\amin} \Bigg[\frac{1}{2M_g}\sum_{i=1}^{n_g}\sum_{j=1}^{N_{gi}}U_{i, b}^{g}\big\{Y_{gij} -  f(T_{gij})\big\}^2 + \frac{\lambda_g}{2} J(f, f)\Bigg],
    \end{equation*}
    for $b=1, 2, \ldots, B$. {This optimization problem can be solved by applying a weighted variant of Proposition \ref{prop:representer_thm}; see Chapter 3.2.4 in \cite{gu2013smoothing}.}
    \item[3.a] For each $1 \leq b \leq B$, let
    \begin{equation*}
        \Delta_b(t) = \hat{\mu}_{\lambda_1,b}(t) - \hat{\mu}_{\lambda_2,b}(t) - \brac{\hat{\mu}_{\lambda_1}(t) - \hat{\mu}_{\lambda_2}(t)}, ~~t \in \T.
    \end{equation*}
    Denote $\hat{\sigma}_{\Delta_t}$ as the standard error of $\{\Delta_b(t) \mid b = 1, 2, \ldots, B\}$.
    \item[3.b] For each $b$, we construct a variance-adjusted test statistic \citep{Wang2023Sinica}
    \begin{equation*}
        \hat{\kappa}_{b}^2 = \bigintssss_{\T}\sbrac[\bigg]{\frac{\hat{\mu}_{\lambda_1,b}(t) - \hat{\mu}_{\lambda_2,b}(t) - \brac{\hat{\mu}_{\lambda_1}(t) - \hat{\mu}_{\lambda_2}(t)}}{\hat{\sigma}_{\Delta_t}}}^2dt.
    \end{equation*}
    \item[4.a] The $(1-\alpha)\times 100\%$ confidence interval of $\mu_1(t) - \mu_2(t)$ at every specific $t \in \T$ is as follows:
    \begin{equation}\label{eq:bootstrap_CI}
        \brac{\hat{\mu}_{\lambda_1}(t) - \hat{\mu}_{\lambda_2}(t)} \pm q_{1-\alpha/2}\hat{\sigma}_{\Delta_t},
    \end{equation}
    where $q_{1-\alpha/2}$ denotes the $(1-\alpha/2)\times100\%$ quantile of a standard normal distribution. 
    \item[4.b] At a significance level of $\alpha$, we reject the null hypothesis in \eqref{eq:hypo} if $\hat{\kappa} > q_{1-\alpha}$, where
    \begin{equation*}
        \hat{\kappa}^2 = \bigintsss_{\T}\brac{\frac{\hat{\mu}_{\lambda_1}(t) - \hat{\mu}_{\lambda_2}(t)}{\hat{\sigma}_{\Delta_t}}}^2 dt,
    \end{equation*}
    and $q_{1-\alpha}$ is the $(1-\alpha)\times 100\%$ quantile of $\{\hat{\kappa}_{b}\}_{b=1}^{B}$.
\end{enumerate}
\label{algo:bootstrap_algo}
\end{algorithm}

Because two groups of data are independent of each other, we only need to establish the validity of the multiplier bootstrap procedure for each group. The following two theorems establish statistical guarantees for the bootstrap procedure, where the group index $g$ is omitted, and $\hmu_{\lambda,b}$ is the bootstrap mean estimator defined in Algorithm \ref{algo:bootstrap_algo}.
\begin{thm}\label{thm:pt_wise_bootstrap_validity}
Under the conditions of Theorem \ref{thm:ptwise_limiting_dist_biased}, we have for any $t \in \T$,
\begin{equation*}
    \frac{(nh)^{1/2}\brac{\hat{\mu}_{\lambda, b}(t) - \hat{\mu}_\lambda(t)}}{\sigma_t} \to N(0,1).
\end{equation*}
\end{thm}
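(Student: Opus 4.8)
The plan is to mirror the proof of Theorem~\ref{thm:ptwise_limiting_dist_biased}, but to work conditionally on the data $\D_n=\{(\mbf{Y}_i,\mbf{T}_i,N_i)\}_{i=1}^n$ and to treat the multiplier weights $\{U_{i,b}\}_{i=1}^n$ as the only source of randomness; since the two groups are independent it suffices to argue for a single group, as in the statement. Write $\E_*$ and $\Var_*$ for conditional expectation and variance given $\D_n$, and use that the weights satisfy $\E(U_{i,b})=1$, $\Var(U_{i,b})=1$, and $|U_{i,b}|\le C_U$ for a constant $C_U$. First I would record the bootstrap score: because the weighted objective in Algorithm~\ref{algo:bootstrap_algo} is strictly convex and quadratic in $f$, its minimizer $\hat{\mu}_{\lambda,b}$ is the unique root of $S^b_{M,\lambda}(f)=-\frac{1}{M}\sum_{i,j}U_{i,b}\{Y_{ij}-f(T_{ij})\}K_{T_{ij}}+W_\lambda f$, the weighted analogue of \eqref{eq-SM}. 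Subtracting the first-order condition $S_{M,\lambda}(\hat{\mu}_\lambda)=0$ for the original estimator gives the exact identity
\begin{equation*}
    S^b_{M,\lambda}(\hat{\mu}_\lambda)=-\frac{1}{M}\sum_{i=1}^n\sum_{j=1}^{N_i}(U_{i,b}-1)\{Y_{ij}-\hat{\mu}_\lambda(T_{ij})\}K_{T_{ij}},
\end{equation*}
whose conditional mean vanishes because $\E(U_{i,b}-1)=0$.

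Second, I would establish a bootstrap functional Bahadur representation, the analogue of Theorem~\ref{thm:FBR}. Since the objective is quadratic, its second Fr\'{e}chet derivative $DS^b_{M,\lambda}\Delta g=\frac{1}{M}\sum_{i,j}U_{i,b}\Delta g(T_{ij})K_{T_{ij}}+W_\lambda\Delta g$ is a fixed (data- and weight-dependent) linear operator, and $\hat{\mu}_{\lambda,b}-\hat{\mu}_\lambda=-[DS^b_{M,\lambda}]^{-1}S^b_{M,\lambda}(\hat{\mu}_\lambda)$ holds exactly. The key step is to show that $DS^b_{M,\lambda}$ is close to the identity on $\G$: using $\E(U_{i,b})=1$, its conditional mean agrees with the unweighted operator $DS_{M,\lambda}$ whose expectation is the identity (as in the derivation preceding \eqref{eq-SM}), while its fluctuation $\frac{1}{M}\sum_{i,j}(U_{i,b}-1)\Delta g(T_{ij})K_{T_{ij}}$ is controlled by the concentration inequality of Lemma~\ref{lem:concen_inequa} applied to the \emph{weighted} empirical process, the bounded multipliers $U_{i,b}-1$ preserving the Lipschitz and tail conditions required there. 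Combining this with the consistency rate $\norm{\hat{\mu}_\lambda-\mu}=O_p(d_n)$ of Lemma~\ref{lem:convergence}, I would obtain
\begin{equation*}
    \norm[\big]{\hat{\mu}_{\lambda,b}-\hat{\mu}_\lambda+S^b_{M,\lambda}(\hat{\mu}_\lambda)}=O_p(a_n)=o_p\big((nh)^{-1/2}\big)
\end{equation*}
conditionally on $\D_n$, in probability over the data, where $a_n$ is as in Theorem~\ref{thm:FBR} and the last equality uses the assumption $a_n=o(n^{-1/2})$ inherited from Theorem~\ref{thm:ptwise_limiting_dist_biased}.

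Third, I would apply a conditional central limit theorem to the leading term $-S^b_{M,\lambda}(\hat{\mu}_\lambda)(t)=\frac{1}{M}\sum_{i=1}^n(U_{i,b}-1)\hat{A}_i(t)$, where $\hat{A}_i(t)=\sum_{j=1}^{N_i}\{Y_{ij}-\hat{\mu}_\lambda(T_{ij})\}K_{T_{ij}}(t)$. Given $\D_n$ the summands are independent across $i$ and mean zero, with conditional variance $\Var_*\{-S^b_{M,\lambda}(\hat{\mu}_\lambda)(t)\}=M^{-2}\sum_i\hat{A}_i(t)^2$ because $\Var(U_{i,b})=1$. Two facts are needed. (a) The conditional variance converges: replacing $\hat{\mu}_\lambda$ by $\mu$ via Lemma~\ref{lem:convergence} and invoking the law of large numbers across the i.i.d.\ subjects (whose applicability is guaranteed by the fourth-moment bound in Assumption~\ref{assp:exponential_tails}(iv)) shows $(nh)M^{-2}\sum_i\hat{A}_i(t)^2\to\sigma_t^2$ in probability, since $(nh)M^{-2}\sum_i\hat{A}_i(t)^2$ is the empirical counterpart of $h\sum_i\E\{nM^{-2}\Sc_i(t,t)\}$, the defining expression of $\sigma_t^2$ in \eqref{eq:ptwise_variance} (using $\E\{\hat A_i(t)^2\mid N_i\}=\Sc_i(t,t)$ from \eqref{eq:G_functions}). (b) The Lindeberg condition holds: the bound $\norm{K_t}^2\le C_K^2 h^{-1}$ from Lemma~\ref{prop:K_and_Wf_bound}, together with the boundedness of $U_{i,b}-1$ and the exponential tails of Assumption~\ref{assp:exponential_tails}, render the maximal summand negligible relative to the variance. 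The Lindeberg--Feller theorem then yields $(nh)^{1/2}\{-S^b_{M,\lambda}(\hat{\mu}_\lambda)(t)\}/\sigma_t\overset{d}{\to}N(0,1)$ conditionally on $\D_n$, in probability.

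Finally, combining Steps~2 and~3 through Slutsky's theorem --- the remainder is $o_p((nh)^{-1/2})$ conditionally and the bias-type contribution vanishes because $\E_*[S^b_{M,\lambda}(\hat{\mu}_\lambda)]=0$ --- gives the claimed $(nh)^{1/2}\{\hat{\mu}_{\lambda,b}(t)-\hat{\mu}_\lambda(t)\}/\sigma_t\to N(0,1)$ (understood as the conditional law converging to $N(0,1)$ in probability over the data, which implies the stated convergence). I expect the main obstacle to be Step~2: transporting the concentration machinery of Lemma~\ref{lem:concen_inequa} and the representation of Theorem~\ref{thm:FBR} into the conditional setting, i.e.\ proving that the weighted operator $DS^b_{M,\lambda}$ inverts uniformly well on $\G$ and that the remainder is negligible \emph{uniformly in the weights}, so that the bootstrap representation holds in probability over the data rather than merely in conditional expectation.
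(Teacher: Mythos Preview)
Your approach is correct but differs from the paper's in two respects. First, the paper does not argue conditionally: it invokes Lemma~3.1 of \cite{bucher_note_2019} to reduce bootstrap validity to the \emph{unconditional} joint weak convergence $(\Sb_n,\Sb_{n,1},\ldots,\Sb_{n,B})\leadsto(Z,Z_1,\ldots,Z_B)$ with $Z_1,\ldots,Z_B$ i.i.d.\ copies of $Z$, and then derives Theorem~\ref{thm:pt_wise_bootstrap_validity} as a special case of Theorem~\ref{thm:weak_conv_bootstrap_validity}. Second, the paper centers the leading term at the true $\mu$ rather than at $\hat{\mu}_\lambda$: it writes $\hat{\mu}_{\lambda,b}-\hat{\mu}_\lambda=\{\hat{\mu}_{\lambda,b}-\mu+S^b_{M,\lambda}(\mu)\}-\{\hat{\mu}_\lambda-\mu+S_{M,\lambda}(\mu)\}+\{S_{M,\lambda}(\mu)-S^b_{M,\lambda}(\mu)\}$, so the first two braces are $O_p(a_n)$ by the (weighted and unweighted) FBR of Theorem~\ref{thm:FBR}, and the leading term $\frac{1}{M}\sum_{i,j}(U_{i,b}-1)\{Y_{ij}-\mu(T_{ij})\}K_{T_{ij}}$ is a sum of terms that are i.i.d.\ in $i$ \emph{unconditionally}, to which the Lindeberg argument of Theorem~\ref{thm:ptwise_limiting_dist_biased} applies with only the extra bounded factor $|U_{i,b}-1|\le\sqrt{2}$. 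Your centering at $\hat{\mu}_\lambda$ exploits the exact quadratic identity nicely, but then forces you to swap $\hat{\mu}_\lambda$ for $\mu$ inside the conditional variance to recover independence across $i$; the paper's centering avoids that step. The trade-off: your route is self-contained and needs no external device, whereas the paper's proves the pointwise and uniform bootstrap results in one stroke and sidesteps the ``conditional $O_p$ uniformly in the weights'' issue you correctly flag as the main obstacle in Step~2.
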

\begin{thm}\label{thm:weak_conv_bootstrap_validity}
Under the conditions of Theorem \ref{thm:weak_conv}, we have
\begin{equation*}
    \Sb_{n,b}(t) = (nh)^{1/2}\{\hat{\mu}_{\lambda,b}(t) - \hmu_\lambda(t)\}_{t \in \T} \leadsto  \{Z(t)\}_{t\in \T}
\end{equation*}
\end{thm}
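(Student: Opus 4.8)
The plan is to mirror the proof of Theorem~\ref{thm:weak_conv}, but with the sampling randomness replaced by the multiplier randomness and all statements understood conditionally on the data (with probability tending to one). Because the two groups are independent, it suffices to treat a single group, exactly as the statement does. Weak convergence in $C(\T)$ will then follow from the two classical ingredients: convergence of the finite-dimensional distributions of $\Sb_{n,b}$ to those of $Z$, and asymptotic tightness of the conditional laws. The limit is forced to coincide with the $Z$ of Theorem~\ref{thm:weak_conv} once the finite-dimensional covariances are matched.

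The first step is a bootstrap analogue of the functional Bahadur representation. A convenient feature here is that the penalized loss in Algorithm~\ref{algo:bootstrap_algo} is quadratic in $f$, so its Fr\'echet derivative is affine and the stationarity condition $S^b_{M,\lambda}(\hat{\mu}_{\lambda,b})=0$ (the weighted analogue of \eqref{eq-SM}) linearizes exactly. Subtracting the original stationarity condition $S_{M,\lambda}(\hat{\mu}_\lambda)=0$ and using the identity $\tfrac{1}{M}\sum_{i,j} r_{ij}K_{T_{ij}} = W_\lambda\hat{\mu}_\lambda$, with residuals $r_{ij}=Y_{ij}-\hat{\mu}_\lambda(T_{ij})$, the bootstrap score evaluated at $\hat{\mu}_\lambda$ collapses to
$$S^b_{M,\lambda}(\hat{\mu}_\lambda)=-\frac{1}{M}\sum_{i=1}^{n} (U_{i,b}-1)\sum_{j=1}^{N_i} r_{ij}K_{T_{ij}},$$
a sum of conditionally independent (over $i$) random elements of $\HS$. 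One then shows that the random operator $DS^b_{M,\lambda}$ concentrates around a deterministic, invertible operator, so that $\hat{\mu}_{\lambda,b}-\hat{\mu}_\lambda$ equals the image of the above score, up to a remainder that is $o_p\{(nh)^{-1/2}\}$ uniformly in $t$, in direct parallel with Theorem~\ref{thm:FBR}.

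For the finite-dimensional distributions I would extend Theorem~\ref{thm:pt_wise_bootstrap_validity} from a single $t$ to a finite collection $t_1,\dots,t_k$ via the Cram\'er--Wold device, applying a conditional Lindeberg central limit theorem to the triangular array of summands $(U_{i,b}-1)\sum_j r_{ij}K_{T_{ij}}$. The crux is to verify that the conditional covariance converges to $C_Z$ of \eqref{eq:Gaussian_process_kernel}: this combines the calibrated multiplier moments (notably $\E\{(U-1)^2\}=1$, which enters the scaling) with the consistency of the residual-based empirical second-moment operator formed from $\sum_j r_{ij}K_{T_{ij}}$ for $\E\{\Sc_i\}$ in \eqref{eq:G_functions}. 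Replacing the $\hat{\mu}_\lambda$-residuals by the $\mu$-residuals is controlled through $\norm{\hat{\mu}_\lambda-\mu}=O_p(d_n)$ from Lemma~\ref{lem:convergence}, while the residual conditional mean, proportional to $(nh)^{1/2}W_\lambda\hat{\mu}_\lambda$, is negligible under the sufficient condition in Remark~\ref{remark:unbiased_remark}.

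Tightness is where I expect the real work. Following the tightness step for Theorem~\ref{thm:weak_conv}, I would control the modulus of continuity of the weighted empirical process by invoking the concentration inequality of Lemma~\ref{lem:concen_inequa} with $\psi_n$ built from the multiplier-weighted residuals; this entails checking the Lipschitz hypothesis for the weighted summands and using Assumption~\ref{assp:exponential_tails}, now augmented by the boundedness of the $U_{i,b}$, to secure the requisite moment and entropy bounds, together with the equicontinuity of $t\mapsto K_t$ afforded by Lemma~\ref{prop:K_and_Wf_bound}. The main obstacle is that all these bounds must hold conditionally on the data and uniformly over its realization (with probability tending to one), since the only randomness remaining in $\Sb_{n,b}$ is the multipliers; carrying this data-dependence through the concentration argument, while simultaneously establishing the uniform consistency of the residual empirical covariance for $\E\{\Sc_i\}$, is the delicate part. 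Once finite-dimensional convergence and tightness are both in hand, $\Sb_{n,b}\leadsto Z$ in $C(\T)$ follows with the same limit as in Theorem~\ref{thm:weak_conv}.
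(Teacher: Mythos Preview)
Your approach is correct in spirit but takes a genuinely different route from the paper. You work conditionally on the data, center the bootstrap expansion at $\hat{\mu}_\lambda$, and obtain the leading term $-M^{-1}\sum_i (U_{i,b}-1)\sum_j r_{ij}K_{T_{ij}}$ built from residuals; you then face the delicate task (which you correctly flag) of pushing conditional concentration and residual-covariance consistency through the argument. The paper instead converts the problem to \emph{unconditional} joint weak convergence of $(\Sb_n,\Sb_{n,1},\dots,\Sb_{n,B})$ to $(Z,Z_1,\dots,Z_B)$ via Lemma~3.1 of \cite{bucher_note_2019}, and centers both $\hat{\mu}_{\lambda,b}$ and $\hat{\mu}_\lambda$ at the \emph{true} $\mu$ by applying the FBR of Theorem~\ref{thm:FBR} twice (once in its weighted form, which goes through because $\E U_{i,b}=1$ preserves Lemma~\ref{prop:identity_op}). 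This yields the leading term
\[
H_{3,b}(t)=\frac{(nh)^{1/2}}{M}\sum_{i=1}^n(U_{i,b}-1)\sum_{j=1}^{N_i}\{Y_{ij}-\mu(T_{ij})\}K_{T_{ij}}(t)=\sum_{i=1}^n(U_{i,b}-1)\widetilde{\Theta}_i(t),
\]
an i.i.d.\ sum in $i$ unconditionally. The covariance computation and Lindeberg check then recycle the calculations of Theorem~\ref{thm:weak_conv} verbatim (using $\E\{(U-1)^2\}=1$ and independence across bootstrap indices $b$), and asymptotic tightness is obtained almost for free from the a.s.\ bound $|U_{i,b}-1|\le\sqrt{2}$ together with the tightness of $I_{3,n}$ already established. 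Your route can be completed, but the paper's twin devices---the unconditional formulation and centering at $\mu$ rather than $\hat{\mu}_\lambda$---sidestep exactly the conditional-uniformity and residual-consistency obstacles you identify, at the modest cost of invoking one external lemma.
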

\section{Simulation Studies}\label{sec:Simulation_studies}
Throughout this section, we consider $\HS = H^2(\T)$ and $J(f, f) = \int_{\T}\{f^{''}(t)\}^2dt$. We showcase the finite sample performance of the proposed method, henceforth
referred to as SS. Details regarding the simulation settings are provided in Section \ref{subsec:simulation_settings}.  We then elucidate the estimation performance of our method in Section \ref{subsec:simulation_estimations}. Finally, the performance of our proposed method in both pointwise and global testing is thoroughly evaluated in Sections \ref{subsec:simulation_ptwise_inference} and \ref{subsec:simulation_global_inference}, respectively.

\subsection{Simulation Settings}\label{subsec:simulation_settings}
Data are simulated from model \eqref{eq:model} for each group, where $\err_{g} \sim N(0, \sigma_{g, \err}^2)$. Specifically, $\sigma_{1, \err}^2 = 0.09$ and $\sigma_{2, \err}^2 = 0.04$. For $i = 1, \ldots, n_g$,  we generate the random function $X_{gi}(t)$ utilizing the Karhunen–Lo\'{e}ve expansion:
\begin{equation*}
    X_{gi}(t) = \mu_g(t) + \sum_{k=1}^{\infty}\xi_{gik}\varphi_{gk}(t),~i = 1, \ldots, n_g.
\end{equation*}
In this context, $\xi_{gik}$'s, which are fPC scores, follow $N(0, \theta_{gk})$. Additionally, $\varphi_{gk}$ represents the eigenfunction of the covariance function of $X_{gi}$, corresponding specifically to the $k$th eigenvalue denoted as $\theta_{gk}$.

To examine the impact of the homogeneous covariance assumption on mean difference detection, we keep the eigensystem for group 1 constant across all simulation settings, while various configurations are explored for the eigensystem of group 2. In particular, for group 1, the eigenvalues are set as $\theta_{11} = 1,~\theta_{12} = 0.25,~\theta_{13} = 0.09,~\theta_{14} = 0.05$,  and $\theta_{1k}= 0$ for $k \geq 5$. Additionally, the eigenfunctions for group 1, $\varphi_{1k}(t)$, are set to be $\sin(k\pi t)$ for every $t \in \T$ and for each $k \geq 1$. 
Regarding the eigensystem of group 2, we consider the following three settings.
\begin{enumerate}
    \item In setting \textit{c.1}, group 2's eigensystem is identical to that of group 1, meaning $\gamma_{2k} = \gamma_{1k}$ and $\varphi_{2k}(t) = \varphi_{1k}(t)$ for every $k$ and all $t \in \T$.
    \item In setting \textit{c.2}, group 2 has different eigenvalues from group 1, but retains the same eigenfunctions. The eigenvalues are set as $\theta_{21} = 0.81, \theta_{22} = 0.36, \theta_{23} = 0.09, \theta_{24} = 0.01$, and $\theta_{2k} = 0$ for $k \geq 5$.
    \item In setting \textit{c.3}, the eigensystem of group 2 is completely distinct from that of group 1. Specifically, the eigenvalues are $\theta_{21} = 0.64, \theta_{22} = 0.36, \theta_{23} = 0.16, \theta_{24} = 0.04, \theta_{25} = 0.01$, and $\theta_{2k} = 0$ for $k \geq 6$. The eigenfunctions are set as $\varphi_{2k} = \cos(k\pi t)$ for every $k \geq 1$ and all $t \in \T$.
\end{enumerate}
In terms of the mean functions for these two groups, we consider the following design:
\begin{equation}\label{eq:meandesign}
\begin{gathered}
    \mu_1(t) = (2t - 0.3)^3 + 0.5t,\quad{\text{and}}\\
     ~~~\mu_2(t) - \mu_1(t) = \delta n_2^{-1/4}\brac{e^t - (2t - 1)^3 - 1},
\end{gathered}
\end{equation}
where $\delta$, ranging in the set $\{0, 0.5, 1\}$, represents the size of the mean difference between the two groups.
 
To assess the performance of the proposed estimator and testing procedure, we consider different combinations of subject counts in the two groups: specifically, $n_1 = 200$, and $n_2$ varies among $\{100, 200, 400\}$. The number of observations per subject in both groups is determined by random sampling from a discrete uniform distribution over the set $\{2, 3, \ldots, N_{\max}\}$, where $N_{\max} \in \{6, 10, 14, 18\}$. The larger $N_{\max}$ results in, on average, a greater number of observations per subject. Comparing these settings allows us to evaluate the performance of our method under varying levels of data sparsity.

The observation times for each group are sampled randomly from a uniform distribution over the interval $[0, 1]$. For each of the aforementioned settings, we perform 1000 Monte Carlo simulations. For both the pointwise testing and the global testing outlined in Algorithm \ref{algo:bootstrap_algo}, we choose to use 300 bootstrap samples, denoted as $B = 300$.

\begin{table}[!ht]
\caption{Mean and standard error (in parenthesis) of IMSEs across 1000 Monte Carlo runs with $\delta = 0.5$.}
\centering
\addtolength{\tabcolsep}{-4.7pt} 
\def\arraystretch{1.5}
\begin{tabular}[t]{c c c c c c c c}
\hline
\hline
\multirow{2}{*}{Setting } & \multirow{2}{*}{$N_{\max}$} & \multicolumn{2}{c}{$n_1=200, n_2 = 100$} & \multicolumn{2}{c}{$n_1=200, n_2 = 200$} & \multicolumn{2}{c}{$n_1=200, n_2 = 400$}\\ 
 & & SS & PACE & SS & PACE & SS & PACE\\
\hline
\multirow{2}{*}{c.1} & $10$& 0.020(0.019) & 0.021(0.018) & 0.013(0.010) & 0.014(0.010) & 0.011(0.008) & 0.011(0.008)\\
& $18$ & 0.018(0.016) & 0.018(0.015) & 0.012(0.010) & 0.012(0.010) & 0.009(0.008) & 0.009(0.008)\\
\hline
\multirow{2}{*}{c.2} & $10$ & 0.019(0.015) & 0.020(0.015) & 0.014(0.010) & 0.014(0.010) & 0.010(0.008) & 0.010(0.007)\\

& $18$ & 0.017(0.015) & 0.017(0.015) & 0.011(0.009) & 0.011(0.008) & 0.009(0.008) & 0.009(0.008)\\
\hline
\multirow{2}{*}{c.3} & $10$ & 0.020(0.012) & 0.022(0.012) & 0.014(0.009) & 0.015(0.009) & 0.011(0.007) & 0.011(0.007)\\

& $18$ & 0.019(0.012) & 0.019(0.011) & 0.012(0.008) & 0.012(0.007) & 0.009(0.007) & 0.009(0.006)\\
\hline
\end{tabular}
\label{tb:IMSE_comparison}
\end{table}

\subsection{Estimation Performance}\label{subsec:simulation_estimations}
In evaluating the estimation efficacy of SS, we compare it against PACE \citep{FY2005}, a widely adopted method for estimating the mean function in the context of sparsely observed functional data. The performance of both methods is assessed using the Integrated Mean Squared Error (IMSE), defined as follows:
\begin{equation*}
    \int_{\T}[\{\mu_1(t) - \mu_2(t)\} - \{\hat{\mu}_{\lambda_1}(t) - \hat{\mu}_{\lambda_2}(t)\}]^2dt.
\end{equation*}
We also compare our method to PACE in terms of out-of-sample performance, with detailed results provided in Appendix \ref{appendix:extra_numerical_results}. It turns out that these two methods have similar out-of-sample performance. 

Table \ref{tb:IMSE_comparison} presents the average values and standard errors of the IMSEs for both methods, calculated over 1000 simulation runs under various settings. It is evident that the proposed method attains estimation accuracy comparable to that of PACE. As anticipated, the accuracy of both methods improves with an increase in $n_2$. These numerical findings underscore the effectiveness of aggregating observations from all subjects in estimating the mean function.

\subsection{Pointwise Inferences Performance}\label{subsec:simulation_ptwise_inference}

To evaluate the performance of the pointwise confidence intervals outlined in Algorithm \ref{algo:bootstrap_algo}, we compute their actual coverage probabilities at values of $t_p\in \{0, 0.02, \ldots, 1\}$. For these pointwise confidence intervals, the nominal coverage probability is set as $1 - \alpha = 0.95$. Figure \ref{fig:coverProb} illustrates the actual coverage probabilities of these intervals under different simulation settings.
\begin{figure}[!ht]
    \centering
    \includegraphics[scale = 0.72]{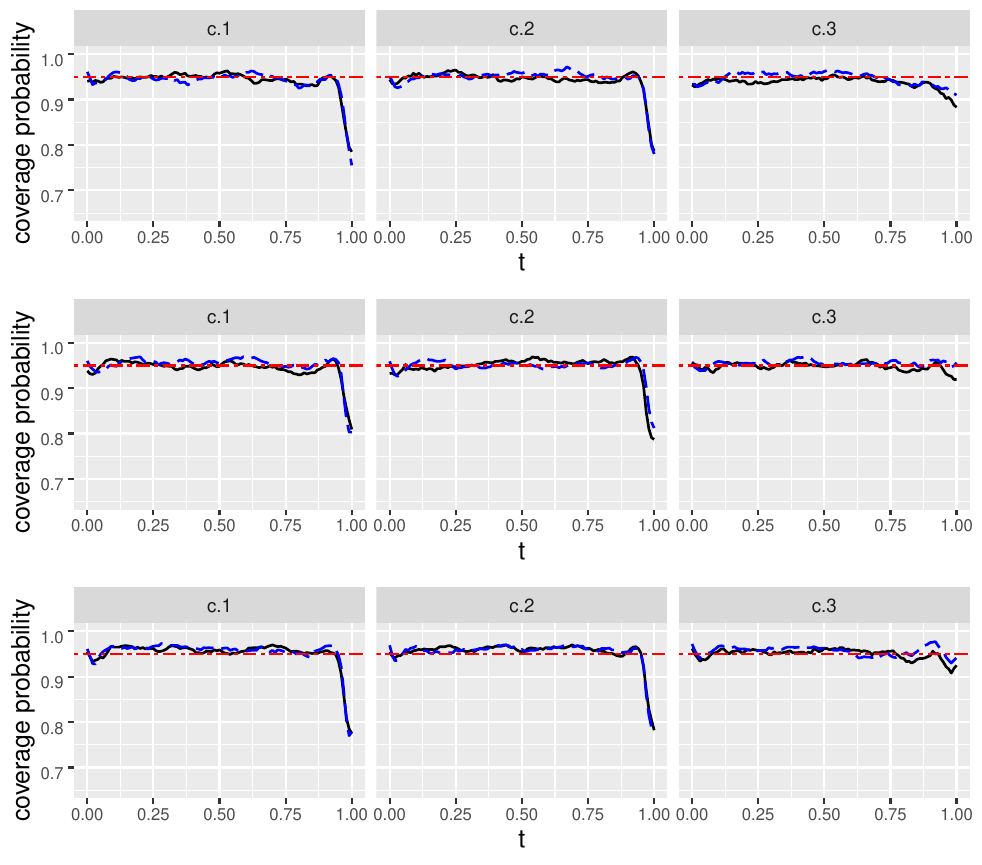}
    \caption{In every panel, the red dotted-dash  line represents the nominal coverage probability. The actual pointwise coverage probabilities for $N_{\max} = 10$ and $N_{\max} = 18$ are represented by black solid and blue long-dashed lines, respectively. The number of subjects in group 2, denoted by $n_2$, with values of 100, 200, and 400, arranged from the top to the bottom.}
    \label{fig:coverProb}
\end{figure}

Our observations reveal that the actual coverage probabilities of our proposed method closely align with the nominal level, regardless of the groups having a common covariance structure or not. The only deviation from this pattern occurs near the boundary, where the performance drops. This is likely due to a boundary effect, which implies that the scarcity of observations near the boundary leads to less reliable estimates. 

\subsection{Global Inference Performance}\label{subsec:simulation_global_inference}
In this subsection, we evaluate the performance of the global testing procedure introduced in Remark \ref{remark:global_test_construction}, where the implementation follows Algorithm \ref{algo:bootstrap_algo}. Our approach is compared with the fPCA-type test statistic (hereinafter referred to as SH), as proposed in \cite{ejs2sample}, which has shown superior performance against several alternatives. Specifically, SH assumes the covariance functions of these two groups share common eigenfunctions. It estimates these eigenfunctions by aggregating data from both groups. Each individual curve is then projected onto the space spanned by these estimated eigenfunctions. Assuming that the projected scores within each group follow a multivariate normal distribution, \cite{ejs2sample} developed a test statistic that asymptotically follows a $\chi^2$ distribution, with the degrees of freedom equal to $p$, the dimensionality of the projection space. On the suggestion of a referee, we also compare to the functional additive mixed models proposed by \cite{scheipl2015functional} (hereinafter referred to as pffr) to estimate the mean function only explained by the group indicator.
\begin{figure}[!ht]
    \centering
    \caption{Rejection rates across 1000 Monte Carlo runs at a 5\% significance level with $N_{\max} = 10$ with $n_1 = 100$ under setting c.1.}
    \includegraphics[scale = 0.68]{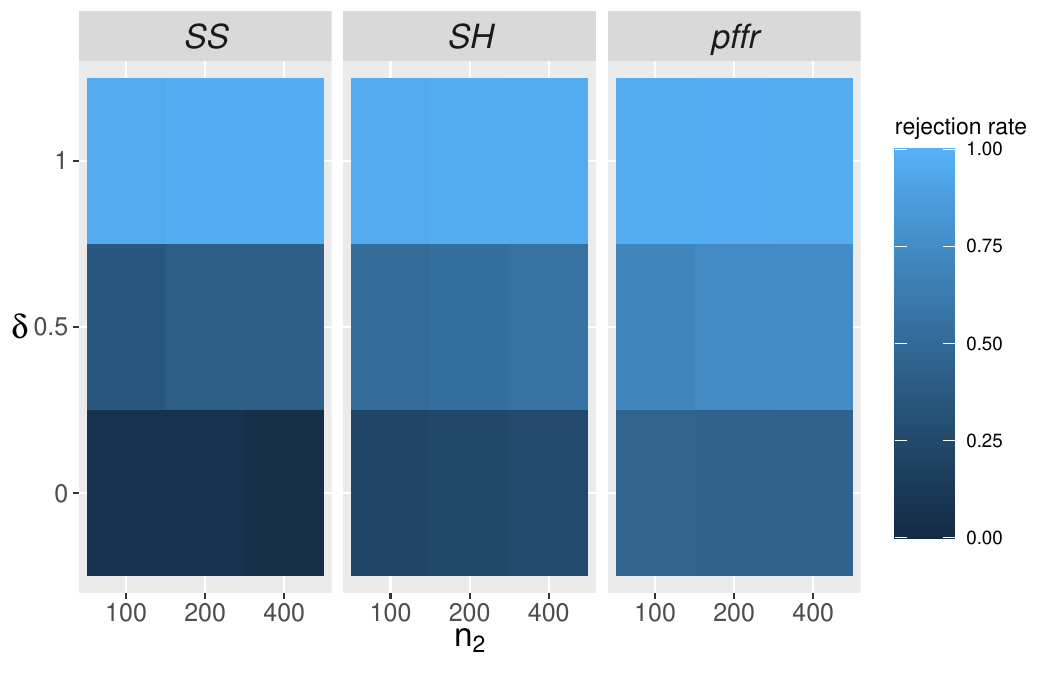}
    \label{fig:weak_convergence_N10c1}
\end{figure}
\begin{figure}[!ht]
    \centering
    \caption{Rejection rates across 1000 Monte Carlo runs at a 5\% significance level with $N_{\max} = 10$ with $n_1 = 100$ under setting c.2.}
    \includegraphics[scale = 0.68]{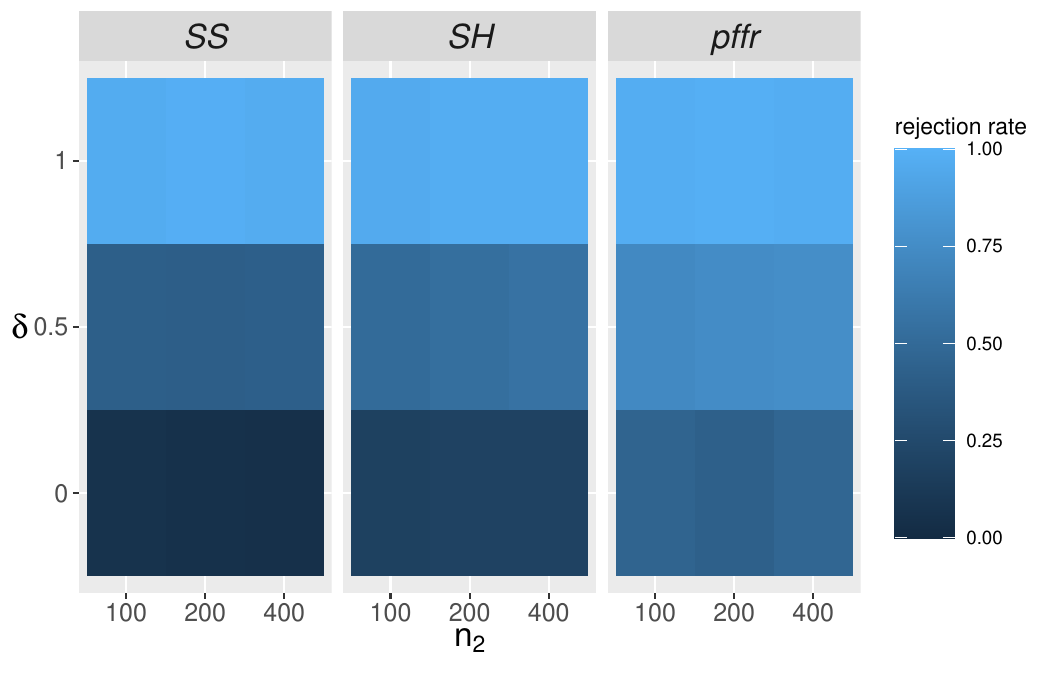}
    \label{fig:weak_convergence_N10c2}
\end{figure}

\begin{figure}[!ht]
    \centering
    \caption{Rejection rates across 1000 Monte Carlo runs at a 5\% significance level with $N_{\max} = 10$ with $n_1 = 100$ under setting c.3.}
    \includegraphics[scale = 0.68]{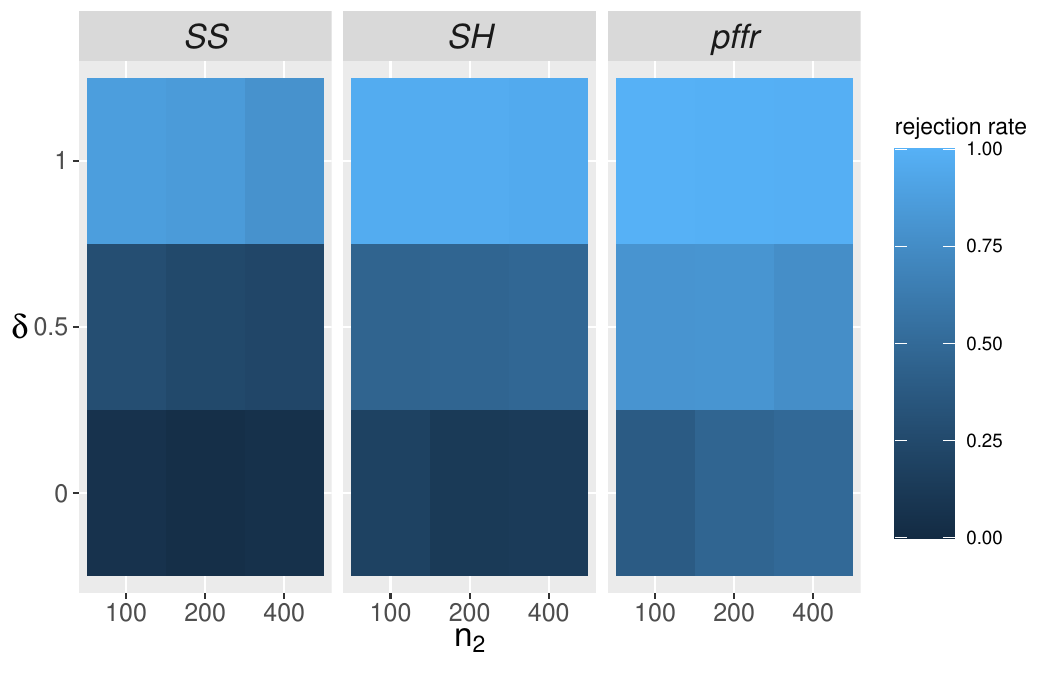}
    \label{fig:weak_convergence_N10c3}
\end{figure}

\begin{table}[!ht]
\caption{Rejection rates across 1000 Monte Carlo runs at a 5\% significance level with $N_{\max} = 10$}
\centering
\addtolength{\tabcolsep}{-2.5pt} 
\def\arraystretch{1.3}
\begin{tabular}[t]{c c @{\hspace{1.6em}} c c c @{\hspace{2em}} c c c @{\hspace{2em}} c c c}
\hline\\[-3ex]
\multirow{2}[2]{*}{Setting} & \multirow{2}[2]{*}{Test} & \multicolumn{3}{c}{$n_1=200, n_2 = 100$} & \multicolumn{3}{c}{$n_1=200, n_2 = 200$} & \multicolumn{3}{c}{$n_1=200, n_2 = 400$}\\[0.3ex]
\cline{3-11}\\[-2.8ex]
 & & $\delta = 0$ & $\delta = 0.5$ & $\delta = 1$ & $\delta = 0$ & $\delta = 0.5$ & $\delta = 1$ & $\delta = 0$ & $\delta = 0.5$ & $\delta = 1$\\[0.5ex]
\hline\\[-3.5ex]
\multirow{3}{*}{c.1} & \textbf{SS} & 0.056 & 0.358 & 0.955 & 0.062 & 0.414 & 0.979 & 0.046 & 0.417 & 0.976\\[0.5ex]
& SH & 0.213 & 0.506 & 0.954 & 0.232 & 0.538 & 0.970 & 0.252 & 0.560 & 0.963\\
& pffr & 0.453 & 0.696 & 0.963 & 0.446 & 0.737 & 0.976 & 0.437 & 0.743 & 0.966\\[0.2ex]
\hline\\[-3.5ex]
\multirow{3}{*}{c.2} & \textbf{SS} & 0.066 & 0.416 & 0.964 & 0.053 & 0.411 & 0.983 & 0.042 & 0.420 & 0.966\\[0.2ex]
& SH & 0.180 & 0.506 & 0.954 & 0.193 & 0.536 & 0.974 & 0.192 & 0.559 & 0.973\\
& pffr & 0.453 & 0.727 & 0.974 & 0.421 & 0.748 & 0.986 & 0.466 & 0.758 & 0.975\\
\hline
\multirow{3}{*}{c.3} & \textbf{SS} & 0.071 & 0.297 & 0.876 & 0.070 & 0.274 & 0.836 & 0.055 & 0.227 & 0.781\\[0.2ex]
& SH & 0.202 & 0.458 & 0.947& 0.157 & 0.457 & 0.950 & 0.138 & 0.451 & 0.951\\
& pffr & 0.387 & 0.801 & 0.997 & 0.459 & 0.805 & 0.990 & 0.494 & 0.759 & 0.985\\
\hline
\end{tabular}
\label{tb:weak_convergence_N10}
\end{table}

Figures \ref{fig:weak_convergence_N10c1} - \ref{fig:weak_convergence_N10c3} display the rejection rates at a $5\%$ significance level for these two methods across different simulation designs with sampling frequency $N_{\max} = 10$, and Table \ref{tb:weak_convergence_N10} presents more details that may not be able to obtained directly from these figures. Generally, the power of all methods improves with a larger number of observations and/or a greater magnitude of mean difference between the two groups. This increase in observations can be attributed to either a higher sampling frequency or a greater number of subjects in the second group. The Type-I error of the proposed testing procedure aligns closely with the pre-determined significance level under various settings. In contrast, both SH and pffr tend to produce inflated type I errors. Moreover, our method tends to be more conservative in rejecting the null hypothesis when there is no mean difference ($\delta = 0$) or the mean difference is relatively small ($\delta = 0.5$), especially when the covariance structures of the two groups are distinct. Additional simulation results for other values of $N_{\max}$ are available in Appendix \ref{appendix:extra_numerical_results}.

\section{Real Data Examples}\label{sec:real_data_examples}
\subsection{Diffusion Tensor Imaging for Fractional Anisotropy Tract Profiles}

The first dataset we focus on originates from a neuroimaging study on individuals with multiple sclerosis (MS). MS is recognized as an immune-mediated inflammatory ailment, manifesting alterations in white-matter tracts. Diffusion tensor imaging (DTI), a type of magnetic resonance imaging, is employed to examine these white-matter tracts.
\begin{figure}[!ht]
    \centering
    \includegraphics[scale = 0.67]{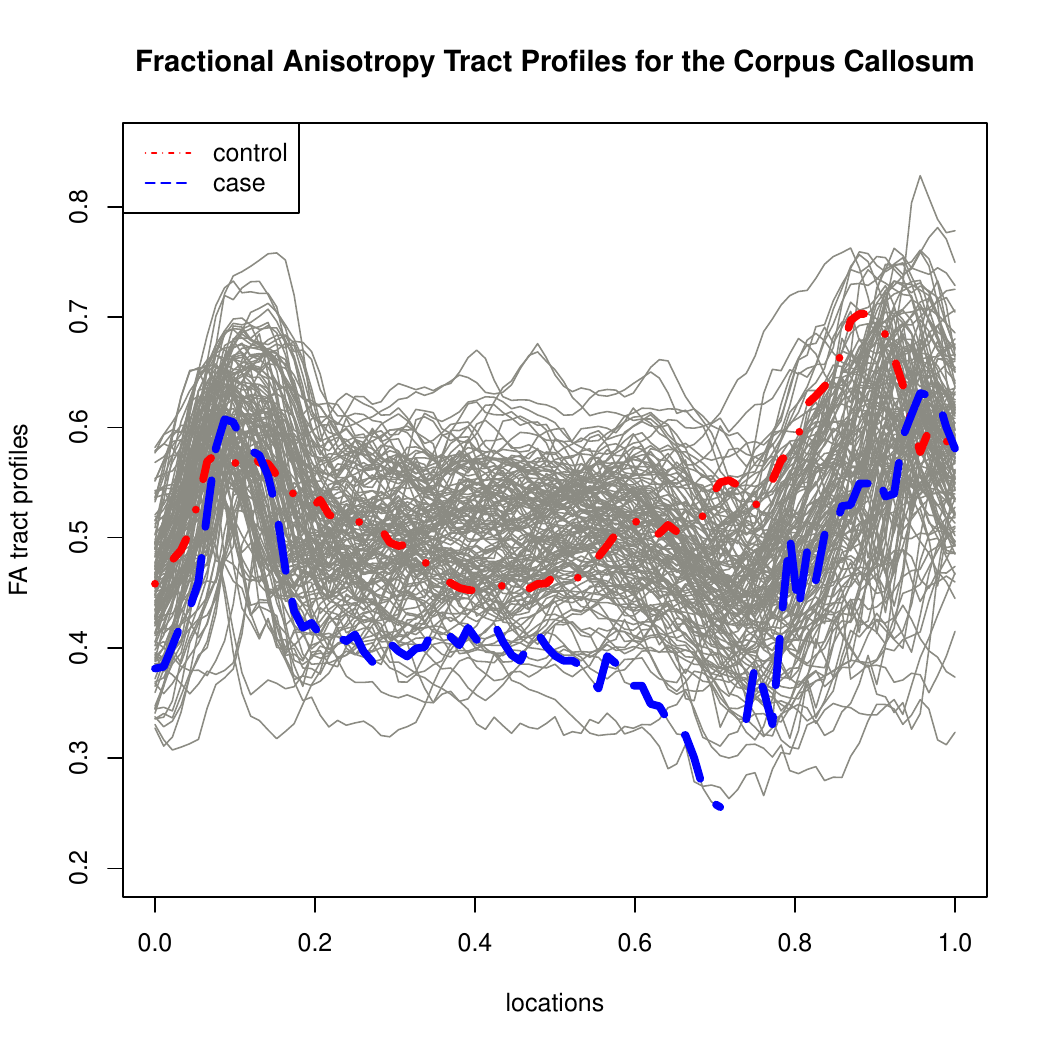}
    \caption{The FA profiles along the CC tract for all participants are shown. The graph features a highlighted control subject represented by a red dot-dashed line and a case subject denoted by a blue dashed line.}
    \label{fig:FA profiles}
\end{figure}

DTI facilitates the generation of detailed images of white-matter anatomy within the brain by assessing the overall directionality of water diffusion. A commonly employed measure is fractional anisotropy (FA), which characterizes the degree of directional diffusion of water. This study is centered on the analysis of FA  profiles parameterized by distance along the tract, the corpus callosum (CC) in particular, to investigate brain structural differences between MS patients and healthy controls.

This dataset consists of 42 control subjects and 100 cases and is publicly available via the \texttt{refund} package \citep{refundRPkg}. Our analysis focuses on data from the baseline visit of the patients. For each individual, FA readings are recorded at 93 distinct locations along the CC. These readings are then interpreted as regularly spaced observations from a latent smooth function specific to each subject, with the presence of random noise in the measurements.

Figure \ref{fig:FA profiles} displays trajectories of FA profiles for both the healthy and control groups. While FA observations are typically collected using a regular grid sampling approach, there are instances of missing data in some subjects, as illustrated by the curve marked with a blue dashed line in Figure \ref{fig:FA profiles}. More detailed information regarding the design, methodologies, and rationale behind the medication used in this study is available in \cite{DTIBioStudy2010}.
\begin{figure}[!ht]
    \centering
    \includegraphics[scale = 0.65]{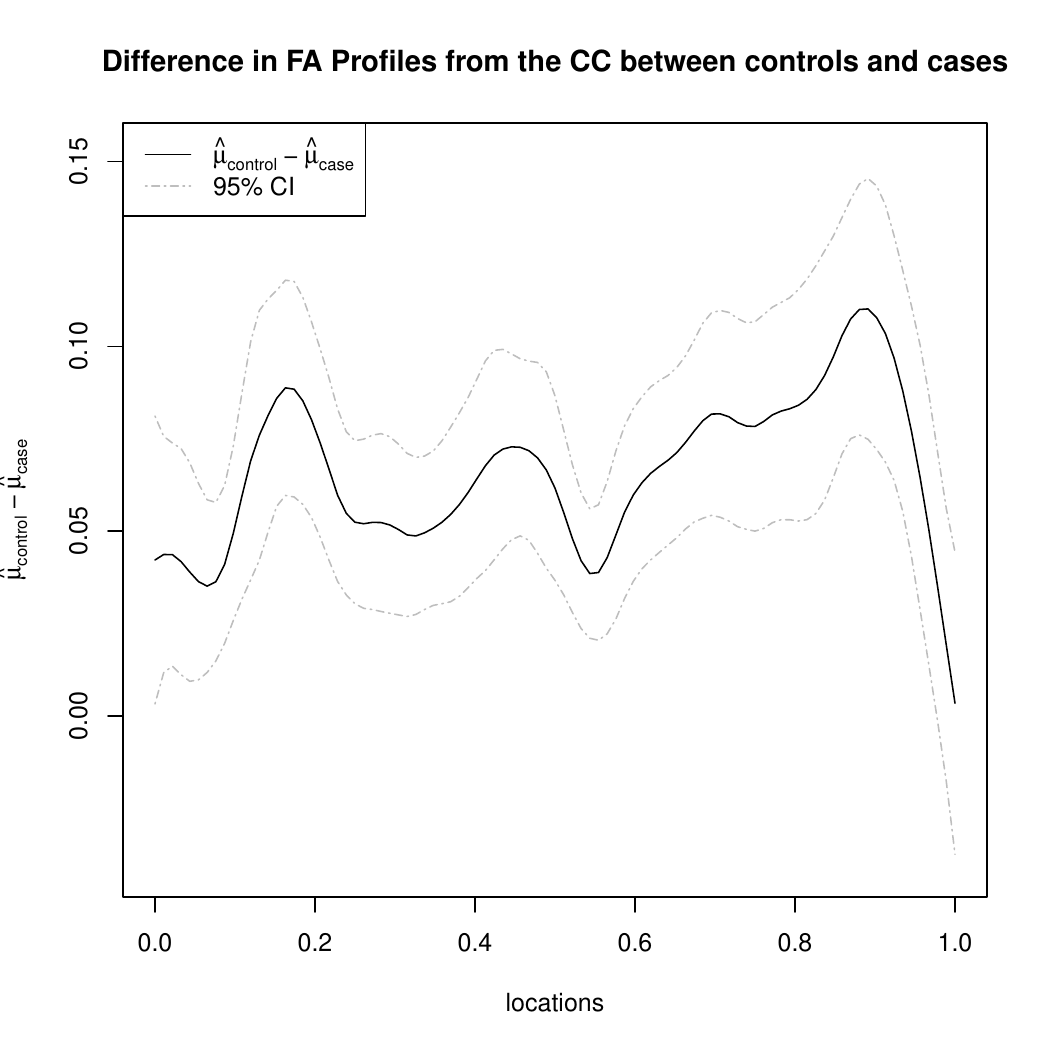}
    \caption{The mean difference between cases and controls (black solid line), along with $95\%$ pointwise confidence intervals (blue dot-dashed lines) over the tract of CC.}
    \label{fig:DTIMeanEst}
\end{figure}

To showcase the performance of the proposed method, we deliberately `sparsify' the data, which originally consists of regularly spaced and densely collected observations. For each subject, the number of observations is randomly chosen to be between 2 and 18, with each number being equally likely. The estimated difference in mean functions between the control and case groups, along with $95\%$ pointwise confidence intervals, is shown in Figure \ref{fig:DTIMeanEst}.
This illustration uncovers a complex pattern that is not readily apparent from just observing Figure \ref{fig:FA profiles}. It identifies specific locations along the CC where there are notable differences in white-matter between healthy individuals and MS patients. Furthermore, the null hypothesis of identical mean FA profiles across the two groups is rejected at a 5\% significance level. This result corroborates existing research, which indicates considerable neuronal loss in the CC of MS patients at the advanced stages, as referenced in \citep{DTIBioStudy2010, JRSSC2016}. Given that this analysis exclusively utilizes baseline data, it suggests that our method could be effective for early diagnosis of MS. 

\subsection{Beijing Air Pollutants Data}
\begin{figure}[!ht]
    \centering
    \includegraphics[scale = 0.67]{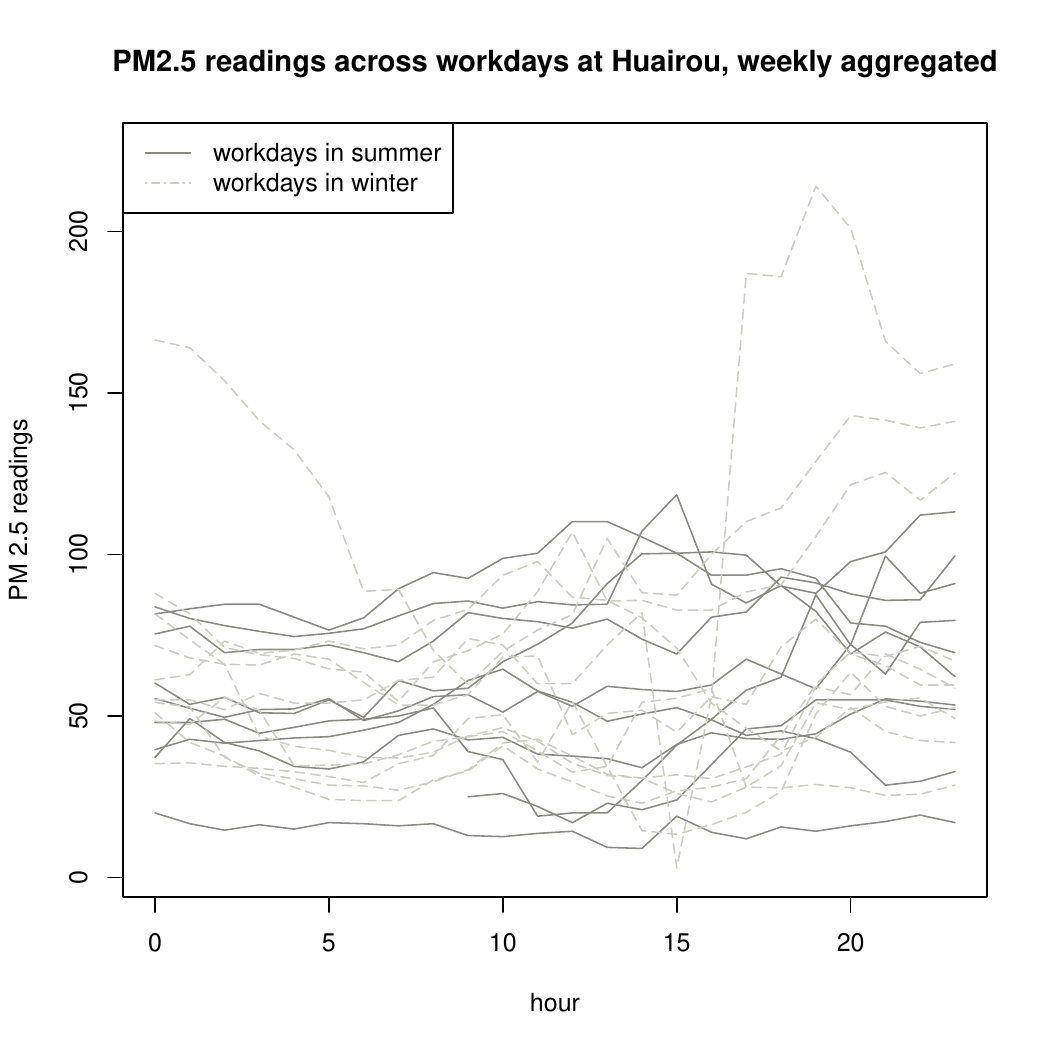}
    \caption{PM2.5 readings (weekly aggregate data) from 10 randomly selected weeks from summer (solid lines) and winter (dashed lines) in Huairou, Beijing.}
    \label{fig:airPollutants}
\end{figure}
Due to rapid industrialization and urbanization, numerous cities in China are experiencing severe air pollution, with PM$_{2.5}$ being a major pollutant. PM$_{2.5}$ refers to fine particulate matter (PM) with an aerodynamic diameter of less than 2.5 $\mu m$. These fine particles can penetrate the circulatory system directly, leading to the development of cardiovascular and respiratory diseases, and even lung cancers, as reported in studies like \citep{PopeJAMA2002, hoek_long-term_2013, Lelieveld2015Nature}.

Our investigation focuses on hourly air pollution data from Huairou, located near the northern part of Beijing, collected over a five-year period from March 2013 to February 2017. This data is accessible through the UCI repository \url{(https://doi.org/10.24432/C5RK5G)}. The high levels of PM$_{2.5}$ observed in this area can be attributed to factors like industrial emissions, human activities, and  winter heating, as discussed in \cite{LiangJRSSA2015} and \cite{Zhang2023Envir}.
\begin{figure}[!ht]
    \centering
    \includegraphics[scale = 0.67]{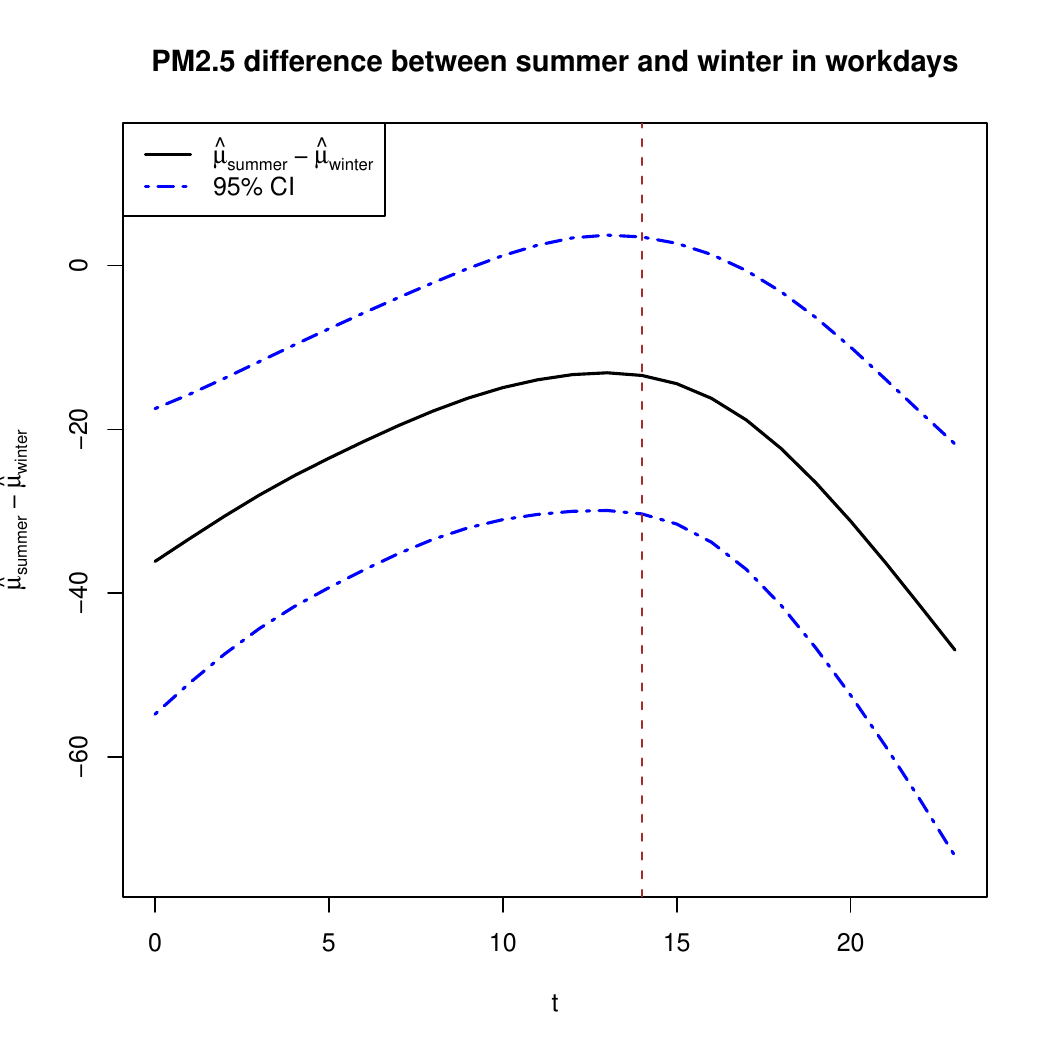}
    \caption{Estimated mean differences (black solid line) and $95\%$ confidence intervals (blue dashed lines) of hourly readings of PM$_{2.5}$ between winter and summer in Huairou. The vertical brown dash line indicates 14:00 at the local time.}
    \label{fig:airPollutantsEst}
\end{figure}
To analyze air pollution patterns in Huairou, we start by consolidating the daily data into weekly averages, reducing data dependency. Our analysis is  focused on workday data, from Monday to Friday, specifically during the summer months (June, July, and August) and winter months (December, January, February). From this weekly dataset, we calculate average hourly readings. As a result, for each week within these summer or winter periods, the PM$_{2.5}$ data comprise 24 hourly averages.

Lastly, we `sparsify' the dataset by randomly selecting between 2 and 10 observations per week, with each number being equally likely. Figure \ref{fig:airPollutants} illustrates these sparsified PM$_{2.5}$ readings for workdays, showcasing data from ten randomly selected weeks in both the summer and winter seasons. Notably, the PM$_{2.5}$ levels during certain hours in the winter months appear to be exceptionally high.

Figure \ref{fig:airPollutantsEst} displays the estimated mean difference in PM$_{2.5}$ levels between summer and winter, along with 95\% pointwise confidence intervals. Notably, this difference is more pronounced during nighttime, which could be due to the prevalent use of coal for heating in the winter. As the day progresses, the PM$_{2.5}$ disparity gradually lessens, suggesting that daytime human activities become a more significant source of PM$_{2.5}$ emissions. As expected, a global test confirms a significant difference in the average PM$_{2.5}$ levels between summer and winter, aligning with the findings in \cite{LiangJRSSA2015}.

\section{Concluding Remarks}\label{sec:conclusion}

In this paper, we introduce a statistical framework designed to identify mean differences between two groups of functional data, particularly in situations where each subject has only a limited number of irregularly spaced observations. Unlike typical fPCA-based approaches, our method does not require the assumption of a common covariance structure across two groups.

Specifically, under the RKHS framework, we build a functional Bahadur representation for the mean function estimator. This estimator is constructed by applying smoothing splines to the aggregated data within each group. The functional Bahadur representation facilitates the development of  pointwise limiting distributions and the weak convergence of the mean function estimator for each group. Subsequently, we present a testing procedure for the mean difference, as the functional data from these two groups are independent. 

It is worth emphasizing that the theoretical results obtained in this paper are in a distribution-wise manner. Specifically, the type-I error is controlled for sufficient large sample sizes under a fixed probability distribution. However, these asymptotic results generally fail to guarantee that the type-I error is well-controlled across a family of distributions, even when the sample size is large; see \cite{uniform_testing_2004} and \cite{waudbysmith2024arXiv1} for more details. Currently, many distribution-free methods focus on testing conditional independence \citep{conditional_independence_2022_JRSSB, conditional_independence_AOS2020}. In the context of FDA, the development of uniform inference procedures remains an underexplored area. 
It is still unclear what proper assumptions should be imposed to restrict the family of distributions to perform uniform inference for functional data. 
Most of current work in FDA on statistical inference focuses on asymptotic properties for a fixed probability distribution, as seen in \cite{ejs2sample} and \cite{JRSSC2016}. In future work, we plan to study these problems in depth.
%%%%%%%%%%%%%%%%%%%%%%%%%%%%%%%%%%%%%%%%%%%%%%
%% Example with single Appendix:            %%
%%%%%%%%%%%%%%%%%%%%%%%%%%%%%%%%%%%%%%%%%%%%%%
%\begin{appendix}
%\section*{Title}\label{appn} %% if no title is needed, leave empty \section*{}.
%Appendices should be provided in \verb|{appendix}| environment,
%before Acknowledgements.

%If there is only one appendix,
%then please refer to it in text as \ldots\ in the \hyperref[appn]{Appendix}.
%\end{appendix}
%%%%%%%%%%%%%%%%%%%%%%%%%%%%%%%%%%%%%%%%%%%%%%
%% Example with multiple Appendixes:        %%
%%%%%%%%%%%%%%%%%%%%%%%%%%%%%%%%%%%%%%%%%%%%%%
\newpage
\addcontentsline{toc}{section}{Appendices}
\setcounter{lemma}{0}
\setcounter{equation}{0}
\setcounter{figure}{0}
\setcounter{table}{0}
\renewcommand{\thelemma}{\Alph{section}.\arabic{lemma}}
\renewcommand{\theequation}{\Alph{section}.\arabic{equation}}
\renewcommand{\thefigure}{\Alph{section}.\arabic{figure}}
\renewcommand{\thetable}{\Alph{section}.\arabic{table}}

\begin{appendix}
Here is how the appendices are organized.
\begin{enumerate}
    \item[(i)] Appendix \ref{appendix:lemmas_sec3} contains the proofs for the lemmas in Section \ref{sec:theory}. 
    \item[(ii)] The proof of Theorem \ref{thm:FBR}, which establishes the functional Bahadur representation (FBR), is found in Appendix \ref{appendix:proof_thm3.1}. 
    \item[(iii)] Appendix \ref{appendix:proof_thm3.2} presents the proof of Theorem \ref{thm:ptwise_limiting_dist_biased}, discussing  the asymptotic pointwise normality based on FBR.
    \item[(vi)] The proof of Theorem \ref{thm:weak_conv}, addressing the weak convergence results for the mean estimator, is detailed in Appendix \ref{appendix:proof_thm3.4}.
    \item[(v)] The proofs for Theorem in Section 3.6, justifying the validity of the bootstrap procedure.
    \item[(vi)] Appendix \ref{appendix:proof_extra_technial_lemmas} includes various lemmas that provide technical support for the proofs.
    \item[(vii)] Lastly, Appendix \ref{appendix:extra_numerical_results} provides results of additional numerical studies.
\end{enumerate} 

Throughout this paper, we represent insignificant constants using $c_1, c_2, c_3, \ldots$. On the other hand, we denote important constants related to a specific variable by subscripting them with a letter, such as $C_{x}$ for constants associated with $x$. For example, $C_K$ represents a constant related to the maximum norm of the kernel function $K$ in a RKHS, where the inner product is defined in \eqref{eq:inner_prod}.

\section{Proofs for the Lemmas in Section \ref{sec:theory}}\label{appendix:lemmas_sec3}
\begin{proof}[Proof of Lemma \ref{prop:func_expression_by_h}]
According to Assumption \ref{assp:Fourier_expansion}, for any $g \in \HS$, 
\begin{align*}
    \norm{g}^2&= \innerproduct{g}{g} \\
    &= \innerproduct{\sum_{u}V(g, h_u)h_u}{\sum_{v}V(g, h_v)h_v}\\
    &= \sum_{u}\sum_v V(g, h_u)V(g, h_v)\innerproduct{h_u}{h_v} \\
    &= \sum_{u}\sum_v V(g, h_u)V(g, h_v)(\delta_{uv} + \lambda \gamma_u \delta_{uv}) \\
    &= \sum_{u}|V(g, h_u)|^2(1 + \lambda\gamma_u).
\end{align*}
Additionally, Assumption \ref{assp:Fourier_expansion} suggests $K_t = \sum_{v}V(K_t, h_v)h_v$ as $K_t \in \HS$. Thus,
\begin{align*}
    h_v(t) &= \innerproduct{K_t}{h_v} \\
    &= \innerproduct{\sum_{u}V(K_t, h_u)h_u}{h_v}\\
    &= \sum_{u}V(K_t, h_u)\innerproduct{h_u}{h_v} \\
    &=V(K_t, h_v)(1 + \lambda \gamma_v).
\end{align*}
Therefore, 
\[
V(K_t, h_v) = h_v(t)/(1 + \lambda\gamma_v) \text{~and~}
K_t = \sum_v \frac{h_v(t)}{1+\lambda\gamma_v}h_v.
\]

Regarding $W_\lambda h_v$,  Assumption \ref{assp:Fourier_expansion} ensures that $$\innerproduct{W_\lambda h_v}{h_u} = \lambda J(h_v, h_u) = \lambda \gamma_v \delta_{vu}.$$ On the other hand, $W_\lambda h_v = \sum_{k}V(W_\lambda h_v, h_k)h_k$. Therefore, 
\begin{align*}
    \innerproduct{W_\lambda h_v}{h_u} &= \innerproduct{\sum_{k} V(W_\lambda h_v, h_k)h_k}{h_u} \\
    &= \sum_k V(W_\lambda h_v, h_k)\innerproduct{h_k}{h_u} \\
    &= V(W_\lambda h_v, h_u)(1 + \lambda \gamma_u).
\end{align*}

In summary,
\begin{equation*}
    W_\lambda h_v = \sum_{u} V(W_\lambda h_v, h_u)h_u = \sum_u\frac{\lambda \gamma_v \delta_{vu}}{1+\lambda \gamma_u} h_u = \frac{\lambda \gamma_v}{1 + \lambda \gamma_v}h_v.
\end{equation*}
\end{proof}
Leveraging the results provided in the previous lemma, we can determine the upper bounds for various critical functions outlined in Lemma \ref{prop:K_and_Wf_bound}. This step is crucial in simplifying our forthcoming development of the FRB. 

\begin{proof}[Proof of Lemma \ref{prop:K_and_Wf_bound}]
According to Assumption \ref{assp:Fourier_expansion}, for any $f \in \HS$, we have $f = \sum_{v} V(f, h_v)h_v$. Therefore, in conjunction with Lemma \ref{prop:func_expression_by_h}, 
\begin{equation*}
 W_\lambda f = \sum_{v}V(f, h_v)W_\lambda h_v = \sum_v V(f, h_v)\frac{\lambda \gamma_v}{1 + \lambda \gamma_v}h_v.
\end{equation*}
Subsequently, 
\begin{align*}
    \norm{W_\lambda f}^2 &= \innerproduct{W_\lambda f}{W_\lambda f}\\
    &= \innerproduct{\sum_u V(f, h_u)\frac{\lambda \gamma_u}{1 + \lambda \gamma_u}h_u}{\sum_v V(f, h_v)\frac{\lambda \gamma_v}{1 + \lambda \gamma_v}h_v} \\
    &= \sum_u |V(f, h_u)|^2 \frac{\lambda^2\gamma_u^2}{1 + \lambda \gamma_u}.
\end{align*}
Furthermore, 
\begin{align*}
    \innerproduct{W_\lambda f}{f} &= \innerproduct{\sum_u V(f, h_u)\frac{\lambda \gamma_u}{1 + \lambda \gamma_u} h_u}{\sum_v V(f, h_v)h_v} \\
    &= \sum_u |V(f, h_u)|^2 \lambda \gamma_u.
\end{align*}
Since $\lambda$ and $\gamma_u$'s are all positive, we can then arrive at this conclusion, $$\norm{W_\lambda f}^2 \leq \innerproduct{W_\lambda f}{f}=\lambda J(f, f),\text{~for any~}f \in \HS.$$

Finally, as per Assumption \ref{assp:Fourier_expansion}, $\gamma_v \asymp v^{2m}$, implying  the existence of constants $c_0$ and $V_1$ such that 
$$\gamma_v \geq c_0 v^{2m}\text{~for all~}v \geq V_1.$$ Furthermore, we can identify constants $c_1$, $c_2$, and $V_2$ such that 
$$c_2 \geq \lambda V_2^{2m} \geq c_1.$$ Let's define $V= \max(V_1, V_2)$. Therefore, we can establish the following upper bound for the the norm of $K_t$:
\begin{align*}
    \norm{K_t}^2 &= \innerproduct{K_t}{K_t} \\
    &= \innerproduct{\sum_{v} \frac{h_v(t)}{1+\lambda\gamma_v}h_v(\cdot)}{\sum_{v} \frac{h_u(t)}{1+\lambda\gamma_u}h_u(\cdot)} \\
    &= \sum_{v}\frac{h_v^2(t)}{1 + \lambda \gamma_v} \\
    & \leq \left(\sup _{v\in \N} \norm{h_v}_{\sup}\right)^2\sum_{v}\frac{1}{1 + \lambda\gamma_v^{2m}}\\
    & \leq \left(\sup _{v\in \N} \norm{h_v}_{\sup}\right)^2 \left(\sum_{v=1}^{V}\frac{1}{1 + \lambda \gamma_v^{2m}} + \sum_{v=V}^{\infty}\frac{1}{1 + \lambda \gamma_v^{2m}}\right)\\
    & \leq \left(\sup _{v\in \N} \norm{h_v}_{\sup}\right)^2 \left(V + \sum_{v=V}^{\infty} \frac{1}{\lambda c_0 v^{2m}}\right)\\
    & \leq \left(\sup _{v\in \N} \norm{h_v}_{\sup}\right)^2 \left(V + \frac{1}{\lambda}\frac{c_4}{V^{2m-1}}\right) \\
    & \leq \left(\sup _{v\in \N} \norm{h_v}_{\sup}\right)^2 c_5 V^{-1} \\
    & \leq C_K^2 h^{-1}.
\end{align*}
\end{proof}
\begin{proof}[Proof of Lemma \ref{lem:exist_of_eigensys}]
This is a direct consequence of Proposition 2.2 as stated in \cite{zfsannals13}.
\end{proof}

\begin{proof}[Proof of Lemma \ref{lem:concen_inequa}]
This proof is directly derived from the proof of Lemma 3.2 presented in \cite{zfsannals13}.
\end{proof}

The upcoming proof is a vital step in deriving the FBR, which is a first-order approximation for the smoothing-spline mean estimator from \eqref{eq:mean_optimization}. Let's introduce some key notations. Recall that $D\ell(f, \lambda)\Delta g = \innerproduct{S_{M, \lambda}(f)}{\Delta g}$ where 
\begin{align*}
    S_{M, \lambda}(f) &= -\frac{1}{M}\sum_{i=1}^{n}\sum_{j=1}^{N_i}\{Y_{ij} - f(T_{ij})\}K_{T_{ij}} + W_\lambda f \in \HS\\
    &\triangleq S_M(f) + W_\lambda f.
\end{align*}
Furthermore, we define 
$$\E\{S_{M, \lambda}(f)(t)\} \triangleq S_{\lambda}(f)(t)\text{~and~}\E\{S_{M}(f)(t)\} \triangleq S(f)(t)\text{~for all~}t \in \T.$$ 
For ease of notation, let's define the following integrals:
\begin{equation}\label{eq:I_and_G_intgrals}
\begin{aligned}
     I_{t,s} &= \int_{t,s \in \T}c(t,s)\pi(t)\pi(s)dtds,\\
    I_{t, t} &= \int_{t \in \T}c(t,t)\pi(t)\pi(t)dtds,\\
    \G_1(s, t) &= \int_{(u, v)\in \T^2}c(u,v)K(u, t)K(v,s)\pi(u)\pi(v)dudv,\\
    \G_2(s, t) &= \int_{\T}c(u,u)K(u, t)K(u,s)\pi(u)du,\\
    \G_3(s,t) &= \int_{\T} K(u,t)K(u,s)\pi(u)du.
\end{aligned}
\end{equation}
It's evident that $\abs{I_{t,s}}$ and $\abs{I_{t,t}}$ have upper bounds, as $\pi(t)$ and $c(t,s)$ have upper bounds by Assumptions \ref{assp:data_gm} and \ref{assp:smooth_of_X}, respectively.

\begin{proof}[Proof of Lemma \ref{lem:convergence}]
The proof consists of the following two parts. 
\begin{enumerate}
    \item[1.]The first part aims to find $f_\lambda$ s.t. $S_\lambda(f_\lambda) = 0$ where 
    $$S_\lambda(f)(t) = \E\{S_{M,\lambda}(f)(t)\}\text{~for all~}t \in \T.$$
    It should be noted that $S_{M,\lambda}(f)$, as per the Riesz representation theorem, is the unique representer determined by the first Fr\'{e}chet derivative of the loss function $\ell(f, \lambda)$ in \eqref{eq:mean_optimization}. Here $S_\lambda(f_\lambda)$ represents the expectation of this representer. This implies that $f_\lambda$ can serve as an estimator when ``$n\to\infty$''. The discrepancy between $f_\lambda$ and the true mean function $\mu$ reflects the bias introduced by the penalty term in the model outlined in \eqref{eq:model}. 
    \item[2.]The second part aims to bound the distance between $f_\lambda$ and the finite-sample mean estimator $\hat{\mu}_\lambda$. This distance illustrates the impact of having only a finite number of data points.
\end{enumerate}

Both parts utilize the following Banach contraction mapping theorem (CMT):

\textit{Suppose a mapping $T: \mathcal{X} \to \mathcal{X}$ satisfies 
$$\norm{T(x) - T(y)} \leq k\norm{x-y}\text{~for some~}k \in [0, 1),$$ 
where $\mathcal{X} \subseteq \HS$ is a closed set. Then, there is a unique $x_0$ s.t. $T(x_0) = x_0$.}

For more details on the CMT, readers are directed to \cite[Chapter 9, The Contraction Principle]{rudin1976principles}. 

\begin{proofpart}{1}{} Let us first prove the existence of $f_\lambda$. \\

Denote $B_1 \triangleq B_1(r_{1}) := \{g \in \HS \mid \norm{g} \leq r_{1}\}$ where 
$$r_{1} = \sqrt{\lambda\{J(\mu, \mu) + 1\}} = h^{m}\sqrt{\{J(\mu, \mu) + 1\}}$$ and $T_1(\delta) = \delta - S_{\lambda}(\mu + \delta)$ for any $\delta \in \HS$. By the triangle inequality, we have 
$$\norm{T_1(\delta)} = \norm{\delta - S_\lambda(\mu + \delta)} \leq  \norm{\delta - S_\lambda(\mu + \delta) + S_\lambda(\mu)} + \norm{S_\lambda(\mu)}.$$ 
Note that, $\norm{S_{\lambda}(\mu)} = \norm{S(\mu) + W_\lambda\mu}$ where 
\begin{align*}
    S(\mu) &= -\E[\{Y - \mu(T)\}K_T] \\
    &= -\E\left(\E\Big[\{Y - \mu(T)\}K_T~\big|~T\Big]\right)\\ 
    &= 0.
\end{align*}
Due to Lemma \ref{prop:K_and_Wf_bound}, we have
\begin{equation}\label{eq:S_lambda_mu_norm_eq1}
    \norm{S_\lambda(\mu)} = \norm{W_\lambda \mu} \leq \sqrt{\lambda J(\mu, \mu)} < r_{1}.
\end{equation}
Additionally, $DS_\lambda(\mu)\delta = \delta$ and $D^{2}S_\lambda(\mu + ss^{'} \delta) = 0$ as $DS_\lambda(f) = \mathrm{id}$ for any $f \in \HS$ by Lemma \ref{prop:identity_op}. Thus,
\begin{equation} \label{eq:S_lambda_mu_norm_eq2}
    \norm{\delta - S_\lambda(\mu + \delta) + S_\lambda(\mu)} = \norm[\Big]{\delta - (DS_\lambda(\mu)\delta + \int_{\T}\int_{\T}sD^{2}S_\lambda(\mu + ss'\delta)\delta\delta dsds')} = 0.
\end{equation}
Combining \eqref{eq:S_lambda_mu_norm_eq1} and \eqref{eq:S_lambda_mu_norm_eq2}, we can establish that $T_1(B_1) \subseteq B_1$ for every $\delta \in B_1$.\\

To conclude that $T_1$ is a contraction, we consider the following: 
\begin{align*}
    T_1(\delta_1) - T_1(\delta_2) &= \delta_1 - \delta_2 - S_\lambda(\mu + \delta_1) + S_\lambda(\mu + \delta_2) \\
    &= \delta_1 - \delta_2 + \E\left[\left\{Y - (\mu + \delta_1)(T)\right\}K_{T}\right] - W_\lambda(\mu + \delta_1) \\
    &- \E\left[\left\{Y - (\mu + \delta_2)(T)\right\}K_{T}\right] + W_\lambda(\mu + \delta_2)\\
    &= \delta_1 - \delta_2 + \E\left[\left\{(\delta_2 - \delta_1)(T)\right\}K_{T}\right] + W_\lambda(\delta_2 - \delta_1).
\end{align*}
Therefore, for any $g \in \HS$, 
\begin{align*}
&~~~~\innerproduct{T_1(\delta_1) - T_1(\delta_2)}{g} \\
&= \innerproduct{\delta_1 - \delta_2}{g} + \E\left[\left\{(\delta_2 - \delta_1)(T)\right\}\innerproduct{K_{T}}{g}\right] + \innerproduct{W_\lambda(\delta_2 - \delta_1)}{g} \\
&= \innerproduct{\delta_1 - \delta_2}{g} + \innerproduct{\delta_2 - \delta_1}{g}\\
& =0.
\end{align*}
The calculations above indicate that 
$$T_1(\delta_1) - T_1(\delta_2) = 0\text{~for any~}\delta_1, \delta_2 \in B_1.$$ Thus, $T_1$ is a contraction defined on $B_{1}$.\\

Concluding Part 1, we apply the CMT, leading to the assertion that there exits a $\delta_\lambda$ such that $T_1(\delta_\lambda) = \delta_\lambda$. This implies $S_\lambda(\mu + \delta_\lambda) = 0$. Defining $f_\lambda = \mu + \delta_\lambda$,  we can then conclude that $$\norm{\mu - f_\lambda} \leq r_{1} = O(h^{m}).$$
\end{proofpart}

\begin{proofpart}{2}{}
We will define a bounded linear operator $T_2$ such that $$T_2(\delta) = \delta - S_{M, \lambda}(f_\lambda + \delta)\text{~for any~}\delta \in \HS.$$ 
It is apparent that 
$$\norm{T_2(\delta)} \leq \norm{S_{M, \lambda}(f_\lambda)} + \norm{\delta - S_{M, \lambda}(f_\lambda + \delta) + S_{M, \lambda}(f_\lambda)},$$ 
with $f_\lambda$ derived from Part 1. 

To establish an appropriate radius of $B_{2} \triangleq B(r_{2})$ so that $T_2$ is a contraction on $B_2$, we will consider $\E\{\norm{S_{M, \lambda}(f_\lambda)}^2\}$. Recalling $\E\{S_{M, \lambda}(f)\} = S_{\lambda}(f)$, we define 
$$O_{ij} = \left\{Y_{ij} - f_\lambda(T_{ij})\right\}K_{T_{ij}}\text{~and~}D_{ij} =O_{ij} - \E(O_{ij}).$$ Since $S_{\lambda}(f_\lambda) = 0$ as shown in Part 1, we proceed with the following
\begin{align*}
    &\hspace{1.2em} \E\{\norm{S_{M, \lambda}(f_\lambda)}^2\} \\
    &= \E\{\norm{S_{M, \lambda}(f_\lambda) - S_{\lambda}(f_\lambda)}^2\}\\
    &= \E\{\norm{S_{M}(f_\lambda) - S(f_\lambda)}^2\}\text{ where } S(f) \text{ is calculated in } \eqref{eq:S(f)}\\
    &= \E\left[\norm[\bigg]{\frac{1}{M}\sum_{i=1}^{n}\sum_{j=1}^{N_i}\Big\{O_{ij} - \E(O_{ij})\Big\}}^2\right]\\
    &= \E\sbrac[\bigg]{\frac{1}{M^2}\norm[\Big]{\sum_{i=1}^n\brac{\mathcal{W}_i - \E(\mathcal{W}_i)}}^2}\\
    &\leq \E\brac{\frac{n}{M^2}\norm[\big]{\mathcal{W}_1 - \E(\mathcal{W}_1)}^2}\\
    &\leq \E\brac{\frac{n}{M^2}\norm{\mathcal{W}_1}^2}, \label{eq:simplified_E(S_M)}\numberthis
\end{align*}
where $\mathcal{W}_i = \sum_{j=1}^{N_i}O_{ij}$ for $i=1, 2, \ldots, n$. The expectation of this norm is calculated as follows: 
\begin{align*}
    \E\left(\norm[\Big]{\sum_{j=1}^{N_1}O_{1j}}^2\right) &= \sum_{k, l}\E\left\{\innerproduct{O_{1k}}{O_{1l}} \right\}\\
    &= \sum_{k,l}\E\Big[\{Y_{1k} - f_\lambda(T_{1k})\}\{Y_{1l} - f_\lambda(T_{1l})\}\innerproduct{K_{T_{1k}}}{K_{T_{1l}}}\Big]\\
    &\leq \underset{t \in \T}{\sup}\norm{K_t}^2 \sum_{k,l}\E\Big[\{Y_{1k} - f_\lambda(T_{1k})\}\{Y_{1l} - f_\lambda(T_{1l})\}\Big], \numberthis
\end{align*}
where the expectation in the final step can be reformulated as follows
\begin{align}
    & ~~~~\E\big[\{Y_{1k} - f_\lambda(T_{1k})\}\{Y_{1l} - f_\lambda(T_{1l})\}\big] \nonumber\\
    &= \E\big[\{Y_{1k} - \mu(T_{1k}) + \mu(T_{1k}) - f_\lambda(T_{1k})\}\{Y_{1l} - \mu(T_{1l}) + \mu(T_{1l}) - f_\lambda(T_{1l})\}\big]\nonumber\\
    &= \E[\{Y_{1k} - \mu(T_{1k})\}\{Y_{1l} - \mu(T_{1l})\}] + \E[\{Y_{1k} - \mu(T_{1k})\}\{\mu(T_{1l}) - f_\lambda(T_{1l})\}] \nonumber\\
    &+ \E[\{Y_{1l} - \mu(T_{1l})\}\{\mu(T_{1k}) - f_\lambda(T_{1k})\}]\nonumber\\
    &+ \E[\{\mu(T_{1k}) - f_\lambda(T_{1k})\}\{\mu(T_{1l}) - f_\lambda(T_{1l})\}]. 
\end{align}
Observe that both terms involving cross products are zero. For instance,
\begin{align*}
    &~~~~\E[\{Y_{1k} - \mu(T_{1k})\}\{\mu(T_{1l}) - f_\lambda(T_{1l})\}]\\
    &= \E\Big(\E[\{Y_{1k} - \mu(T_{1k})\}\{\mu(T_{1l}) - f_\lambda(T_{1l})\}\mid T]\Big) \\
    &=\E\Big(\{\mu(T_{1l}) - f_\lambda(T_{1l})\}\E\{Y_{1k} - \mu(T_{1k})\mid T\}\Big) \\
    &= 0. \numberthis
\end{align*}

Additionally,
$$\E[\{\mu(T_{1k}) - f_\lambda(T_{1k})\}\{\mu(T_{1l}) - f_\lambda(T_{1l})\}] = \E^2\{\mu(T) - f_\lambda(T)\}$$ because $T_{1k}$ and $T_{1l}$ are i.i.d. for all observations regardless of whether they are from the same or different subjects. Furthermore, based on the conclusions from Part 1, we can establish the following results:
\begin{align*}
    \E^2\{\mu(T) - f_\lambda(T)\} &\leq \E[\{\mu(T) - f_\lambda(T)\}^2]\\
    & = V(\mu - f_\lambda, \mu - f_\lambda)\\
    &\leq \norm{\mu - f_\lambda}^2 \\
    &= r_{1N}^2. \numberthis
\end{align*}

The remaining term to consider is $\E[\{Y_{1k} - \mu(T_{1k})\}\{Y_{1l} - \mu(T_{1l})\}]$. Note that,
\begin{align*}
    &~~~~\E[\{Y_{1k} - \mu(T_{1k})\}\{Y_{1l} - \mu(T_{1l})\}]\\ &= \E\left(\E\Big[\{Y_{1k} - \mu(T_{1k})\}\{Y_{1l} - \mu(T_{1l})\}\mid T_{1k}, T_{1l}\Big]\right)\\
    &= \E\{c(T_{1k}, T_{1l}) + \delta_{kl}\sigma_{\eps}^2\}\\
    &= \iint_{t,s \in \T} c(t, s)\pi (t)\pi (s)dtds + \sigma_{\eps}^2\delta_{kl} \\
    &= I_{t,s} + \sigma_{\eps}^2\delta_{kl}\\
    &< \infty, \numberthis \label{eq:cov_dist}
\end{align*}
where $c(t,s)$ is the covariance function of $X(t)$, and $\delta_{kl} =1$ if $k=l$ and $0$ otherwise.

Combining the results from equations \eqref{eq:simplified_E(S_M)} - \eqref{eq:cov_dist}, we have
\begin{align*}
    \E\left(\norm[\Big]{\sum_{j=1}^{N_1}O_{ij}}^2\right) \leq \underset{t\in \T}{\sup}\norm{K_t}^2\Big[& N_1^2 \E^2\{\mu(T) - f_\lambda(T)\} + N_1 \sigma^2_{\eps} \nonumber \\
    &+ N_1\E\brac{c(T, T)}+ (N_1^2 - N_1)\E\brac{c(T_1, T_2)}\Big].
\end{align*}
Additionally, we have
\begin{align*}
    &~~~~\E\left\{\frac{n}{M^2} \E\Big(\norm{\mathcal{W}_1}^2\Big)\right\} \\
    &\leq \E\left\{\underset{t \in \T}{\sup}\norm{K_t}^2\frac{n}{M^2}\left\{N_1^2r_{1N}^2 + N_1\sigma^2_{\eps} + N_1I_{t, t} + (N_1^2 - N_1)I_{t,s}\right\}\right\}\\
    &\leq \frac{nC_K^2h^{-1}}{n^2}\left\{(\sigma_N^2 + \mu_N^2)h^{2m} + \mu_N\sigma_{\eps}^2 + \mu_N I_{t,t} + (\sigma_N^2 + \mu_N^2 - \mu_N)I_{t,s}\right\}\\
    &= O((nh)^{-1}),
\end{align*}
where the final step is based on the established bounds for $I_{t,t}$ and $I_{t,s}$. Consequently, this leads to the following:
$$\E\{\norm{S_{M,\lambda}(f_\lambda)}^2\} = O((nh)^{-1}),$$ which implies $\norm{S_{M,\lambda}(f_\lambda)}^2 = O_{p}((nh)^{-1})$. \\

This result enables us to choose a radius of $B_2$, 
$$B_2(r_{2}) \equiv \{f\in \HS \mid \norm{f} \leq r_{2}\}\text{~where~}r_{2} = 2C_{B_2}(nh)^{-1/2}.$$ 
Therefore, we ascertain that, with a probability approaching one,
$$\norm{S_{M,\lambda}(f_\lambda)} < C_{B_2}(nh)^{-1/2}.$$ 

 The next step is to handle $\norm{\delta - S_{M, \lambda}(f_\lambda + \delta) + S_{M, \lambda}(f_\lambda)}$. As in Part 1, using Taylor expansion results in
\begin{equation*}
\begin{aligned}
    \norm{\delta - S_{M, \lambda}(f_\lambda + \delta) + S_{M, \lambda}(f_\lambda)} \leq & \norm{\delta - DS_{M, \lambda}(f_\lambda)\delta} + \\
    &\norm[\bigg]{\iint_{s, s^{'} \in \T}sD^{2}S_{M, \lambda}(f_\lambda + ss'\delta)\delta\delta dsds'}.
\end{aligned}
\end{equation*}
The latter term becomes zero as $D^{2}S_{M, \lambda}(f) = 0$ for any $f \in \HS$. Observing that $DS_{\lambda}(f_\lambda) = \mathrm{id}$, we can then rewrite this as
\begin{align*}
    \norm{\delta - DS_{M, \lambda}(f_\lambda)\delta} &= \norm[\Big]{DS_\lambda(f_\lambda)\delta - DS_{M, \lambda}(f_\lambda)\delta} \\
    & = \norm[\bigg]{\frac{1}{M}\sum_{i=1}^{n}\sum_{j=1}^{N_i}\left[\delta(T_{ij})K_{T_{ij}} - \E  \{\delta(T)K(T)\}\right]}. \numberthis\label{eq:diff_in_empirical_process}
\end{align*}
Let $\psi_n(T, \tilde{\delta}) = C_{K}^{-1}h^{1/2}\tilde{\delta}(T)$ where $\tilde{\delta}(T)= C_K^{-1}h^{1/2}\norm{\delta}^{-1}\delta(T)$ for any $\delta \in B_{2}$. Note that,
\begin{align*}
    \norm{\tilde{\delta}}_{\sup} &= C_K^{-1}h^{1/2}\norm{\delta}^{-1}\norm{\delta}_{\sup}\\
    &\leq C_K^{-1}h^{1/2}\norm{\delta}^{-1} C_K h^{-1/2}\norm{\delta} \leq 1,
\end{align*}
where the final line employs Lemma \ref{prop:sup_norm}. Additionally, $$J(\tilde{\delta}, \tilde{\delta}) \leq \lambda^{-1}\norm{\tilde{\delta}}^2 = \lambda^{-1}C_K^{-2}h\norm{\delta}^{-2}\norm{\delta}^2 \leq \lambda^{-1}C_K^{-2}h.$$ Thus, $\tilde{\delta}(T) \in \G$. Furthermore,
\begin{equation*}
    \abs{\psi_n(T, \tilde{\delta}_1) - \psi_n(T, \tilde{\delta}_2)} \leq C_K^{-1}h^{1/2}\norm{\tilde{\delta}_1-\tilde{\delta}_2}_{\sup},
\end{equation*}
which satisfies the Lipschitz condition in Lemma \ref{lem:concen_inequa}. Therefore, for any $\delta \in B_{2}$ and sufficiently large $n$, we have the following result:
\begin{equation*}
\begin{aligned}
    &~~~~\norm[\Big]{\sum_{i=1}^{n}\sum_{j=1}^{N_i}\psi_n(T_{ij}, \tilde{\delta})K_{T_{ij}} - \E\{\psi_n(T, \tilde{\delta})K(T)\}} \\
    &\leq c_1\brac{M^{1/2}h^{-\frac{2m-1}{4m}} + \left(\frac{M}{n}\right)^{1/2}}(5\log\log M)^{1/2}.
\end{aligned}
\end{equation*}
Note that, $\psi_n(T, \tilde{\delta}) = C_K^{-2}\norm{\delta}^{-1}h\delta(T)$. Therefore,
\begin{align*}
    (\ref{eq:diff_in_empirical_process}) &\leq \sqrt{5}c_1C_K^2(Mh)^{-1}\brac{h^{-\frac{2m-1}{4m}}M^{1/2} + \left(\frac{M}{n}\right)^{1/2}}(\log\log M)^{1/2} \norm{\delta} \\
    &\leq o_p(1)\norm{\delta}.
\end{align*}

Therefore, $T_2(B_2) \subseteq B_2$ with a probability that approaches one as sample size increases infinitely. To conclude Part 2, it's necessary to verify that $T_2$ is a contraction. Note that,
\begin{align*}
    &~~~~S_{M, \lambda}(f_\lambda + \delta_1) - S_{M, \lambda}(f_\lambda + \delta_2)\\
    &= S_{M,\lambda}(f_\lambda) + DS_{M,\lambda}(f_\lambda)\delta_1 + \int_{0}^{1}\int_{0}^{1}sD^{2}S_{M, \lambda}(f_\lambda + ss'\delta_1)\delta_1\delta_1 dsds'\\
    &- \left\{S_{M,\lambda}(f_\lambda) + DS_{M,\lambda}(f_\lambda)\delta_2 + \int_{0}^{1}\int_{0}^{1}sD^{2}S_{M, \lambda_2}(f_\lambda + ss'\delta_2)\delta_2\delta_2 dsds'\right\}\\
    &= DS_{M,\lambda}(f_\lambda)(\delta_1 - \delta_2).
\end{align*}
For any $\delta_1, \delta_2 \in B_2$. If $\delta_1$ and $\delta_2$ are identical, then $\norm{T_2(\delta_1) - T_2(\delta_2)}$ equals zero. Otherwise, when $\norm{\delta_1 -\ \delta_2} > 0$,
\begin{align*}
    \norm{T_2(\delta_1) - T_2(\delta_2)} &= \norm{\delta_1 - \delta_2 - \{S_{N, \lambda}(f_\lambda + \delta_1) - S_{M, \lambda}(f_\lambda + \delta_2)\}}\\
    &\leq \norm{(\delta_1 - \delta_2) - DS_{M, \lambda}(f_\lambda)(\delta_1 - \delta_2)}\\
    &= \norm{DS_\lambda(f_\lambda)(\delta_1 - \delta_2) - DS_{M, \lambda}(f_\lambda)(\delta_1 - \delta_2)}.
\end{align*}
The same arguments used for equation (\ref{eq:diff_in_empirical_process}) apply here. Let $u = \delta_1 - \delta_2$ and $\tilde{u} = C_{K}^{-1}h^{1/2}\norm{u}^{-1}u(T)$, while keeping $\psi(T, f)$ unchanged. It is straightforward to verify that $\tilde{u} \in \G$. Therefore,
\begin{align*}
    &~~~\norm{u - DS_{M, \lambda}(f_\lambda)u} \\
    &\leq \sqrt{5}c_2C_K^2(Mh)^{-1}\brac{h^{-\frac{2m-1}{4m}}M^{1/2} + \left(\frac{M}{n}\right)^{1/2}}(\log\log M)^{1/2} \norm{u}\\
    & = o_p(1)\norm{u}.
\end{align*}
This suggests that, in probability, $T_2$ is indeed a contraction mapping on $B_2$. 

Consequently, by applying the fixed point theorem, it follows that there exists a $\delta_{M, \lambda} \in B_2$ such that $T_2(\delta_{M,\lambda}) = \delta_{M, \lambda}$. This means that $S_{M, \lambda}(f_\lambda + \delta_{M, \lambda}) = 0$. Additionally, $S_{M, \lambda}(\hat{\mu}_\lambda) = 0$. Therefore, we conclude that $f_\lambda + \delta_{M, \lambda} = \hat{\mu}_\lambda$, which implies $\norm{f_\lambda - \hat{\mu}_\lambda} \leq r_{2} = O((nh)^{-1})$.
\end{proofpart}

To summarize, {Part 1} establishes that $\norm{f_\lambda - \mu} = O(h^{m})$, and {Part 2} demonstrates that $\norm{f_\lambda - \hat{\mu}} = O_p((nh)^{-1})$, where $\mu$ is the true mean function. Combining these two results with the triangle inequality leads to 
$$\norm{\hat{\mu} - \mu} = O_p(h^{m} + (nh)^{-1/2}).$$ 
In particular, the optimal convergence rate is achieved when $h = O(n^{1/(2m+1)})$, or equivalently, when $\lambda = O(n^{2m/(2m+1)})$.
\end{proof}

\section{Proof of Theorem \ref{thm:FBR}}\label{appendix:proof_thm3.1}

We can now establish a key theorem for this paper, the \textit{functional Bahadur representation}. This theorem provides a first-order approximation of the mean estimator.
\begin{proof}[Proof of Theorem \ref{thm:FBR}]
Let's represent the true mean function as $\mu$, and denote $\hat{\mu}_\lambda$ as the estimator derived from equation \eqref{eq:mean_optimization}. Additionally, define $\mu_\delta = \hat{\mu}_\lambda - \mu$. First, notice that
\begin{equation}\label{eq:first_order_diff}
\begin{aligned}
&~~~~S_{M,\lambda}(\hat{\mu}_\lambda) - S_{\lambda}(\hat{\mu}_\lambda) - \{S_{M,\lambda}(\mu) - S_{\lambda}(\mu)\} \\ 
    &= S_{M}(\hat{\mu}_\lambda) - S(\hat{\mu}_\lambda) - \{S_{M}(\mu) - S(\mu)\}.
\end{aligned}
\end{equation}
Given that $S_{M,\lambda}(\hat{\mu}_\lambda) = 0$, the left-hand-side of \eqref{eq:first_order_diff} becomes
\begin{align*}
    \text{LHS of } \eqref{eq:first_order_diff} &= - S_\lambda(\mu) - DS_{\lambda}(\mu)\mu_\delta - \iint_{s, s' \in \T}sD^{2}S_{\lambda}(\mu + ss'\mu_\delta)\mu_\delta\mu_\delta dsds'\\
    &\hspace{1.1em}-\{S_{M,\lambda}(\mu) - S_{\lambda}(\mu)\}\\
    &= -\left\{DS_{\lambda}(\mu)\mu_\delta + S_{M, \lambda}(\mu)\right\} \\
    &= -\{\mu_\delta + S_{M, \lambda}(\mu)\},
\end{align*}
where $D^{2}S_{\lambda}(f) = 0$ as $DS_{\lambda}(f) = \mathrm{id}$ for any $f \in \HS$. 

The calculation above suggests that to complete the proof, it is sufficient to establish an upper bound for the norm of the right-hand-side of \eqref{eq:first_order_diff}. Note that,
\begin{align*}
    \text{RHS of }\eqref{eq:first_order_diff} &= -\frac{1}{M}\sum_{i=1}^{n}\sum_{j=1}^{N_i}\left[\{Y_{ij} - \hat{\mu}_\lambda(T_{ij})\}K_{T_{ij}} -  \{Y_{ij} - {\mu}(T_{ij})\}K_{T_{ij}}\right]\\
    &- \big(\E[\{Y-\hat{\mu}_\lambda(T)\}K_T] - \E[\{Y-\mu(T)\}K_T]\big)\\
    &= \frac{1}{M}\sum_{i=1}^{n}\sum_{j=1}^{N_i}\big[\mu_\delta(T_{ij})K_{T_{ij}} - \E\{\mu_\delta(T)K_{T}\}\big]. \label{eq:first_order_diff_woPenalty_simplified}\numberthis
\end{align*}
Therefore, we can utilize the same reasoning as in Lemma \ref{lem:convergence}. Specifically, define 
$$\tilde{\mu}_\delta(T) = C_K^{-1}h^{1/2}\norm{\mu_\delta}^{-1}\mu_\delta(T)\text{~and~}\psi(T, f) = C_K^{-1}h^{1/2}f(T).$$ It's straightforward to verify that $\tilde{\mu}_\delta \in \G$, and $\psi(T, f)$ satisfies the Lipschitz condition outlined in Lemma \ref{lem:concen_inequa}. Thus, with a probability approaching one,
\begin{equation*}
\begin{aligned}
    &~~~~\norm[\Big]{\sum_{i=1}^{n}\sum_{j=1}^{N_i}\big[\psi(T_{ij}, \tilde{\mu}_\delta)K_{T_{ij}} - \E\{\psi(T, \tilde{\mu}_\delta)K_{T}\}\big]} \\
    &\leq c_1\brac{M^{1/2}h^{-\frac{2m-1}{4m}} + \left(\frac{M}{n}\right)^{1/2}}(5\log\log M)^{1/2}.
\end{aligned}
\end{equation*}
It's worth noting that $$\psi(T, \tilde{\mu}_\delta) = C_K^{-2}h\norm{\mu_\delta}^{-1}\mu_\delta(T),$$ which implies
\begin{align*}
    \eqref{eq:first_order_diff_woPenalty_simplified} &\leq \sqrt{5}c_1C_K^2(Mh)^{-1}\brac{M^{1/2}h^{-\frac{2m-1}{4m}} + \left(\frac{M}{n}\right)^{1/2}}(\log\log M)^{1/2}\norm{\mu_\delta}.
\end{align*}
We know that $\norm{\mu_\delta} \leq c_2\{h^m + (nh)^{-1/2}\}$ with a probability approaching one, as indicated by Lemma \ref{lem:convergence}. Therefore,
\begin{equation*}
    \eqref{eq:first_order_diff} = O_p(a_n'),
\end{equation*}
where 
\begin{align*}
    a_n' &= \frac{h^m + (nh)^{-1/2}}{Mh}\brac{M^{1/2}h^{-\frac{2m-1}{4m}} + \left(\frac{M}{n}\right)^{1/2}}(5\log\log M)^{1/2}\\
    &\leq O_p(n^{-1/2}h^{-(6m-1)/4m}\{h^m + (nh)^{-1/2}\}(\log \log n)^{1/2})\\
    &= O_p(a_n).
\end{align*}
This concludes the proof for the theorem.
\end{proof}
\section{Proof of Theorem \ref{thm:ptwise_limiting_dist_biased}}\label{appendix:proof_thm3.2}
As discussed in the main text, FBR offers a first-order approximation for the smoothing-spline estimators. Given that $\hat{\mu}_\lambda = \mu + S_{M, \lambda}(\mu) + o_p(1)$ where
$$S_{M, \lambda}(f) = -\frac{1}{M}\sum_{i=1}^{n}\sum_{j=1}^{N_i}\{Y_{ij} - f(T_{ij})\}K_{T_{ij}} + W_\lambda f,$$ we can apply Lindeberg's Central Limit Theorem (CLT) to deduce the limiting distribution at any $t \in \T$. Let's now present the detailed proof.
\begin{proof}[Proof of Theorem \ref{thm:ptwise_limiting_dist_biased}]
Let's consider $r_\delta(t) = \hat{\mu}_{\lambda}(t) - \mu_b(t) + S_M(\mu)(t)$, where $\mu_b(t) = (\mathrm{id} - W_\lambda)\mu(t)$ and 
$$S_M(\mu)(t) = -\frac{1}{M}\sum_{i=1}^{n}\sum_{j=1}^{N_i}\{Y_{ij} - \mu(T_{ij})\}K_{T_{ij}}(t)\text{~for any~}t \in \T.$$ 
We can re-write $r_\delta$ as follows:
\begin{align*}
    r_\delta &= \hat{\mu}_{\lambda} - \mu_b + S_{M}(\mu)= (\hat{\mu}_{\lambda} - \mu) + S_{M, \lambda}(\mu).
\end{align*}
According to Theorem \ref{thm:FBR} and the assumptions in Theorem \ref{thm:ptwise_limiting_dist_biased}, we understand that $\norm{r_\delta} = O_p(a_n) = o_p(n^{-1/2})$. Regarding $S_{M, \lambda}(\mu)$, define $D_{ij}(\mu) \triangleq \{Y_{ij} - \mu(T_{ij})\}K_{T_{ij}}$. Then,
\begin{align*}
    \E\big[\norm{S_{M}(\mu)}^2\big] &= \E\bigg[\norm[\Big]{-\frac{1}{M}\sum_{i=1}^{n}\sum_{j=1}^{N_i}\{Y_{ij} - \mu(T_{ij})\}K_{T_{ij}}}^2\bigg]\\
    &= \E\left\{\frac{1}{M^2}\sum_{i}\sum_{j}\sum_{k}\sum_{l} \innerproduct{D_{ik}(\mu)}{D_{jl}(\mu)}\right\} \\
    &= \E\left[\frac{1}{M^2}\sum_{i,j,k,l}\E\{\innerproduct{D_{ik}(\mu)}{D_{jl}(\mu)}\mid N_i, N_j\}\right]\\
    &\leq \underset{t\in \T}{\sup}\norm{K_t}^2 \E\left(\frac{1}{M^2}\sum_{i,j,k,l}\E[\{Y_{ik} - \mu(T_{ik})\}\{Y_{jl} - \mu(T_{jl})\}]\right)\\
    &= \underset{t\in \T}{\sup}\norm{K_t}^2 \E\left(\frac{1}{M^2}\sum_{i,k,l}\E[\{Y_{ik} - \mu(T_{ik})\}\{Y_{il} - \mu(T_{il})\}]\right)\\
    &= \underset{t\in \T}{\sup}\norm{K_t}^{2}\E\left[\frac{1}{M^2}\sum_{i=1}^{n}\Big \{N_i(N_i -1) I_{t,s} + N_i\big(I_{t,t} + \sigma^2_{\eps}\big)\Big\}\right]\\
    &\leq \underset{t\in \T}{\sup}\norm{K_t}^{2}\E\left\{\frac{n N(N-1)}{M^2} I_{t,s} + \frac{nN}{M}\big(I_{t,t} + \sigma^2_{\eps}\big)\right\} \\
    &\leq C_K^2 (nh)^{-1}\left\{\Big(\mu_N^2 + \sigma_N^2\Big)I_{t,s} + \mu_N(I_{t,t} + \sigma_{\eps}^2)\right\}\\
    &= O((nh)^{-1}).
\end{align*}
Observe that the final step takes advantage of the bounded nature of $I_{t,s}$ and $I_{t,t}$. Thus, $$S_{M}(\mu) = O_p((nh)^{-1/2}),$$ and $a_n = o_p(S_{M}(\mu))$. Additionally, it can be readily verified that
\begin{align*}
    \abs{(nh)^{1/2}r_\delta(t)} &= (nh)^{1/2}\abs{\innerproduct{K_t}{r_\delta}} \\
    &\leq (nh)^{1/2}\norm{K_t}\norm{r_\delta}\\
    &= O_p((nh)^{1/2}h^{-1/2}a_n)\\
    & = o_p(1).
\end{align*}

Therefore,
\begin{align*}
    (nh)^{1/2}\{\hat{\mu}_\lambda(t) - \mu_b(t)\} &= \frac{(nh)^{1/2}}{M}\sum_{i=1}^{n}\sum_{j=1}^{N_i}\{Y_{ij} - \mu(T_{ij})\}K_{T_{ij}}(t) + (nh)^{1/2}r_\delta(t) \\
    &= \frac{(nh)^{1/2}}{M}\sum_{i=1}^{n}\sum_{j=1}^{N_i}\{Y_{ij} - \mu(T_{ij})\}K_{T_{ij}}(t) + o_p(1).
\end{align*}
The limiting distribution of the first term on the RHS of the above equation can be established by the Lindeberg's CLT. Specifically, let 
\begin{align*}
    S_{n}(t) &= (nh)^{1/2}\left[\frac{1}{M}\sum_{i=1}^{n}\sum_{j=1}^{N_i}\{Y_{ij} - \mu(T_{ij})\}K_{T_{ij}}(t)\right]\\
    &= \sum_{i=1}^{n}\frac{(nh)^{1/2}}{M}\sum_{j=1}^{N_i}\{Y_{ij} - \mu(T_{ij})\}K_{T_{ij}}(t) \\
    &\triangleq \sum_{i=1}^{n}\widetilde{\Theta}_i(t).
\end{align*}
It is important to note that $\E\brac{\widetilde{\Theta}_i(t)} = 0$ and the $\widetilde{\Theta}_i(t)$'s are independent. Thus, for any $t\in \T$,
\begin{align*}
    &~~~~\E\brac{\widetilde{\Theta}_i^2(t)}\\
    &= \E\sbrac[\bigg]{\frac{nh}{M^2}\sum_{j=1}^{N_i}\sum_{k=1}^{N_i}\brac{Y_{ij} - \mu(T_{ij})}\brac{Y_{ik} - \mu(T_{ik})}K_{T_{ij}}(t)K_{T_{ik}}(t)}\\
    &=\E\pbrac[\bigg]{\frac{nh}{M^2}\sum_{j,k=1}^{N_i}K_{T_{ij}}(t)K_{T_{ik}}(t)\E\sbrac[\big]{\brac{Y_{ij} - \mu(T_{ij})}\brac{Y_{ik} - \mu(T_{ik})}\mid T_{ij}, T_{ik}}}\\
    &= \E\sbrac[\bigg]{\frac{nh}{M^2}\sum_{j,k=1}^{N_i}K_{T_{ij}}(t)K_{T_{ik}}(t)\brac{c(T_{ij}, T_{ik}) + \delta_{jk}\sigma_{\err}^2}}\\
    &= \E\sbrac[\bigg]{\frac{nh}{M^2}\brac{(N_i^2 - N_i)\G_1(t,t) + N_i\G_2(t,t) + \sigma^2_{\err}N_i\G_3(t,t)}} \\
    &= h\E\brac{\frac{n}{M^2}S_i(t,t)},
\end{align*}
where the final line is based on the assumption stated in Theorem \ref{thm:ptwise_limiting_dist_biased} and Lemma \ref{lem:G_functions}. Thus, $$\Var\brac{{\sum_{i=1}^n {\widetilde{\Theta}_i}(t)}} = h \sum_{i=1}^n \E\brac{n\Sc_i(t,t)/M^2} = \sigma_t^2 + o(1).$$

The application of Lindeberg's conditions to the triangular array $\{\widetilde{\Theta}_i(t)\}_{i=1}^n$ results
\begin{align*}
    &\hspace{1.2em}\frac{1}{\sigma_t^2}\sum_{i=1}^{n}\E\sbrac[\Big]{\widetilde{\Theta}_i^2(t)\bbi_{\brac{\abs{\widetilde{\Theta}_i(t)} > e}}}\\
    &\leq\frac{1}{\sigma_t^2}\sum_{i=1}^{n}\sbrac[\Big]{\E\brac{\widetilde{\Theta}_i^4(t)}}^{1/2}\brac{\p\pbrac{\abs{\widetilde{\Theta}_i(t)} > e}}^{1/2}\\
    &\leq \frac{1}{\sigma_t^2}\sum_{i=1}^{n}\sbrac[\bigg]{\E\brac{\innerproduct{\widetilde{\Theta}_i}{K_t}^4}}^{1/2}\brac{\p\pbrac{\abs{\widetilde{\Theta}_i(t)} > e}}^{1/2}. \numberthis \label{eq:Lindeberg_eq_ptwise_variance}
\end{align*}
Note that, according to Lemma \ref{lem:tilde_Theta_upper_bound} and Lemma \ref{prop:K_and_Wf_bound}, we can establish that
\begin{equation*}
    \E\brac{\innerproduct{\widetilde{\Theta}_i}{K_t}^4} \leq \E\left(\norm{\widetilde{\Theta}}^4\norm{K_t}^4\right) = c_0^2C_K^4(nh)^{-2}.
\end{equation*}
Regarding the probability, it is indicated in Lemma \ref{lem:exp_tail_prob_Theta} that
\begin{align*}
    \p\brac{\sup_{t\in\T} \abs{\widetilde{\Theta}_i}> e} &\leq c_1\log^2(n)\exp\brac{-\frac{c_2(nh)^{1/2}}{\log(n)}} + 2\p\brac{N_i > c_3\log(n)}\\
    &\leq c_1 \log^2(n)\exp\brac{-\frac{c_1(nh)^{1/2}}{\log(n)}} + 2\E\brac{\exp(C_NN_i)}n^{-c_3C_N}.
\end{align*}

Putting all the aforementioned results together and choosing $c_3 > C_N^{-1} + 1$, we can establish that
\begin{align*} 
\eqref{eq:Lindeberg_eq_ptwise_variance} &\leq n \left(c_0C_K^2\frac{1}{nh}\right)\sbrac[\bigg]{c_1
    \log^2(n)\exp\brac{-\frac{c_2(nh)^{1/2}}{\log(n)}} + \frac{2\E\brac{\exp(C_NN_i)}}{n^{c_3C_N}}}^{1/2}\\
    &\leq c_0C_K^2\sbrac[\bigg]{c_1\frac{\log^2(n)}{h^2}\exp\brac{-\frac{c_2(nh)^{1/2}}{\log(n)}} + \frac{2\E\brac{\exp(C_NN_i)}}{n^{c_3C_N}h^2}}^{1/2}\\
    &= o(1),
\end{align*}
given that $c_3C_N>1$ and $\log^2(n)\exp\brac{-(nh)^{1/2}/\log(n)} = o(h^2)$. Thus, for any $t \in \T$,
\begin{equation*}
    (nh)^{1/2}\brac{\hat{\mu}(t) - \mu(t)} \overset{d}{\to} N(0, \sigma^2_t),
\end{equation*}
by the Lindeberg's CLT and the symmetry of the standard normal distribution.
\end{proof}
\section{Proof of Theorem \ref{thm:weak_conv}}\label{appendix:proof_thm3.4}
\begin{proof}{}
The proof consists of three parts. 
\begin{enumerate}
    \item[1.]The first part demonstrates the kernel of the limiting process.
    \item[2.]The second part establishes the asymptotic tightness of the $\Sb_n(t)$.
    \item[3.]The third part verifies the weak convergence of the finite-dimensional marginals of $\Sb_n$.
\end{enumerate}  
\begin{proofpart}{1}{}
Note that,
\begin{align*}
\Sb_n(t) &= (nh)^{1/2}\brac{\hat{\mu}(t) - \mu(t)\ + S_{M, \lambda}(\mu)(t) -W_\lambda(\mu)(t) - S_M(\mu)(t)} \\
&= (nh)^{1/2}\brac{\hat{\mu}(t) - \mu(t) + S_{M, \lambda}(\mu)(t)} - (nh)^{1/2}W_\lambda(\mu)(t)\\ 
&- (nh)^{1/2}S_M(\mu)(t)\\
&\triangleq I_{1, n}(t) + I_{2, n}(t) + I_{3, n}(t). \numberthis \label{eq:stochastic_proc_decomp}
\end{align*}

According to Theorem \ref{thm:FBR}, 
\begin{align*}
    \sup_{t\in\T}\abs{I_{1, n}(t)} &= (nh)^{1/2}\innerproduct{K_t}{\hat{\mu} - \mu + S_{M,\lambda}(\mu)} \\
    & \leq (nh)^{1/2}\norm{K_t}\norm{\hat{\mu} - \mu - S_{M, \lambda}(\mu)} \\
    & = o_p(1).
\end{align*}

As outlined in Remark \ref{remark:unbiased_remark}, if the sufficient condition specified therein is met,
$$\sup \abs{I_{2, n}(t)} = O((nh)^{1/2}\kappa_n) = o(1).$$

As for the $I_{3,n}(t)$, recall that
$$S_M(\mu)(t) = -\frac{1}{M}\sum_{i=1}^{n}\sum_{j=1}^{N_i}\brac{Y_{ij} - \mu(T_{ij})}K_{T_{ij}}(t).$$ 
First, let's rewrite the leading term $I_{3, n}$ in \eqref{eq:stochastic_proc_decomp},
\begin{equation*}
    I_{3, n}(t) = \sum_{i=1}^{n}\frac{(nh)^{1/2}}{M}\sum_{j=1}^{N_i}\{Y_{ij} - \mu(T_{ij})\}K_{T_{ij}}(t) \triangleq \sum_{i=1}^{n} \widetilde{\Theta}_i(t),
\end{equation*}
where $\widetilde{\Theta}_i(t) = (nh)^{1/2}\Theta_i(t)/M$ and $\Theta_i(t) = \sum_{j=1}^{N_i}\{Y_{ij} - \mu(T_{ij})\}K_{T_{ij}}(t)$. 

Furthermore, we have
\begin{align*}
    \E\brac{\Theta_i \mid N_i} &= \sum_{j=1}^{N_i}\E\left[\{Y_{ij} - \mu(T_{ij})\}K_{T_{ij}}(t)\right]\\
    &= \sum_{j=1}^{N_i}\E\sbrac[\Big]{K_{T_{ij}}(t)\E\brac{Y_{ij} - \mu(T_{ij}) \mid T_{ij}}}\\
    &= 0.
\end{align*}
A similar argument can be used to compute the expectation of $I_{3,n}(t)$, yielding
\begin{align*}
\E\{I_{3, n}(t)\} &= (nh)^{1/2}\E\sbrac[\bigg]{\frac{1}{M}\sum_{i=1}^{n}\sum_{j=1}^{N_i}\brac{Y_{ij} - \mu(T_{ij})}K_{T_{ij}}(t)}\\
& = (nh)^{1/2}\E\pbrac[\bigg]{\frac{1}{M}\sum_{i=1}^{n}\sum_{j=1}^{N_i}\sbrac[\Big]{K_{T_{ij}}(t)\E\brac{{Y_{ij} - \mu(T_{ij})}\mid T_{ij}}}} \\
& = 0,
\end{align*}
and
\begin{align*}
    &~~~~\E\sbrac[\bigg]{\frac{1}{M^2}\sum_{i=1}^{n}\sum_{\substack{k=1\\k\neq i}}^{n}\sum_{j =1}^{N_i}\sum_{l=1}^{N_k}\brac{Y_{ij} - \mu(T_{ij})}\brac{Y_{kl} - \mu(T_{kl})}K_{T_{ij}}(t)K_{T_{kl}}(s)} \\
    &=\E\pbrac[\bigg]{\frac{1}{M^2}\sum_{\substack{i,k=1 \\ i\neq k}}^{n}\sum_{j =1}^{N_i}\sum_{l=1}^{N_k}K_{T_{ij}}(t)K_{T_{kl}}(s)\E\sbrac[\big]{\brac{Y_{ij} - \mu(T_{ij})}\brac{Y_{kl} - \mu(T_{kl})}\mid T_{ij}, T_{kl}}} \\
    &=0.
\end{align*}
Therefore, the covariance between $I_{3,n}(t)$ and $I_{3,n}(s)$ is as follows:
\begin{align*}
    &\hspace{1.25em}\E\{I_{3,n}(t)I_{3,n}(s)\} \\
    &= \E\pbrac[\bigg]{\frac{nh}{M^2}\sbrac[\Big]{\sum_{i=1}^{n}\sum_{j=1}^{N_i}\brac{Y_{ij} - \mu(T_{ij})}K_{T_{ij}}(t)}\sbrac[\Big]{\sum_{k=1}^{n}\sum_{l=1}^{N_k}\brac{Y_{kl} - \mu(T_{kl})}K_{T_{kl}}(s)}} \\
    &= \E\sbrac[\bigg]{\frac{nh}{M^2}\sum_{i=1}^{n}\sum_{j,l =1}^{N_i}\brac{Y_{ij} - \mu(T_{ij})}\brac{Y_{kl} - \mu(T_{kl})}K_{T_{ij}}(t)K_{T_{il}}(s)}\\
    &= \E\pbrac[\bigg]{\frac{nh}{M^2}\sum_{i=1}^{n}\sum_{j,l =1}^{N_i}K_{T_{ij}}(t)K_{T_{il}}(s)\E\sbrac[\Big]{\brac{Y_{ij} - \mu(T_{ij})}\brac{Y_{kl} - \mu(T_{kl})}\mid T_{ij}, T_{il}}} \\
    &= \E\sbrac[\bigg]{\frac{nh}{M^2}\sum_{i=1}^{n}\sum_{j,l =1}^{N_i}K_{T_{ij}}(t)K_{T_{il}}(s)\brac{c(T_{ij}, T_{il}) + \delta_{jl}\sigma_\err^2}} \\
    &= \E\pbrac[\bigg]{\frac{nh}{M^2}\sum_{i=1}^{n}\sum_{j,l =1}^{N_i}\E\sbrac[\Big]{K_{T_{ij}}(t)K_{T_{il}}(s)\brac{c(T_{ij}, T_{il}) + \delta_{jl}\sigma_\err^2}\mid N_i}} \\
    &= \E\pbrac[\bigg]{\frac{nh}{M^2}\sum_{i=1}^{n}\sbrac[\Big]{(N_i^2 - N_i)\G_1(s, t) + N_i\{\G_2(s, t) + \sigma_\err^2 \G_3(s, t)\}}} \\
    & \triangleq C_Z(t, s) + o(1),
\end{align*}
where $\G_i(t, s), i=1,2,3,$ are defined in \eqref{eq:I_and_G_intgrals}. The final two lines are derived using the assumption stated in Theorem \ref{thm:weak_conv} and Lemma \ref{lem:G_functions}.
\end{proofpart}
\begin{proofpart}{2}{}
The objective of this part is to establish the equicontinuity of $(nh)^{1/2}\{\hat{\mu}(t) - \mu(t)\}$. Given that the supremum of both $I_{1, n}(t)$ and $I_{2,n}(t)$ are $o(1)$, and the equicontinuity can be readily demonstrated for them, we focus our attention on the leading term, $I_{3, n}$. In other words, we aim to show, for any $e > 0$,
\begin{equation}\label{eq:equicontinuity_I3}
    \lim_{\delta \to 0} \lim_{n \to \infty} \p\brac{\sup_{d(s, t) \leq \delta} \abs{I_{3,n}(t) - I_{3,n}(s)} > e} = 0.
\end{equation}
To achieve this, we will apply Lemma A.1 from \cite{Kley2016Bernoulli}. To do so, we introduce a metric function $d: \T \times \T \to \R$ defined as $d(s,t) = \abs{t-s}$ for any $s, t \in \T$, and a Orlicz function $\Psi(x) = x^2$. The Orlicz norm for any real-valued random variable $U$ is defined as follows,
\begin{equation*}
    \norm{U}_\Psi = \inf\{c > 0 \mid \E\{\Psi(\abs{U}/c)\} \leq 1\}\}.
\end{equation*}

Notice that the square of the Orlicz norm of $U$ corresponds to the second moment of $U$. Therefor, we can express 
$$\norm{I_{3, n}(t) - I_{3, n}(s)}_\Psi^2 = \E\sbrac{\brac{I_{3, n}(t) - I_{3, n}(s)}^2}.$$ 
Given that $\E\{\Theta_i(t)\} = 0$ for all $i$ and $t\in \T$, and from Part 1,
$$\E\brac{\Theta_i(t)\Theta_j(s)} = \E\brac{\Theta_i(t)}\E\brac{\Theta_j(s)} = 0\text{~for all~}i \neq j\text{~and any~}(t,s) \in \T \times \T.$$ 

Furthermore,
\begin{align*}    &~~~~\E\brac{\abs{I_{3, n}(t) - I_{3, n}(s)}^2} \\
    &= \E\sbrac[\bigg]{\frac{nh}{M^2}\abs[\Big]{\sum_{i=1}^{n}\Theta_i(t) - \Theta_i(s)}^2} \\
    &= \E\sbrac[\bigg]{\frac{nh}{M^2}\sum_{i=1}^{n}\sum_{j=1}^{n}\{\Theta_i(t) - \Theta_i(s)\}\{\Theta_j(t) - \Theta_j(s)\}} \\
    &= \E\sbrac[\bigg]{\frac{nh}{M^2}\sum_{i=1}^{n}\{\Theta_i^2(t) - 2\Theta_i(t)\Theta_i(s) + \Theta_i^2(s)\}} \\
    &= \E\left[\frac{nh}{M^2}\sum_{i=1}^{n}\Big\{\Theta_i^2(t) - 2\Theta_i(t)\Theta_i(s) + \Theta_i^2(s)\Big\}\right]. \numberthis \label{eq:I3_different_2nd_moment}
\end{align*}
Let us examine each term separately. To begin, $\E\{\Theta_i^2(t) \mid N_i\}$ can be further simplified as follows: 
\begin{align*}
    &~~~~\E\{\Theta_i^2(t) \mid N_i\} \\
    &= \E\pbrac[\bigg]{\sbrac[\Big]{\sum_{j=1}^{N_i}\{Y_{ij} - \mu(T_{ij})\}K_{T_{ij}}(t)}\sbrac[\Big]{\sum_{k=1}^{N_i}\{Y_{ik} - \mu(T_{ik})\}K_{T_{ik}}(t)}~\Big | ~ N_i}\\
    &= \E\sbrac[\Big]{\sum_{j,k=1}^{N_i} K_{T_{ij}}(t)K_{T_{ik}}(t)\{c(T_{ij}, T_{ik}) + \sigma_{\err}^2 \delta_{jk}\} \mid N_i}\\
    &= (N_i^2 - N_i)\G_1(t, t) + N_i \G_2(t, t) + N_i\sigma_{\err}^2\G_3(t, t). \numberthis \label{eq:I3_different_2nd_moment_1st}
\end{align*}

Secondly,
\begin{align*}    &~~~~\E\{\Theta_i(t)\Theta_i(s) \mid N_i\} \\
    &= \E\pbrac[\bigg]{\sbrac[\Big]{\sum_{j=1}^{N_i}\{Y_{ij} - \mu(T_{ij})\}K_{T_{ij}}(t)}\sbrac[\Big]{\sum_{k=1}^{N_i}\{Y_{ik} - \mu(T_{ik})\}K_{T_{ik}}(s)} ~\Big |~ N_i}\\
    &= \E\sbrac[\bigg]{\sum_{j,k=1}^{N_i} K_{T_{ij}}(t)K_{T_{ik}}(s)\{\Cov(T_{ij}, T_{ik}) + \sigma_{\err}^2 \delta_{jk}\} ~\Big | ~ N_i}\\
    &= (N_i^2 - N_i)\G_1(t, s) + N_i \G_2(t, s) + N_i\sigma_{\err}^2\G_3(t, s). \numberthis \label{eq:I3_different_2nd_moment_2nd}
\end{align*}
Therefore, by combining \eqref{eq:I3_different_2nd_moment} through \eqref{eq:I3_different_2nd_moment_2nd}, Lemma \ref{lem:G_functions}, and the condition $M \geq n$, we derive the following result:
\begin{align*}
    &~~~~\E\brac{{\abs{I_{3, n}(t) - I_{3, n}(s)}^2}} \\
    &\leq h\bigg[(\mu_{N^2} - \mu_N)\Delta_{\G_1}(t,s) + \mu_N\Delta_{\G_2}(t,s) + \mu_N\sigma_{\err}^2\Delta_{\G_3}(t,s)\bigg]\\
    &\leq \brac{C_{\G_1}(\mu_{N^2} - \mu_N) + C_{\G_2}\mu_N}h^{\tau}\abs{t-s} + \mu_N\sigma_{\err}^2 C_{\G_3}\abs{t-s}^2\\
    &\triangleq c_1h^{\tau}\abs{t-s} + c_2\abs{t-s}^2.
\end{align*}
In the equation above, $\mu_{N^2}$ represents the second moment of $N$. 

It's worth noting that $d(s,t) = \abs{t-s}$ is a well-defined metric. Additionally, we define $\bar{\eta} = h/2$. Thus, for sufficient large $n$, when $d(s,t) \geq \bar{\eta}$, we can establish the following inequality:
\begin{align*}
    &\hspace{0.8em}\left(c_1h^{\tau}\abs{t-s} + c_2\abs{t-s}^2\right)\abs{t-s}^{-2}\\
    & = c_1h^\tau\abs{t-s}^{-1} + c_2 \\
    &\leq 2c_1h^{\tau - 1} + c_2 \\
    & \leq  2c_1 + c_2.
\end{align*}
Therefore, there exists a universal constant $C_{\Psi}^2 = 2c_1 + c_2$ such that
\begin{equation*}
    \norm{I_{3, n}(t) - I_{3, n}(s)}_\Psi \leq C_\Psi \abs{t-s} = C_\Psi d(s,t),
\end{equation*}
where $C_{\Psi}$ depends on $d(s,t)$ and $\Psi(x)$ when $d(s,t) \geq \bar{\eta}$ for sufficient large $n$.

To handle the case where $d(s,t) \leq \bar\eta/2$, we rely on the Bernstein inequality. Therefore, we need an almost sure upper bound for $\abs{\widetilde{\Theta}_i(t) - \widetilde{\Theta}_i(s)}$. 

According to Lemma \ref{lem:tilde_Theta_upper_bound}, we can establish the following:
\begin{equation*}
    \sup_{t,s \in \T}\abs{\widetilde{\Theta}_i(t) - \widetilde{\Theta}_i(s)} \leq 2\sup_{t\in \T}\abs{\widetilde{\Theta}_i(t)} \leq 2C_{\widetilde{\Theta}}\frac{\log^2 n}{n^{1/2}h}.
\end{equation*}
Furthermore, when $d(s,t) \leq \bar{\eta}$,
\begin{align*}
    &~~~~\E\{\abs{\widetilde{\Theta}_i(t) - \widetilde{\Theta}_i(s)}^2\} \\
    &= \E\brac{\frac{nh}{M}\abs{\Theta_i(t) - \Theta_i(s)}^2}\\
    &= \E\sbrac[\bigg]{\frac{nh}{M^2}\brac{\Theta_i(t)^2 + \Theta_i(s)^2 -2\Theta_i(t)\Theta_i(s)}}\\
    &\leq \frac{h}{n}\E\sbrac[\Big]{(\mu_{N^2}-\mu_N)\Delta_{G_1}(t,s) + \mu_N\Delta_{\G_2}(t,s) + \sigma_{\err}^2\mu_N\Delta_{\G_3}(t,s)} \\
    &\leq \frac{c_1h^{\tau}\abs{t-s} + c_2\abs{t-s}^2}{n} \\
    & \leq \frac{c_3h^2}{n}.
\end{align*}

Let $\mathcal{D}(\epsilon, d)$ represent the associated $\epsilon$-packing number in the metric space $(\T, d)$ where $d$ is the previously defined metric. Notice that $\D(\xi, d) \lesssim \xi^{-1}$, and according to Lemma A.1 in \cite{Kley2016Bernoulli}, there exists a set $\tilde{\T}$ containing at most $\D(\bar{\eta}, d)$ points such that, for any $\eta > \bar{\eta}$ and $\delta, e>0$, as $n \to \infty$,
\begin{align*}
    \eqref{eq:equicontinuity_I3} &\leq c_{4}\brac{\int_{\bar{\eta}/2}^{\eta}\D^{1/2}(\zeta, d)d\zeta + (\delta + 2\bar{\eta})\D(\eta, d)}^2 \\
    &~~~~+ \p\brac{\sup_{\overset{d(s,t) \leq \bar{\eta}}{s,t \in \tilde{\T}}}\abs{I_{3,n}(t) - I_{3,n}(s)} > e/2}\\
    &\leq c_{5}(\eta + \delta^2\eta^{-2}) + \D(\bar{\eta}, d) \sup_{d(s,t) \leq \bar{\eta}, (s,t) \in \T^2}\p\brac{\abs{I_{3,n}(t) - I_{3,n}(s)} > e/4}\\
    &\leq c_{5}(\eta + \delta^2\eta^{-2}) \\
    &\hspace{1.0em}+ c_{6} \bar{\eta}^{-1}\exp\left[-\frac{e^2/32}{\sum_{i=1}^{n}\E\brac{\abs{\widetilde{\Theta}_i(t) - \widetilde{\Theta}_i(s)}^2} + c_7e(\log^2 n)/(n^{1/2}h)}\right]\\
    &\leq c_{5}(\eta + \delta^2\eta^{-2}) + c_{6} \bar{\eta}^{-1}\exp\left[-\frac{e^2/32}{c_{3}\bar{\eta}^2 + c_7e(\log^2 n)/(n^{1/2}h)}\right]\\
    & \leq o(1).
\end{align*}
The final line is obtained by setting $\eta = \delta^{2/3}$ and $\delta \to 0$.

Regarding the remaining $I_{1, n}$ and $I_{2, n}$ in \eqref{eq:stochastic_proc_decomp}, we obtain the following:
\begin{align*}
    &~~~~\lim_{\delta \to 0} \lim_{n \to \infty} \p\brac{\sup_{d(s, t) \leq \delta} \abs{I_{1, n}(t) + I_{2, n}(t) - I_{1, n}(s) - I_{2, n}(s)} > e} \\
    &\leq \lim_{\delta \to 0} \lim_{n \to \infty} \p\brac{\sup_{d(s, t) \leq \delta} \abs{I_{1, n}(t) - I_{1, n}(s)}>e/2 + \sup_{d(s, t) \leq \delta}\abs{I_{2, n}(t) - I_{2, n}(s)} > e/2}\\
    & = 0,
\end{align*}
where the final step is due to $\sup_{t}\abs{I_{1, n}(t)} = o_p(1)$ and $\sup_{t}\abs{I_{2, n}(t)} = o(1)$ in Part 1.

This establishes the equicontinuity of the random process $$(nh)^{1/2}\{\hat{\mu}(t) - \mu(t)\}.$$ The asymptotic tightness is supported by Theorem 1.5.4 and Theorem 1.5.7 in \cite{van1996weak}.
\end{proofpart}

\begin{proofpart}{3}{}
To establish the weak convergence of the finite-dimensional marginal distribution of $\Sb_n(t)$, we will employ the Cram\'{e}r-Wold device. This involves demonstrating that for any $q \in \N$, $(b_1, b_2, \ldots, b_q)$, and any $t_1, t_2, \ldots, t_q \in\T$, the following holds:
\begin{equation}\label{eq:finite_marignal_conv}
    \sum_{k=1}^{q}b_k\Sb_n(t_k) \overset{d}{\to} \sum_{k=1}^{q}b_kZ(t_k)
\end{equation}
Adopting the same notation as in Part 1, for any $t \in \T$, we find that:
\begin{align*}
    \sum_{k=1}^{q}b_k\Sb_n(t) &= \sum_{k=1}^{q}b_k\brac{I_{1,n}(t) + I_{2,n}(t) + I_{3,n}(t)}\\
    &=\sum_{k=1}^{q}b_k\brac{I_{1,n}(t) + I_{2,n}(t)} + \sum_{k=1}^{q}b_kI_{3,n}(t).
\end{align*}
Once more, we focus on the leading term $I_{3,n}$, as the first two terms have been shown to be negligible in Part 1. Note that:
\begin{align*}
    \sum_{k=1}^{q}b_kI_{3,n}(t_k) &= \sum_{k=1}^{q}b_k\sum_{i=1}^{n}\widetilde{\Theta}_{i}(t_k)\\
    &=\sum_{i=1}^{n}\sum_{k=1}^{q}b_k\widetilde{\Theta}_{i}(t_k) \\
    &\triangleq \sum_{i=1}^{n}\U_{i, q},
\end{align*}
where $\U_{i,q} = \sum_{k=1}^{q}b_k\widetilde{\Theta}_i(t_k)$. 

Clearly, $\E(U_{i, q}) = 0$ as $\E\{\widetilde{\Theta}_i(t)\} = 0$ for all $i$ and $t\in \T$. Thus,
\begin{align*}
    \E\{\U_{i,q}^2\} &= \E\sbrac[\Bigg]{\brac{\sum_{k=1}^{q}b_k\widetilde{\Theta}_i(t_k)}\brac{\sum_{l=1}^{q}b_l\widetilde{\Theta}_i(t_l)}}\\
    &=\sum_{k=1}^q\sum_{l=1}^qb_kb_l\E\brac{\widetilde{\Theta}_i(t_k)\widetilde{\Theta}_i(t_l)}\\
    &= \sum_{k=1}^q\sum_{l=1}^qb_kb_lC_Z(t_k, t_l) + o(1),
\end{align*}
where the last line follows the same calculation as in Part 1. 

When $$\sum_{k=1}^q\sum_{l=1}^qb_kb_lC_Z(t_k, t_l) = 0,$$ it implies that $\sum_{k=1}^{q}b_k Z(t_k)$ has a degenerate distribution with a point mass at $0$ and \eqref{eq:finite_marignal_conv} immediately follows. However, if 
$$\sum_{k=1}^q\sum_{l=1}^qb_kb_lC_Z(t_k, t_l) \neq 0,$$ we shall examine the limiting distribution of $\sum_{i=1}^{n} \U_{i, q}$. 

Firstly, $\U_{i, q}$ are independent with respect to the index $i$ for all $q \in \N$. The variance of $\U_{i,q}$ is finite. Thus, the proof is concluded if we can verify the Lindeberg's condition. 

Notice that $C_{b, q} = \sum_{k=1}^{q}\abs{b_k} > 0$. For any $e > 0$,
\begin{align*}
    &\hspace{1em}\sum_{i=1}^{n}\E\sbrac[\Big]{\U_{i,q}^2\bbi_{\{\abs{\U_{i, q}} > e\}}}\\
    &\leq \sum_{i=1}^{n}\brac{\E\pbrac{\U_{i,q}^4}}^{1/2}\brac{\p(\abs{\U_{i, q}} > e)}^{1/2}, \numberthis \label{eq:finite_marginal_Lindeberg}
\end{align*}
where the final line is derived using the Cauchy-Schwarz inequality. It's important to note that:
\begin{align*}
    \p\pbrac{\abs{\U_{i,q}} > e} &= \p\brac{\abs[\Big]{\sum_{k=1}^qb_k\widetilde{\Theta}_i(t_k)} > e}\\
    &\leq \p\brac{\sup_{t\in \T}\abs{\widetilde{\Theta}_i(t)}\abs{\sum_{k=1}^qb_k} > e}\\
    &\leq \p\brac{\sup_{t\in \T}\abs{\widetilde{\Theta}_i(t)}> \frac{e}{C_{b, q}}}\\
    &\overset{a.s.}{\leq} c_0\log^2(n)\exp\brac{-c_1\frac{e(nh)^{1/2}}{\log(n)}} + 2\p\brac{N_i > c_2\log(n)}, \numberthis\label{eq:finite_marginal_prob}
\end{align*}
where the upper bound is obtained from Lemma \ref{lem:exp_tail_prob_Theta}. As for the other term, let's first rewrite $\U_{i,q}$:
\begin{align*}
    \U_{i, q} &= \sum_{k=1}^{q}b_k\widetilde{\Theta}_i(t_k)\\
    &=\sum_{k=1}^{q}b_k \innerproduct{\widetilde{\Theta}_i}{K_{t_k}}\\
    &\leq \norm{\widetilde{\Theta}_i}\norm{K_{t_k}}\sum_{k=1}^{q}\abs{b_k}\\
    &\leq C_Kh^{-1/2}C_{b,q}\norm{\widetilde{\Theta}_i}.
\end{align*}
Therefore, according to Lemma \ref{lem:tilde_Theta_upper_bound},
\begin{align*}
    E\pbrac{\U_{i,q}^4} = E\pbrac{C_K^4h^{-2}C_{b,q}^4\norm{\widetilde{\Theta}_i}^4} \leq c_3^2C_K^4C_{b,q}^4(nh)^{-2}. \numberthis\label{eq:finite_marginal_exp}
\end{align*}

Combining \eqref{eq:finite_marginal_Lindeberg} through \eqref{eq:finite_marginal_exp}, and employing the same calculations used for \eqref{eq:Lindeberg_eq_ptwise_variance} in the proof of Theorem \ref{thm:ptwise_limiting_dist_biased}, we obtain:
\begin{align*}
    &\lim_{n\to\infty}\sum_{i=1}^{n}\E\sbrac[\Big]{\mathcal{U}_{i,q}^2\bbi_{\{\abs{\U_{i, q}} > e\}}} = o(1).
\end{align*}

Finally, by applying the Lindeberg's CLT, we will have \eqref{eq:finite_marignal_conv} established.
\end{proofpart}
\end{proof}

\section{Proofs of Theorems in Section \ref{subsec:bootstrap_validity}}
This subsection illustrates the validity of the bootstrap procedure discussed. Even though two separated theorems are stated in Section \ref{subsec:bootstrap_validity}, we will only provide the proofs for the Theorem \ref{thm:weak_conv_bootstrap_validity}. As been pointed out in Figure 1 from \cite{bucher_note_2019}, Theorem \ref{thm:pt_wise_bootstrap_validity} can be understood as a special case of Theorem \ref{thm:weak_conv_bootstrap_validity}, and the proofs for these two theorems are very similar if not identical.
\begin{proof}
To begin, we introduce the following notation. Let 
$\Sb_{n,b}$ denote the $b$th bootstrap process obtained from Algorithm \ref{algo:bootstrap_algo} for $b =1 ,2, \ldots, B$, where $B$ denotes the bootstrap size. Let $\mathrm{BL}_{1}\brac{\mathrm{C}[0,1]}$ denote the collection of all uniformly Lipschitz functionals. i.e.,
\begin{equation*}
\begin{aligned}
    \mathrm{BL}_{1}\brac{\mathrm{C}[0,1]} := \{\mathfrak{h}: &\mathrm{C}[0,1] \to [-1,1] \mid \\
    &\abs{\mathfrak{h}(g_1) - \mathfrak{h}(g_2)} \leq \norm{g_1 - g_2}_{\sup} \text{ for any } g_1, g_2 \in \mathrm{C}[0,1]\}.
\end{aligned}
\end{equation*}
It is sufficient to show that the 
\begin{equation}\label{eq:boostrap_conditional_expectation}
    \sup_{\mathfrak{h} \in \mathrm{BL}_1\brac{\mathrm{C}[0,1]}} \abs{\E_{\D}\brac{\mathfrak{h}(\Sb_{n,1})} - \E\brac{\mathfrak{h}(\Sb_{n})}},
\end{equation}
where $\E_{\D}$ denotes the conditional expectation given data $\D = \{T_{ij}, Y_{ij} \mid i = 1, 2, \ldots, n, j = 1, 2, \ldots, N_i\}$. For more details, please see Chapter 23 (or more specifically, Theorem 23.7) in \cite{vaart_asymptotic_1998}. Lemma 3.1 in \cite{bucher_note_2019} implies equation \eqref{eq:boostrap_conditional_expectation} is equivalent to prove, for any positive integer $B \geq 2$, as $n \to \infty$,
\begin{equation}\label{eq:bootstrap_validity_proof_goal}
    (\Sb_n, \Sb_{n, 1}, \Sb_{n, 2}, \ldots, \Sb_{n, B}) \leadsto (Z, Z_1, Z_2, \ldots, Z_B),
\end{equation}
where $Z_1, Z_2, \ldots, Z_B$ are i.i.d. copies of $Z$ in Theorem \ref{thm:weak_conv}. We have shown $\Sb_n \leadsto Z$ in Theorem \ref{thm:weak_conv}. Because $Z$ and $Z_i$'s are Gaussian processes, we only need to show $\Sb_{n,l} \leadsto Z_l$ for any $b=1, 2, \ldots, B$. The main idea of this proof is similar to the proof of Theorem \ref{thm:weak_conv}. That is, we will show the finite dimensional convergence and asymptotic tightness of $\Sb_{n,b}$.

For $b$th bootstrap sample, we have
\begin{equation*}
        \hat{\mu}_{b} = \underset{f \in \HS}{\amin} \Bigg[\frac{1}{2M}\sum_{i=1}^{n}\sum_{j=1}^{N_{i}}U_{i, b}\big\{Y_{ij} -  f(T_{ij})\big\}^2 + \frac{\lambda}{2} J(f, f)\Bigg],
\end{equation*}
and
\begin{equation*}
S_{M, \lambda}^{b}(f) = -\frac{1}{M}\sum_{i=1}^{n}\sum_{j=1}^{N_i}U_{i,b}\{Y_{ij} - f(T_{ij})\}K_{T_{ij}} + W_\lambda f.
\end{equation*}
Because $\E U_{i, b} = 1$ and these bootstrap weights are independent of the random components in model $\eqref{eq:model}$. These facts imply we can obtain an variant of Lemma \ref{prop:identity_op}, i.e. $DS_\lambda^{b}(f) = \mathrm{id}$ for any $f\in\HS$ and obtain the same results in Theorem \ref{lem:convergence} and \ref{thm:FBR}.

Besides that, direct calculations yield that
\begin{equation*}
\begin{aligned}
S_{n,b}(t) &= (nh)^{1/2}\brac{\hat{\mu}_{b}(t) - \hmu(t)}\\
&= (nh)^{1/2}\brac{\hat{\mu}_{b} - \mu  + S_{M, \lambda}^{b}(\mu)(t)} \\
& - (nh)^{1/2}\brac{\hmu - \mu + S_{M, \lambda}(\mu)(t)} \\ 
&+(nh)^{1/2}\brac{S_{M, \lambda}(\mu)(t) - S_{M, \lambda}^{b}(\mu)(t)} \\
& \triangleq H_{1,b}(t) + H_2(t) + H_{3,b}(t),
\end{aligned}
\end{equation*}
where $$H_{3,b} = \frac{(nh)^{1/2}}{M}\sum_{i=1}^{n}\sum_{j=1}^{N_i}(U_{i,b} - 1)\{Y_{ij} - \mu(T_{ij})\}K_{T_{ij}}.$$
Lemma \ref{thm:FBR} implies $\norm{\hmu - \mu + S_{M, \lambda}(\mu)} = O_p(a_n)$ and we have further assumed $a_n = o_p(n^{-1/2})$. This assumption is one of the crucial conditions for Theorem \ref{thm:ptwise_limiting_dist_biased} and \ref{thm:weak_conv}. Because
\begin{align*}
    \sup_{t\in\T}\abs{H_2(t)} &= \sup_{t \in \T}~(nh)^{1/2}\abs{\innerproduct{K_t}{\hat{\mu} - \mu + S_{M,\lambda}(\mu)}} \\
    & \leq (nh)^{1/2}\norm{K_t}\norm{\hat{\mu} - \mu - S_{M, \lambda}(\mu)} \\
    & = o_p(1).
\end{align*}
The same result holds for $H_{1,b}(t)$ as $\norm{\hat{\mu}_b - \mu + S_{M,\lambda}^{b}(\mu)} = O_p(a_n).$ Therefore, we need to consider the leading term $H_3^b$, and the validity of equation \eqref{eq:bootstrap_validity_proof_goal} relies on the finite dimensional convergence of $\bH = (H_{3,1}, H_{3,2}, \ldots, H_{3,B})^{\top}$
and asymptotic tightness of $H_{3, b}$.

To begin, for arbitrary $R \in \N$, and $\bc_1, \bc_2, \ldots, \bc_R \in \R^{B}$, we want to show that, for sufficient large $n$,
\begin{equation}\label{eq:bootstrap_dominant_term_H}
    \sum_{r=1}^{R}\bc_r^{\top}\bH(t_r) \to \sum_{r=1}^R \bc_r^{\top} \bZ(t_r),
\end{equation}
where $\bZ = (Z_1, Z_2, \ldots, Z_B)^{\top}$. Note that,
\begin{align*}
    \sum_{r=1}^R\bc_r\bH(t_r) &= \sum_{r=1}^{R}\sum_{b=1}^{B}c_{rb}H_{3,b}(t_r)\\
    &=\sum_{i=1}^{n}\sum_{b=1}^{B}\sum_{r=1}^{R}\sum_{j=1}^{N_i}c_{rb}\frac{(nh)^{1/2}}{M}(U_{i,b} - 1)\brac{Y_{ij} - \mu(T_{ij})}K_{T_{ij}}(t_r)\\
    &= \sum_{i=1}^{n}\fU_i(t_r),
\end{align*}
where
\begin{equation*}
\begin{aligned}
    \fU_i &= \sum_{b=1}^{B}\sum_{r=1}^{R}\sum_{j=1}^{N_i}c_{rb}\frac{(nh)^{1/2}}{M}(U_{i,b} - 1)\brac{Y_{ij} - \mu(T_{ij})}K_{T_{ij}}(t_r)\\
    &= \sum_{b=1}^{B}\sum_{r=1}^{R}c_{rb}(U_{i,b} - 1)\widetilde{\Theta}_i(t_r),
\end{aligned}
\end{equation*}
Here, $\widetilde{\Theta}_i(t_r)$ is defined in equation \eqref{eq:Theta_def}, which includes all random elements in model \eqref{eq:model}. Notice that $\E U_{i, b} = 1$ and $\E \brac{(U_{i, b} - 1)^2} = 1$. Moreover, $U_{i, b}$'s are independent of all random elements in model \eqref{eq:model}, we can easily conclude $\E \fU_i = 0$. As for the second moment of $\fU_i$, we have
\begin{align*}
    \E \fU_i^2 &= \sum_{b, b'}\sum_{r, r'}c_{rb}c_{r'b'}\E\brac{(U_{i, b} - 1)(U_{i, b'} - 1)\widetilde{\Theta}_i(t_r)\widetilde{\Theta}_i(t_{r'})}\\
    &= \sum_{b=1}^{B}\sum_{r, r'}c_{rb}c_{r'b}\E\brac{(U_{i, b} - 1)^2}\E\brac{\widetilde{\Theta}_i(t_r)\widetilde{\Theta}_i(t_{r'})}\\
    &= \sum_{b=1}^{B}\sum_{r, r'}c_{rb}c_{r'b}\E\brac{\widetilde{\Theta}_i(t_r)\widetilde{\Theta}_i(t_{r'})}
\end{align*}
Therefore,
\begin{align*}
    \Var\Big(\sum_{i=1}^{n}\fU_i\Big) &= \sum_{i=1}^{n}\E\fU_i^2\\
    &= \sum_{i=1}^{n}\sum_{b=1}^{B}\sum_{r, r'}c_{rb}c_{r'b}\E\brac{\widetilde{\Theta}_i(t_r)\widetilde{\Theta}_i(t_r')}\\
    &= C_Z(t_r,t_r') + o_p(1),
\end{align*}
where the last line comes from calculating the covariance kernel in Theorem \ref{thm:weak_conv} as
$\sum_{i=1}^{n}\E\fU_i^2 = \E\{I_{3,n}(t)I_{3,n}(s)\}$. The finite dimensional distributions can be proved by using Lindeberg's condition. Note that, $\abs{U_{i, b} - 1} \leq \sqrt{2}$ almost surely. Therefore, for arbitrary fixed $B \in \N$, the Lindeberg condition naturally holds as
$\abs{\fU_i} \leq 2B \abs{\sum_{r = 1}^{R}c_{rb}\widetilde{\Theta}_i(t_r)}$.

As for the asymptotic tightness of $H_{3,b}(t)$, note that, the asymptotic tightness proved for $I_{3,n}(t)$ in Theorem \ref{thm:weak_conv} relies on equation \eqref{eq:equicontinuity_I3}. Because $\abs{U_{i, b} - 1} \leq \sqrt{2}$, we have $H_{3,b}(t) \leq \sqrt{2}I_{3,n}(t)$, i.e. the ratio between $H_{3,b}(t)$ and $\sqrt{2}I_{3,n}(t)$ is bounded above by a constant almost surely. This fact implies the asymptotic tightness of $H_{3,b}$ for all $b=1, 2, \ldots, B$.

Note that, $$\frac{(nh)^{1/2}}{M}\sum_{i=1}^{n}\sum_{j=1}^{N_i}\{Y_{ij} - \mu(T_{ij})\}K_{T_{ij}} = I_{3, n}.$$ The asymptotic tightness is proved in Theorem \ref{thm:weak_conv}. By Theorem 1.5.4 in \cite{van1996weak}, Lemma 3.1 in \cite{bucher_note_2019}, and Theorem \ref{thm:weak_conv}, equation \eqref{eq:bootstrap_dominant_term_H} is valid for any positive integer $B \geq 2$, as $n \to \infty$. This conclude the proof.

\end{proof}

\section{Auxiliary Lemmas}\label{appendix:proof_extra_technial_lemmas}
This subsection contains several technical lemmas that are used throughout all the proofs of the main theorems. An immediate result, following from Lemma \ref{prop:func_expression_by_h} and \ref{prop:K_and_Wf_bound}, establishes a connection between the supremum norm and the norm induced by the inner product defined in \eqref{eq:inner_prod}. It's important to note that 
$$\abs{f(t)}^2 = \innerproduct{K_t}{f}^2 \leq \norm{K_t}^2\norm{f}^2$$ and $\norm{K_t}^2 \leq C_K^2h^{-1}$ as shown in Lemma \ref{prop:K_and_Wf_bound}. Thus, we have the following lemma:
\begin{lemma} \label{prop:sup_norm}
For any $f \in \HS$, there exists a constant $C_K$ such that $$|f(t)| \leq C_K h^{-1/2}\norm{f}$$ for any $t\in \T$, where $C_K$, as defined in Lemma \ref{prop:K_and_Wf_bound}, does not depend on the choice of $f$ or $t$. In other words, $\norm{f}_{\sup} \leq C_K h^{-1/2}\norm{f}$.
\end{lemma}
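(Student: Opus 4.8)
The plan is to exploit the fact that $\HS$ is a reproducing kernel Hilbert space with respect to the inner product in \eqref{eq:inner_prod}, so that pointwise evaluation is a bounded linear functional represented by $K_t$. The entire argument reduces to the reproducing property combined with Cauchy--Schwarz, with the quantitative content supplied by the kernel-norm bound already established in Lemma \ref{prop:K_and_Wf_bound}.

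Concretely, first I would invoke the reproducing property $\innerproduct{K_t}{f} = f(t)$, which holds for every $f \in \HS$ and every $t \in \T$ by definition of the reproducing kernel $K$. Applying the Cauchy--Schwarz inequality to this inner product yields
\begin{equation*}
    \abs{f(t)}^2 = \innerproduct{K_t}{f}^2 \leq \norm{K_t}^2 \norm{f}^2.
\end{equation*}
Next I would substitute the bound $\norm{K_t}^2 \leq C_K^2 h^{-1}$ from Lemma \ref{prop:K_and_Wf_bound}, where $C_K$ is the universal constant appearing there (in particular independent of $t$). This gives $\abs{f(t)}^2 \leq C_K^2 h^{-1} \norm{f}^2$, and taking square roots produces $\abs{f(t)} \leq C_K h^{-1/2}\norm{f}$ for each fixed $t$. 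Since the right-hand side does not depend on $t$, taking the supremum over $t \in \T$ delivers the sup-norm statement $\norm{f}_{\sup} \leq C_K h^{-1/2}\norm{f}$.

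There is essentially no genuine obstacle in this argument: the proof is a two-line consequence of Cauchy--Schwarz once the kernel is identified. The only substantive ingredient is the uniform control $\norm{K_t}^2 \leq C_K^2 h^{-1}$, and that work has already been carried out in Lemma \ref{prop:K_and_Wf_bound} via the eigen-expansion of $K_t$ from Lemma \ref{prop:func_expression_by_h} together with the eigenvalue growth $\gamma_v \asymp v^{2m}$ in Assumption \ref{assp:Fourier_expansion}. Hence the role of this lemma is organizational — it packages the sup-norm/RKHS-norm comparison into a single clean estimate that is repeatedly used in the later proofs (for instance when verifying that the test functions $\tilde\delta$ lie in $\G$ in the proofs of Lemma \ref{lem:convergence} and Theorem \ref{thm:FBR}).
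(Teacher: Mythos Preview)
Your proposal is correct and follows essentially the same approach as the paper: the paper's proof is precisely the two-line argument using the reproducing property $\abs{f(t)}^2 = \innerproduct{K_t}{f}^2 \leq \norm{K_t}^2\norm{f}^2$ combined with the bound $\norm{K_t}^2 \leq C_K^2 h^{-1}$ from Lemma~\ref{prop:K_and_Wf_bound}.
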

The next lemma is a direct result of computing the Frech\'{e}t derivatives of $\ell(f, \lambda)$.
\begin{lemma}\label{prop:identity_op}
$DS_{\lambda}(f) = \mathrm{id}$ where $\mathrm{id}$ is the identity operator on $\HS$ for any $f \in \HS$.
\end{lemma}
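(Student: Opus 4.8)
The plan is to establish Lemma \ref{prop:identity_op} by direct computation, exploiting the fact that $S_{M,\lambda}(f)$ is \emph{affine} in $f$. Writing out the definition in \eqref{eq-SM}, the $f$-dependent part of $S_{M,\lambda}(f)$ is the linear map $f \mapsto \frac{1}{M}\sum_{i,j} f(T_{ij})K_{T_{ij}} + W_\lambda f$, while $-\frac{1}{M}\sum_{i,j} Y_{ij}K_{T_{ij}}$ is a constant element of $\HS$. Consequently differentiation commutes with the expectation defining $S_\lambda = \E\{S_{M,\lambda}\}$, and $DS_\lambda(f)\Delta g = \E\{DS_{M,\lambda}(f)\Delta g\}$ for every $f$, an expression that is manifestly independent of $f$. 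Using \eqref{eq-SM} this equals $\E\left[\frac{1}{M}\sum_{i=1}^n\sum_{j=1}^{N_i}\Delta g(T_{ij})K_{T_{ij}}\right] + W_\lambda\Delta g$.

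The key step is to evaluate the first expectation in spite of the random normalizing denominator $M = \sum_i N_i$. I would condition on $(N_1,\dots,N_n)$: given these, the double sum contains exactly $M$ terms, and since the $T_{ij}$ are i.i.d. with marginal density $\pi$ and independent of the $N_i$ by Assumptions \ref{assp:data_gm}--\ref{assp:num_of_obs}, each summand has the same conditional expectation $\E\{\Delta g(T)K_T\}$. Hence the factor $1/M$ cancels against the $M$ identical terms, giving $\E\left[\frac{1}{M}\sum_{i,j}\Delta g(T_{ij})K_{T_{ij}}\right] = \E\{\Delta g(T)K_T\}$; the requisite integrability follows from $\norm{K_t}^2 \leq C_K^2 h^{-1}$ in Lemma \ref{prop:K_and_Wf_bound} together with the boundedness of $\Delta g$.

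It then remains to verify the identity $\E\{\Delta g(T)K_T\} + W_\lambda\Delta g = \Delta g$ in $\HS$, which I would do by testing against an arbitrary $h \in \HS$ and invoking the reproducing property. Specifically, $\innerproduct{K_T}{h} = h(T)$ yields $\innerproduct{\E\{\Delta g(T)K_T\}}{h} = \E\{\Delta g(T)h(T)\} = V(\Delta g, h)$, while the defining relation for $W_\lambda$ gives $\innerproduct{W_\lambda\Delta g}{h} = \lambda J(\Delta g, h)$. Summing these and recalling the definition of the inner product in \eqref{eq:inner_prod}, namely $\innerproduct{f}{g} = V(f,g) + \lambda J(f,g)$, recovers exactly $\innerproduct{\Delta g}{h}$. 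Since $h \in \HS$ is arbitrary, the two elements coincide, so $DS_\lambda(f)\Delta g = \Delta g$ for all $\Delta g$ and all $f$, i.e. $DS_\lambda(f) = \mathrm{id}$.

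I expect the only genuine subtlety to be the treatment of the random denominator $1/M$, which the conditioning argument dispatches cleanly; everything else is a routine application of the reproducing property and the definitions of $V$, $W_\lambda$, and the inner product \eqref{eq:inner_prod}.
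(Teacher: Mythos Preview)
Your proposal is correct and follows essentially the same approach as the paper: both arguments compute $DS_\lambda(f)\Delta g = \E\{\Delta g(T)K_T\} + W_\lambda \Delta g$ by conditioning on the $N_i$'s to handle the random denominator $M$, and then test against an arbitrary $h\in\HS$ using the reproducing property together with the decomposition $\innerproduct{\cdot}{\cdot} = V(\cdot,\cdot) + \lambda J(\cdot,\cdot)$ to recover $\innerproduct{\Delta g}{h}$. Your write-up is slightly more explicit about the affine structure and integrability, but the logic is the same.
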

\begin{proof}
Recall that  $D\ell(f, \lambda)\Delta g = \innerproduct{S_{M, \lambda}(f)}{\Delta g}$ where 
\begin{equation*}
    S_{M, \lambda}(f) = -\frac{1}{M}\sum_{i=1}^{n}\sum_{j=1}^{N_i}\{Y_{ij} - f(T_{ij})\}K_{T_{ij}} + W_\lambda f \in \HS.
\end{equation*}
Let's define $\E\{S_{M, \lambda}(f)\} \triangleq S_{\lambda}(f)$ and $\E\{S_{M}(f)\} \triangleq S(f)$. It can be shown that
\begin{align}
    \E\{S_{M}(f)\} &= \E\left\{-\frac{1}{M}\sum_{i=1}^{n}\sum_{j=1}^{N_i}\left\{Y_{ij} - f(T_{ij})\right\}K_{T_{ij}}\right\}\nonumber\\
    &= \E\left[-\E\left\{\frac{1}{M}\sum_{i=1}^{n}\sum_{j=1}^{N_i}\left\{Y_{ij} - f(T_{ij})\right\}K_{T_{ij}} \Big \vert ~N_i, \text{ for } i = 1, 2, \ldots n\right\}\right]\nonumber\\\
    &= \E\left[-\left\{Y_{11} - f(T_{11})\right\}K_{T_{11}}\right] \nonumber\\
    &= S(f). \label{eq:S(f)}
\end{align}
Notice that $S_\lambda(f) = S(f) + W_\lambda f$. Therefore, we have $DS_\lambda(f) = DS(f) + W_\lambda$. Additionally, the second order Fr\'echet derivative w.r.t $f$ is
\begin{align*}
    D^2 \ell(f, \lambda)\Delta g \Delta h &= \frac{1}{M}\sum_{i=1}^{n}\sum_{j=1}^{N_i}\innerproduct{K_{T_{ij}}}{\Delta g}\innerproduct{K_{T_{ij}}}{\Delta h} + \innerproduct{W_\lambda \Delta g}{\Delta h} \nonumber \\
    &\triangleq DS_{M, \lambda}(f)\Delta g \Delta h \\
    & = \text{const w.r.t. } f.
\end{align*}
Therefore, $$DS_{M,\lambda}(f)g = \frac{1}{M}\sum_{i=1}^{n}\sum_{j=1}^{N_i}g(T_{ij})K_{T_{ij}} + W_\lambda g.$$ 
Similar to the previous calculation, it can be shown that 
$$DS_\lambda(f)g = \E[g(T)K_T] + W_\lambda g.$$ 

Therefore, for any $f, g, h \in \HS$, we can obtain:
\begin{align*}
    \innerproduct{DS_\lambda(f) g}{h} &= \E[\innerproduct{K_{T}}{g}\innerproduct{K_{T}}{h}] + \innerproduct{W_\lambda g}{h} \\
    &= \E[g(T)h(T)] + \lambda J(g, h)\\
    &= \innerproduct{g}{h}.
\end{align*}

This concludes that $DS_\lambda(f) = \mathrm{id}$ for any $f\in\HS$.
\end{proof}
\begin{lemma} \label{lem:G_functions}
Suppose that Assumption \ref{assp:data_gm}, Assumption \ref{assp:smooth_of_X} and Assumption \ref{assp:Fourier_expansion} hold, then the following functions defined in \eqref{eq:I_and_G_intgrals} can be expressed in the form of the basis function $\{h_v\}_{v \in \N_+}$ of $\HS$,
\begin{align*}
    \G_1(s, t) &= \sum_{i, j}\frac{h_i(t)h_j(s)}{(1+\lambda \gamma_i)(1 + \lambda \gamma_j)}I_{1}^{ij}, \\
    \G_2(s, t) &= \sum_{i, j}\frac{h_i(t)h_j(s)}{(1+\lambda \gamma_i)(1 + \lambda \gamma_j)}I_{2}^{ij}, \\
    \G_3(s,t) &= \sum_{i}\frac{h_i(t)h_i(s)}{(1+\lambda\gamma_i)^2}.
\end{align*}
In particular, if $m \geq 1$ for $\HS = H^{m}$ and Assumption \ref{assp:HS_norm_of_C} holds, then
\begin{equation*}
    \Delta_{\G_1}(t,s) = \abs{\G_1(t, t) + \G_1(s, s) - 2\G_1(t,s)} \leq C_{\G_1}h^{\tau - 1}\abs{t-s},
\end{equation*}
\begin{equation*}
    \Delta_{\G_2}(t,s) = \abs{\G_2(t, t) + \G_2(s, s) - 2\G_2(t,s)} \leq C_{\G_2}h^{\tau - 1}\abs{t-s},
\end{equation*}
\begin{equation*}
    \Delta_{\G_3}(t,s) = \abs{\G_3(t, t) + \G_3(s, s) - 2\G_3(t,s)} \leq C_{\G_3}h^{-1}\abs{t-s}^2.
\end{equation*}
\end{lemma}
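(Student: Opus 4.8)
The plan is to establish the two assertions in turn: first the exact eigen-expansions of $\G_1,\G_2,\G_3$, and then the second-difference bounds.

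\textbf{Expansion step.} I would start from the reproducing-kernel expansion in Lemma~\ref{prop:func_expression_by_h}, namely $K(u,t)=K_t(u)=\sum_v(1+\lambda\gamma_v)^{-1}h_v(t)h_v(u)$, and substitute it into the definitions of $\G_1,\G_2,\G_3$ in \eqref{eq:I_and_G_intgrals}. Writing $w_v=(1+\lambda\gamma_v)^{-1}$ and interchanging summation and integration (licensed by absolute convergence, since $\sup_v\norm{h_v}_{\sup}<\infty$ by Assumption~\ref{assp:Fourier_expansion} and $\sum_v w_v<\infty$ for $m>1/2$), each double integral factors through the kernel coefficients and leaves precisely the integrals defining $I_1^{ij}$ and $I_2^{ij}$ in \eqref{eq:G_functions}. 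For $\G_3$ the residual integral is $\int_\T h_i(u)h_j(u)\pi(u)\,du=V(h_i,h_j)=\delta_{ij}$ by Assumption~\ref{assp:Fourier_expansion}, which collapses the double sum onto its diagonal $\sum_i w_i^2h_i(t)h_i(s)$. This produces the three claimed formulas.

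\textbf{Symmetrization step.} For the increments I would form the second difference and symmetrize in the summation indices. Using $I_1^{ij}=I_1^{ji}$, $I_2^{ij}=I_2^{ji}$ and the symmetry of $w_iw_j$, the cross term $-2\G_k(t,s)$ symmetrizes so that the bracket becomes a perfect product, giving for $k=1,2$
\[
\G_k(t,t)+\G_k(s,s)-2\G_k(t,s)=\sum_{i,j}w_iw_jI_k^{ij}\,\{h_i(t)-h_i(s)\}\{h_j(t)-h_j(s)\},
\]
and for $k=3$ the manifestly nonnegative form $\sum_i w_i^2\{h_i(t)-h_i(s)\}^2$. I would also record that the latter equals $V(K_t-K_s,K_t-K_s)$, so that $\Delta_{\G_3}\le\norm{K_t-K_s}^2$; the crude estimate $\norm{K_t-K_s}^2\le 4C_K^2h^{-1}$ from Lemma~\ref{prop:K_and_Wf_bound} already delivers the $h^{-1}$ order, and the extra $\abs{t-s}^2$ factor has to come from the finer eigen-increment estimate.

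\textbf{Series estimation (the crux).} The remaining task is to bound these series. I would control the eigenfunction increments $\abs{h_i(t)-h_i(s)}$ through the differential system of Lemma~\ref{lem:exist_of_eigensys}: the relation $\int_\T\{h_i^{(m)}\}^2=\gamma_i\asymp i^{2m}$ together with $V(h_i,h_i)=1$ yields, after interpolation, a Lipschitz bound of the form $\abs{h_i(t)-h_i(s)}\le c\,i\,\abs{t-s}$, which I would pair with the uniform bound $\abs{h_i(t)-h_i(s)}\le2\norm{h_i}_{\sup}$. For $\Delta_{\G_1}$ and $\Delta_{\G_2}$ I then insert the decay $\abs{I_k^{ij}}\le i^{-\tau/2}j^{-\tau/2}$ from Assumption~\ref{assp:HS_norm_of_C}, which factorizes the double sum into the square of a single sum $\sum_i w_i i^{-\tau/2}\abs{h_i(t)-h_i(s)}$; for $\Delta_{\G_3}$ the single sum is handled directly. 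Since $w_i\asymp\{1+(hi)^{2m}\}^{-1}$ with $\lambda=h^{2m}$ and $\gamma_i\asymp i^{2m}$, the decisive maneuver is to split every sum at the critical frequency $i\sim h^{-1}$, applying the Lipschitz bound below the cutoff and the uniform bound above it, and then to read off the order in $h$ and $\abs{t-s}$. I expect this bookkeeping — pinning down the eigenfunction derivative growth and balancing the two increment bounds across the cutoff so as to land on the stated orders $h^{\tau-1}\abs{t-s}$ and $h^{-1}\abs{t-s}^2$ — to be the main obstacle; the expansion and symmetrization above are otherwise routine once Fubini and the index symmetry are in place.
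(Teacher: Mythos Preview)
Your expansion step coincides with the paper's. The divergence is in the treatment of the second differences.

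For $\Delta_{\G_1}$ and $\Delta_{\G_2}$ the paper does \emph{not} symmetrize into the perfect-product form $\sum_{i,j}w_iw_jI_k^{ij}\{h_i(t)-h_i(s)\}\{h_j(t)-h_j(s)\}$. It instead uses the cruder triangle split
\[
\abs{\G_1(t,t)+\G_1(s,s)-2\G_1(t,s)}\le\abs{\G_1(t,t)-\G_1(t,s)}+\abs{\G_1(s,s)-\G_1(s,t)},
\]
and on each piece applies the mean value theorem to only \emph{one} eigenfunction factor, $h_j(t)-h_j(s)=h_j'(x)(t-s)$, while the companion factor $h_i(t)$ is bounded by $\sup_i\norm{h_i}_{\sup}$. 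After inserting $\abs{I_1^{ij}}\le i^{-\tau/2}j^{-\tau/2}$ the double sum factorizes and is summed directly. For $\Delta_{\G_3}$ the paper does use the squared-increment form $\sum_iw_i^2\{h_i(t)-h_i(s)\}^2$ as you do.

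The decisive simplification in the paper is that it treats $\abs{h_i'}$ as bounded \emph{uniformly in $i$} (the proof says ``boundedness of the first derivative of $h_i$, for all $i$, as $h_i\in H^m$''). With that, no cutoff is needed: the factor $(t-s)$ or $(t-s)^2$ comes out cleanly and the remaining sums are just $\sum_i w_i^2$ or $\sum_i w_i\,i^{-\tau/2}$, which give the stated powers of $h$ immediately. Your Lipschitz estimate $\abs{h_i'}\le ci$ is the more honest one for the trigonometric-type eigensystems of Lemma~\ref{lem:exist_of_eigensys}, and with it the cutoff bookkeeping you anticipate is genuinely delicate --- in particular, inserting $i^2$ into the low-frequency block of $\Delta_{\G_3}$ produces an extra $h^{-2}$ that does not obviously cancel. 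The paper simply does not confront this: what you identify as the main obstacle is bypassed by taking the uniform derivative bound at face value.
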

\begin{proof}
Note that Lemma \ref{prop:func_expression_by_h} implies that $K(t, \cdot) = \sum_{i} h_i(t)h_i(\cdot)/(1+\lambda \gamma_i)$. Let's start by considering $\G_3(s,t)$ first.
\begin{align*}
    \G_3(s,t) &= \int_{u\in \T}\brac{\sum_i \frac{h_i(t)}{1 + \lambda\gamma_i}h_i(u)}\brac{\sum_j \frac{h_j(s)}{1 + \lambda\gamma_j}h_j(u)}\pi(u)du\\
    &=\sum_{i,j}\frac{h_i(t)h_j(s)}{(1+\lambda \gamma_i)(1+\lambda \gamma_j)}\int_{u\in \T}h_i(u)h_j(u)\pi(u)du\\
    &= \sum_{i,j}\frac{h_i(t)h_j(s)}{(1+\lambda \gamma_i)(1+\lambda \gamma_j)}V(h_i, h_j)\\
    &= \sum_{i}\frac{h_i(t)h_i(s)}{(1+\lambda \gamma_i)^2}.
\end{align*}
Hence,
\begin{align*}
    \Delta_{\G_3}(t,s) &= \abs{\G_3(t, t) + \G_3(s, s) - 2\G_3(t,s)}\\
    &= \sum_{i}\frac{\brac{h_i(t)-h_i(s)}^2}{(1+\lambda \gamma_i)^2}\\
    &= \sum_{i}\frac{\{\df{h_i}(x)\}^{2}}{(1+\lambda \gamma_k)^2}(t-s)^2 ~\text{ for some $x$ between $t$ and $s$}\\
    &\leq C_{\G_3}h^{-1}\abs{t-s}^2,
\end{align*}
where the second-to-last line uses the Mean Value Theorem, and the last line relies on the boundedness of the first derivative of $h_i$, for all $i$, as $h_i \in \HS = H^{m}$. Furthermore,
\begin{align*}
    \G_1(s, t) &= \int_{(u, v) \in \T^2} c(u,v)\brac{\sum_i \frac{h_i(t)}{1 + \lambda\gamma_i}h_i(u)}\brac{\sum_j \frac{h_j(s)}{1 + \lambda\gamma_j}h_j(u)}\pi(u)du \\
    &= \sum_{i, j}\frac{h_i(t)h_j(s)}{(1+\lambda \gamma_i)(1 + \lambda \gamma_j)}\int_{(u,v) \in \T^2}c(u,v)h_i(u)h_j(v)\pi(u)\pi(v)du dv\\
    &= \sum_{i, j}\frac{h_i(t)h_j(s)}{(1+\lambda \gamma_i)(1 + \lambda \gamma_j)}I_{1}^{ij}.
\end{align*}
Since $\abs{\G_1(t, t) + \G_1(s, s) - 2\G_1(t,s)} \leq \abs{\G_1(t, t) - \G_4(t,s)} + \abs{\G_1(s, s) - \G_4(s, t)}$, and if we examine each term, we obtain that
\begin{align*}
   \abs{\G_1(t, t) - \G_1(t,s)} &\leq \sum_{i, j}\frac{\abs{h_i(t)\brac{h_j(s) - h_j(t)}I_{1}^{ ij}}}{(1+\lambda \gamma_i)(1 + \lambda \gamma_j)}\\
   &= \sum_{i, j}\frac{\abs{h_i(t)\df{h_j}(x)I_{1}^{ij}}}{(1+\lambda \gamma_i)(1 + \lambda \gamma_j)}\abs{t-s}\\
   &\leq c_1\sum_{i, j}\frac{\int_{(u,v) \in \T^2}c(u,v)h_i(u)h_j(v)du dv}{(1+\lambda \gamma_i)(1 + \lambda \gamma_j)}\abs{t-s}\\
   &\leq c_2\abs{t-s}\brac{\sum_{i}\frac{i^{-\tau}}{(1+\lambda \gamma_i)^2}}^{1/2}\brac{\sum_{j}\frac{j^{-\tau}}{(1 + \lambda \gamma_j)^2}}^{1/2}\\
   &\leq C_{\G_1}\abs{t-s}h^{\tau - 1}.
\end{align*}

Thus, we have $\Delta_{\G_1}(t,s) \leq C_{\G_1}\abs{t-s}h^{\tau - 1}$. The same argument applies to $\Delta_{\G_2}(t,s)$.
\end{proof}

The following technical lemma aims to establish the properties of
\begin{equation}\label{eq:Theta_def}
    \widetilde{\Theta}_i(t) = \frac{(nh)^{1/2}}{M}\Theta_i(t),
\end{equation}
where $\Theta_i(t) = \sum_{j=1}^{N_i}\{Y_{ij} - \mu(T_{ij})\}K_{T_{ij}}(t)$ for $i=1, 2, \ldots, n$, and $\mu$ is the true mean function. These results facilitate the derivation of point-wise normality and weak convergence.
\begin{lemma}\label{lem:tilde_Theta_upper_bound}
Suppose Assumptions \ref{assp:data_gm} - \ref{assp:exponential_tails} hold, then, for $i=1, 2, \ldots, n$,
\begin{align*}
    (i)~\sup_{t\in \T}\abs{\widetilde{\Theta}_i(t)} \overset{a.s}{\leq} c_1\frac{\log^2(n)}{n^{1/2}h}\text{~and~} (ii)~\E\left(\norm{\widetilde{\Theta}_i}^4\right) \leq c_2n^{-2}.
\end{align*}
\end{lemma}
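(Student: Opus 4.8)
The plan is to reduce both bounds to controlling the scalar residuals $Z_{ij} := \abs{Y_{ij} - \mu(T_{ij})}$, exploiting two deterministic facts used throughout the paper: the reproducing-kernel bound $\norm{K_t}^2 \leq C_K^2 h^{-1}$ from Lemma \ref{prop:K_and_Wf_bound}, and $M = \sum_{k=1}^n N_k \geq n$. Writing $\widetilde{\Theta}_i(t) = \frac{(nh)^{1/2}}{M}\sum_{j=1}^{N_i}\{Y_{ij}-\mu(T_{ij})\}K_{T_{ij}}(t)$, the triangle inequality together with $\abs{K_{T_{ij}}(t)} = \abs{\innerproduct{K_{T_{ij}}}{K_t}} \leq \norm{K_{T_{ij}}}\norm{K_t} \leq C_K^2 h^{-1}$ gives $\sup_{t\in\T}\abs{\widetilde{\Theta}_i(t)} \leq \frac{(nh)^{1/2}}{M}C_K^2 h^{-1}\sum_{j=1}^{N_i}Z_{ij} \leq \frac{C_K^2}{n^{1/2}h^{1/2}}\sum_{j=1}^{N_i}Z_{ij}$, while using $\norm{K_{T_{ij}}} \leq C_K h^{-1/2}$ at the level of the RKHS norm instead gives $\norm{\widetilde{\Theta}_i} \leq \frac{C_K}{n^{1/2}}\sum_{j=1}^{N_i}Z_{ij}$. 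Both parts thus become moment/tail statements about $\sum_{j=1}^{N_i}Z_{ij}$.

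For part (ii), raising the second estimate to the fourth power yields $\norm{\widetilde{\Theta}_i}^4 \leq C_K^4 n^{-2}\big(\sum_{j=1}^{N_i}Z_{ij}\big)^4$, so it suffices to bound $\E\big[\big(\sum_{j=1}^{N_i}Z_{ij}\big)^4\big]$ by a constant. I would condition on $N_i$, expand the fourth power into at most $N_i^4$ monomials $Z_{ij_1}Z_{ij_2}Z_{ij_3}Z_{ij_4}$, and bound each by $\E[Z^4]$ using the generalized Hölder inequality; crucially, Hölder requires no independence, which matters since residuals within a subject share the common path $X_i$. This gives $\E\big[\big(\sum_{j=1}^{N_i}Z_{ij}\big)^4\big] \leq \E[N^4]\,\E[Z^4]$. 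Here $\E[N^4] < \infty$ follows from $\E\{\exp(C_N N)\} < \infty$ in Assumption \ref{assp:exponential_tails}, and $\E[Z^4] = \E[\abs{Y-\mu(T)}^4] < \infty$ follows by splitting $Y-\mu(T) = \{X(T)-\mu(T)\} + \err$, applying the fourth-moment condition (iv) of Assumption \ref{assp:exponential_tails} (with the boundedness of $c(t,t)$ from Assumption \ref{assp:smooth_of_X}) to the first term and finiteness of $\E[\err^4]$ to the second.

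For part (i), the task is to show $\sum_{j=1}^{N_i}Z_{ij} = O(h^{-1/2}\log^2 n)$ on an event of probability tending to one, uniformly over $i \leq n$; the prefactor $C_K^2/(n^{1/2}h^{1/2})$ then delivers the stated rate $\log^2(n)/(n^{1/2}h)$. I would use the crude bound $\sum_{j=1}^{N_i}Z_{ij} \leq N_i \cdot \max_{j\leq N_i}Z_{ij}$ and control the two factors by maximal inequalities. First, the exponential moment of $N$ and a union bound over $i \leq n$ give $\max_{i\leq n}N_i \leq c\log n$ with probability $\to 1$. Second, $Z_{ij}$ is sub-exponential: $\abs{\err_{ij}}$ has the exponential moment from Assumption \ref{assp:exponential_tails}(ii), and $\abs{X_i(T_{ij})-\mu(T_{ij})} \leq \norm{X_i}_{\sup} + \norm{\mu}_{\sup}$, whose tail is controlled through $\norm{X_i}_{\sup} \leq C_K h^{-1/2}\norm{X_i}$ (Lemma \ref{prop:sup_norm}) and Assumption \ref{assp:exponential_tails}(i). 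A union bound over the at most $n\max_i N_i = O(n\log n)$ residuals then gives $\max_{i\leq n, j\leq N_i}Z_{ij} = O(h^{-1/2}\log n)$ with probability $\to 1$, and multiplying the two maxima produces the required $O(h^{-1/2}\log^2 n)$; the per-$i$ tail probability of the form used in Lemma \ref{lem:exp_tail_prob_Theta} comes out of the same computation.

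The main obstacle is the correct bookkeeping of the dependence on $h$ in part (i). Passing from the RKHS norm of $X$ to its supremum norm costs a factor $h^{-1/2}$ (Lemma \ref{prop:sup_norm}), so the effective sub-exponential parameter of $Z_{ij}$ degrades like $h^{1/2}$ and the maximal residual inflates to order $h^{-1/2}\log n$ rather than $\log n$; it is exactly this extra $h^{-1/2}$, combined with the $h^{-1/2}$ already present in the prefactor, that yields the $h^{-1}$ in the claimed bound. Keeping this tracking consistent, and ensuring the union bounds over $i$ and over residuals cost only logarithmic factors rather than powers of $n$, is the delicate part; the moment bound in part (ii) is comparatively routine once the kernel bound and $M \geq n$ are in place.
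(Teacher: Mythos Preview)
Your proposal is correct and follows essentially the same route as the paper: both arguments bound $\sup_t|\widetilde{\Theta}_i(t)|$ and $\|\widetilde{\Theta}_i\|$ in terms of $\sum_{j\leq N_i}|Y_{ij}-\mu(T_{ij})|$ via the kernel estimate $\|K_t\|\leq C_K h^{-1/2}$ together with $M\geq n$, then invoke the exponential-moment assumptions---paying the $h^{-1/2}$ cost of Lemma~\ref{prop:sup_norm} on the $X$-part, exactly as you track---to obtain the $\log n$ bounds for (i) and the fourth-moment bound for (ii). The only point to tighten is that the paper secures the almost-sure statement in (i) via Borel--Cantelli rather than a bare union bound giving probability $\to 1$; since your tail probabilities are summable in $n$, this upgrade is immediate.
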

\begin{proof}
Let $c_1 > C_{\err}^{-1}$, then
\begin{align*}
    \sum_{n}\Pr(\abs{\err_n} \geq c_1 \log(n)) &\leq \sum_{n} \frac{\E\{\exp(C_\err \abs{\err})\}}{n^{c_1 C_\err}}\\
    &= E\{\exp(C_\err \abs{\err})\} \sum_{n} n^{-c_1C_\err} \\
    & < \infty,
\end{align*}
where the final line arises from Assumptions \ref{assp:exponential_tails}. Consequently, we have $\abs{\err} \leq (c_1\log n )$ almost surely. For the random function $X$ and the number of observations per subject $N$, similar arguments apply, leading to $\norm{X} \leq (c_2\log n )$ and $\abs{N} \leq (c_3\log n)$ almost surely, where $c_2 > C_{X}^{-1}$ and $c_3 > C_{N}^{-1}$. 

Note that $$\underset{t \in \T}{\sup} \abs{\widetilde{\Theta}_i(t)} = \frac{(nh)^{1/2}}{M}\underset{t \in \T}{\sup}\abs{\Theta(t)}.$$
As a result, we can obatin the following:
\begin{align*}
    \underset{t \in \T}{\sup} \abs{\Theta_i(t)} &= \underset{t \in \T}{\sup} \abs[\bigg]{\sum_{j=1}^{N_i}\{Y_{ij} - \mu(T_{ij})\}K_{T_{ij}}(t)\}}\\
    &= \underset{t \in \T}{\sup} \abs[\Big]{\sum_{j=1}^{N_i}\{Y_{ij} - X_i(T_{ij}) + X_i(T_{ij}) - \mu(T_{ij})\}K_{T_{ij}}(t)\}}\\
    & \leq \underset{t \in \T}{\sup} \sum_{j=1}^{N_i}\abs{\err_{ij}K_{T_{ij}}(t)} + \underset{t \in \T}{\sup} \sum_{j=1}^{N_i} \abs[\Big]{\{X_{i}(T_{ij}) - \mu(T_{ij})\}K_{T_{ij}}(t)}.  \numberthis \label{eq:theta_tilde_upper_bound_as}
\end{align*}
The first term on the RHS of \eqref{eq:theta_tilde_upper_bound_as} can be  computed as follows:
\begin{align*}
     \underset{t \in \T}{\sup} \sum_{j=1}^{N_i}\abs{\err_{ij}K_{T_{ij}}(t)} &\leq (\sum_{j=1}^{N_i}\abs{\err_{ij}})\underset{t \in \T}{\sup}\abs{K_{T}(t)} \\
     &\leq C_Kh^{-1/2}\norm{K_t}\sum_{j=1}^{N_i}\abs{\err_{ij}} \text{ by Lemma \ref{prop:sup_norm}}\\
     &\overset{a.s.}{\leq} C_h\brac{c_1\log(n)}\brac{c_3\log(n)} h^{-1}\\
     &\leq c_4 h^{-1}\log^2(n). \numberthis \label{eq:as_error_bound}
\end{align*}
The second term on the RHS of \eqref{eq:theta_tilde_upper_bound_as} can be analyzed using a similar argument. Notice that
\begin{align*}
    \abs{X_i(T_{ij}) - \mu(T_{ij})} &= \innerproduct{X_i - \mu}{K_{T_{ij}}} \\
    & \leq \norm{X_i - \mu}\norm{K_{T_{ij}}}\\
    & \overset{a.s.}{<} c_2C_Kh^{-1/2}\log(n).
\end{align*}
Additionally,
\begin{align*}
    &\hspace{1.2em}\sup_{t\in \T}\sum_{j=1}^{N_i}\abs{\{X_i(T_{ij}) - \mu(T_{ij})\}K_{T_{ij}}(t)} \\
    &\leq \sum_{j=1}^{N_i}\abs[\big]{\{X_i(T_{ij}) - \mu(T_{ij})\}}\sup_{t\in \T}\abs{K_{T_{ij}}(t)} \\
    &\leq C_Kh^{-1/2}\norm{K_t}\sum_{j=1}^{N_i}\abs[\big]{\{X_i(T_{ij}) - \mu(T_{ij})\}} \text{ by Lemma \ref{prop:sup_norm}}\\
    &\leq C_K^2h^{-1}\pbrac[\Big]{\sum_{j=1}^{N_i}\innerproduct{X_i - \mu}{K_{T_{ij}}}}\\
    &\leq C_K^2h^{-1}N_i\norm{X_i-\mu}\norm{K_t}\\
    &\overset{a.s.}{\leq} c_2C_K^3h^{-3/2}\log(n)N_i\\
    &\overset{a.s.}{\leq} c_5 h^{-3/2}\log^2(n). \numberthis \label{eq:as_mean_differ_bound}
\end{align*}
Because \eqref{eq:as_mean_differ_bound} dominates \eqref{eq:as_error_bound} when $h = o(1)$, for sufficient large $n$, we can obtain
\begin{align*}
    \eqref{eq:theta_tilde_upper_bound_as} \overset{a.s.}{\leq} c_6\frac{\log^2 n}{h^{3/2}}.
\end{align*}
Immediately, it can be shown that
\begin{align*}
    \sup_{t\in\T} \abs{\widetilde{\Theta}(t)} \overset{a.s.}{\leq} c_6\frac{(nh)^{1/2}}{n}\frac{\log^2(n)}{h^{3/2}} = c_6\frac{\log^2(n)}{n^{1/2}h}.
\end{align*}

As for the second part of the lemma, consider that:
\begin{align*}
    &~~~~\innerproduct{\widetilde{\Theta}_i}{\widetilde{\Theta}_i}\\
    &= \frac{nh}{M^2}\sum_{j=1}^{N_i}\sum_{k=1}^{N_i}\brac{Y_{ij} - \mu(T_{ij})}\brac{Y_{ik} - \mu(T_{ik})}\innerproduct{K_{T_{ij}}}{K_{T_{ik}}}\\
    &\leq \frac{1}{n}\sum_{j,k=1}^{N_i}\brac{Y_{ij} - \mu(T_{ij})}\brac{Y_{ik} - \mu(T_{ik})}\\
    &= \frac{1}{n}\sum_{j,k=1}^{N_i}\brac{\err_{ij} + \delta_{ij}}\brac{\err_{ik} + \delta_{ik}},
\end{align*}
where $\delta_{ij} = X_i(T_{ij}) - \mu(T_{ij})$. We can see that:
\begin{align*}
(i)~\E(\delta_{ij}) = 0, (ii)~\E(\delta_{ij}^2) = I_{t,t}, (iii)~\E(\delta_{ij}\delta_{ik}) = I_{t,s}, j\neq k,
\end{align*}
where $I_{t,s}, I_{t,t}$ are defined as in \eqref{eq:I_and_G_intgrals} for all $j, k=1,2, \ldots, N_i$. Thus,
\begin{align*}
    \E\left(\norm{\widetilde{\Theta}_i}^4\right) &= \E\left(\innerproduct{\widetilde{\Theta}_i}{\widetilde{\Theta}_i}^2\right)\\
    &\leq \frac{1}{n^2}\E\sbrac[\bigg]{\sum_{j,k,u,v=1}^{N_i}\pbrac{\err_{ij} - \delta_{ij}}\pbrac{\err_{ik} - \delta_{ik}}\brac{\err_{iu} - \delta_{iu}}\brac{\err_{iv} - \delta_{iv}}}\\
    &= \frac{1}{n^2}\E\brac{N_i \err_{i1}^4 + (N_i^2 - N_i)\err_{i1}^2\delta_{i1}\delta_{i2} + N_i\err_{i1}^2\delta_{i1}^2 + N_i \delta_{i1}^4}. \numberthis\label{eq:4th_moment_tilde_Theta}
\end{align*}
Note that there always exists an $\epsilon$ such that $\epsilon^4 \leq \exp(C_\eps\epsilon)$. Thus,
\begin{align*}
    \E(\eps_{i1}^4) &= \int_{\err \leq \epsilon} \err^4h_{\err}(\err)d\err + \int_{\err > \epsilon} \err^4h_{\err}(\err)d\err \\
    &\leq \epsilon^4 + \E\brac{\exp(C_\err \abs{\eps})} \\
    & = O(1) \text{~by Assumption \ref{assp:exponential_tails}},
\end{align*}
where $h_\err$ is the density function of $\err$.

Furthermore,
\begin{align*}
    \E[\brac{X_i(T_{ij})-\mu(T_{ij})}^4] &\leq C(\E[\brac{X_i(T_{ij})-\mu(T_{ij})}^2])^2\\
    &\leq C I_{t,t}^2 \\
    & = O(1),
\end{align*}
where the last line uses the the marginal density function for $T$, and $I_{t,t}$ is bounded from above by Assumption \ref{assp:data_gm} and Assumption \ref{assp:smooth_of_X}. Therefore, combining the above two results, and Assumption \ref{assp:data_gm} through \ref{assp:smooth_of_X}, we can conclude that $$\eqref{eq:4th_moment_tilde_Theta} = O(1/n^2).$$
\end{proof}

The final technical lemma calculates the decay rate for the supremum of $\widetilde{\Theta}_i$. A general statement without restrictions on $h$ is given below. This lemma is applied when one needs to verify the Lindeberg's condition.
\begin{lemma}\label{lem:exp_tail_prob_Theta}
Suppose Assumptions \ref{assp:data_gm} - \ref{assp:exponential_tails} hold, then for any $e > 0$ and $i=1, 2, \ldots, n$, we have
\begin{equation*}
    \p\brac{\sup_{t\in\T} \abs{\widetilde{\Theta}_i}> e} \overset{a.s.}{\leq} c_0\log^2(n)\exp\brac{-c_1\frac{e(nh)^{1/2}}{\log(n)}} + 2\p\brac{N_i > c_2\log(n)},
\end{equation*}
where $c_2 > C_N^{-1} + 1$ for $C_N$ in Assumption \ref{assp:exponential_tails}.
\end{lemma}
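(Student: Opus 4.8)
The plan is to prove Lemma~\ref{lem:exp_tail_prob_Theta} by splitting $\sup_{t\in\T}\abs{\widetilde{\Theta}_i(t)}$ into a measurement-error part and a process part, conditioning on the event that $N_i$ is not too large, and controlling each part with an exponential Markov inequality. Since $M=\sum_{k}N_k\geq n$, we have $(nh)^{1/2}/M\leq (h/n)^{1/2}$, so $\sup_t\abs{\widetilde{\Theta}_i(t)}\leq (h/n)^{1/2}\sup_t\abs{\Theta_i(t)}$. Writing $Y_{ij}-\mu(T_{ij})=\err_{ij}+\{X_i(T_{ij})-\mu(T_{ij})\}$ and using the reproducing property with Lemma~\ref{prop:K_and_Wf_bound} to get $\sup_t\abs{K_s(t)}=\abs{\innerproduct{K_s}{K_t}}\leq \norm{K_s}\norm{K_t}\leq C_K^2h^{-1}$, I would obtain
$$\sup_t\abs{\widetilde{\Theta}_i(t)}\leq C_K^2(nh)^{-1/2}\sum_{j=1}^{N_i}\abs{\err_{ij}}+C_K^2(nh)^{-1/2}N_i\norm{X_i-\mu}_{\sup}\triangleq A_i+B_i,$$
where the process part uses the pointwise bound $\abs{X_i(T_{ij})-\mu(T_{ij})}\leq\norm{X_i-\mu}_{\sup}$. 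Hence $\{\sup_t\abs{\widetilde{\Theta}_i}>e\}\subseteq\{A_i>e/2\}\cup\{B_i>e/2\}$.

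Next I would insert the event $\{N_i\leq c_2\log n\}$ into each probability, so that $\p(A_i>e/2)\leq\p(A_i>e/2,\,N_i\leq c_2\log n)+\p(N_i>c_2\log n)$ and likewise for $B_i$; the two complementary events produce the term $2\p(N_i>c_2\log n)$. On $\{N_i\leq c_2\log n\}$ the error part reduces to $\sum_{j=1}^{N_i}\abs{\err_{ij}}>\tfrac{e}{2C_K^2}(nh)^{1/2}$, which forces at least one summand to exceed $\tfrac{e(nh)^{1/2}}{2C_K^2c_2\log n}$; a union bound over the at most $c_2\log n$ errors, followed by $\p(\abs{\err}>y)\leq \E\{\exp(C_\err\abs{\err})\}e^{-C_\err y}$ from part (ii) of Assumption~\ref{assp:exponential_tails}, gives a bound of order $\log n\cdot\exp\{-c_1 e(nh)^{1/2}/\log n\}$. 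For the process part, $\{B_i>e/2,\,N_i\leq c_2\log n\}$ is contained in $\{\norm{X_i-\mu}_{\sup}>\tfrac{e(nh)^{1/2}}{2C_K^2c_2\log n}\}$, and part (i) of Assumption~\ref{assp:exponential_tails}, combined with the Sobolev embedding $H^m\hookrightarrow C(\T)$ (so that $\norm{X-\mu}_{\sup}$ has a finite exponential moment), yields the same exponential order by one more Markov step. Finally $\p(N_i>c_2\log n)\leq \E\{\exp(C_N N)\}n^{-C_Nc_2}$ by part (iii), and the choice $c_2>C_N^{-1}+1$ makes this decay faster than $n^{-1}$, as the downstream Lindeberg arguments require. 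Collecting the two sub-exponential pieces under the common prefactor $\log^2 n$ then finishes the bound.

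The part I expect to be delicate is the process term $B_i$. The naive route, used in Lemma~\ref{lem:tilde_Theta_upper_bound}, bounds $\abs{X_i(T_{ij})-\mu(T_{ij})}$ by Cauchy--Schwarz as $\norm{X_i-\mu}\,\norm{K_{T_{ij}}}\leq C_Kh^{-1/2}\norm{X_i-\mu}$, which introduces an extra factor $h^{-1/2}$ and degrades the exponent to $n^{1/2}h/\log n$ instead of the sharper $(nh)^{1/2}/\log n$ claimed here; since $h=o(1)$ this is a genuine loss. Avoiding it is the crux: one must bound the pointwise deviation directly by $\norm{X_i-\mu}_{\sup}$ and control that sup-norm through the exponential-moment assumption on $X$ (via the embedding), so that only a single factor $\sup_t\abs{K_{T_{ij}}(t)}\leq C_K^2h^{-1}$ is incurred. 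A secondary point is that the union bound, rather than a Chernoff bound on the whole sum, is what simultaneously supplies the $\log n$ prefactor and the $\log n$ in the denominator of the exponent; using $N_i\leq c_2\log n$ to convert the random number of summands into a deterministic polylogarithmic count is what makes both estimates go through.
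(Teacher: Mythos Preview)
Your strategy matches the paper's almost step for step: the same split into error and process pieces, the same conditioning on $\{N_i\le c_2\log n\}$ to produce the $2\,\p(N_i>c_2\log n)$ term, and the same union-bound-plus-exponential-Markov treatment of the error sum. Where you diverge is exactly the point you flag as delicate, and your diagnosis is sharper than the paper's own argument. The paper does \emph{not} take your route on the process term: it bounds $\abs{X_i(T_{ij})-\mu(T_{ij})}\le\norm{X-\mu}_{\sup}\le C_Kh^{-1/2}\norm{X-\mu}$ via Lemma~\ref{prop:sup_norm}, absorbs the extra $h^{-1/2}$, and obtains the weaker exponent $e\,n^{1/2}h/\log n$ for that piece. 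It then writes the final bound with exponent $e(nh)^{1/2}/\log n$, citing ``$nh\ge nh^2$''; but since $n^{1/2}h\le(nh)^{1/2}$, replacing $n^{1/2}h$ by $(nh)^{1/2}$ in a negative exponent makes the exponential \emph{smaller}, not larger, so that final step goes in the wrong direction. Your instinct to avoid the $h^{-1/2}$ loss is therefore well placed.

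Your proposed fix --- bounding $\abs{X_i(T_{ij})-\mu(T_{ij})}$ by $\norm{X_i-\mu}_{\sup}$ and invoking an $h$-free embedding $H^m\hookrightarrow C(\T)$ --- would indeed yield the stated exponent $(nh)^{1/2}/\log n$, but it needs the exponential moment in Assumption~\ref{assp:exponential_tails}(i) to be on an $h$-independent norm. In this paper $\norm{\cdot}$ is the $\lambda$-weighted RKHS norm of \eqref{eq:inner_prod}, and the only sup-norm comparison available from it is Lemma~\ref{prop:sup_norm} with constant $C_Kh^{-1/2}$, which reinstates the factor you are trying to avoid. If Assumption~\ref{assp:exponential_tails}(i) is read as an exponential moment on the standard $H^m$ norm (equivalently, on the RKHS norm at some fixed $\lambda_0>0$), your argument closes the gap and the lemma holds as stated; read literally with the $\lambda$-dependent norm, both your route and the paper's deliver only the weaker exponent $n^{1/2}h/\log n$.
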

\begin{proof}
This proof is a repetitive one that involves the repeated use of the Borel-Cantelli Lemma and the Markov inequality. 

Notice that we first decompose $\widetilde{\Theta}$ into two parts, which is the same as in \eqref{eq:theta_tilde_upper_bound_as}. Before delving into the detailed calculations, we first establish that for any random variable $T$ whose marginal distribution is bounded between $0$ and infinity, as stated in Assumption \ref{assp:data_gm}, the kernel function $K_T(t)$ is bounded on the order of $O(h^{-1})$. Hence, 
\begin{equation*}
    \sup_{t\in \T}\abs{K_T(t)} = \norm{K_t}_{\sup} \leq C_Kh^{-1/2}\norm{K_t} \leq C_K^2h^{-1},
\end{equation*}
where the final step uses Lemma \ref{prop:sup_norm}. Let $X_{ij} \triangleq X_i(T_{ij})$, we have
\begin{align*}
    &\hspace{1.3em}\p\pbrac{\sup_{t \in \T}\abs{\widetilde{\Theta}_i(t)} > e} \\
    &= \p\sbrac[\bigg]{\sup_{t \in \T}\abs[\Big]{\sum_{j=1}^{N_i}\brac{Y_{ij}-\mu(T_{ij})}K_{T_{ij}}(t)} > \frac{eM}{(nh)^{1/2}}}\\
    &\leq \p\sbrac[\bigg]{\sup_{t \in \T}\abs[\Big]{\sum_{j=1}^{N_i}\brac{Y_{ij}- X_{ij} + X_{ij} - \mu(T_{ij})}K_{T_{ij}}(t)} \geq \frac{en}{(nh)^{1/2}}}\\
    &\leq \p\sbrac[\bigg]{\sup_{t \in \T}\abs[\Big]{\sum_{j=1}^{N_i}\err_{ij}K_{T_{ij}}(t)} \geq \frac{en}{2(nh)^{1/2}}} \numberthis\label{eq:Theta_tail_prob_part_err}\\
    &+ \p\sbrac[\bigg]{\sup_{t \in \T}\abs[\Big]{\sum_{j=1}^{N_i}\brac{X_{ij} - \mu(T_{ij})}K_{T_{ij}}(t)} \geq \frac{en}{2(nh)^{1/2}}}. \numberthis\label{eq:Theta_tail_prob_part_X}
\end{align*}
Note that both the error term $\err$ and $X$ decay at an exponential rate, almost surely,  according to  Assumption \ref{assp:exponential_tails}. 

Specifically, for some $c_1^{'} > C_\err^{-1}$ and $c_2^{'} > C_N^{-1}$,
$$\abs{\err} \overset{a.s.}{\leq} c_1^{'}\log(n)\text{~and~}\norm{X} \overset{a.s.}{\leq} c_2^{'}\log(n).$$ 

The detailed computations are outlined in Lemma \ref{lem:tilde_Theta_upper_bound}. For conciseness, this part specifically elaborates on the detailed analysis of \eqref{eq:Theta_tail_prob_part_err}. 
\begin{align*}
    \eqref{eq:Theta_tail_prob_part_err} &\leq \p\brac{\sup_{t\in \T}\abs{K_T(t)}\Big(\sum_{j=1}^{N_i}\abs{\err_{ij}}\Big) \geq \frac{en^{1/2}}{2h^{1/2}}}\\
    &\leq \p\brac{C_K^2h^{-1}\sum_{j=1}^{N_i}\abs{\err_{ij}} \geq \frac{en^{1/2}}{2h^{1/2}}} \\
    &\leq \p\brac{\sum_{j=1}^{N_i}\abs{\err_{ij}} \geq \frac{e(nh)^{1/2}}{2C_K^2}\mid N_i \leq c_{1}\log(n)}\p\brac{N_i \leq c_1\log(n)} \\
    &+ \p\brac{\sum_{j=1}^{N_i}\abs{\err_{ij}} \geq \frac{e(nh)^{1/2}}{2C_K^2}\mid N_i > c_{1}\log(n)}\p\brac{N_i > c_1\log(n)}\\
    &\leq \p\brac{\sum_{j=1}^{N_i}\abs{\err_{ij}} \geq \frac{e(nh)^{1/2}}{2C_K^2}\mid N_i \leq c_{1}\log(n)} + \p\brac{N_i > c_1\log(n)}\\
    & \triangleq p_1 + p_2, \numberthis \label{eq:prob_err_exponential_tail}
\end{align*}
where $c_1 > C_N^{-1} + 1$ with $C_N$ being the constant defined in Assumption \ref{assp:exponential_tails}. We will first delve into the analysis of $p_1$. 
\begin{align*}
    &\hspace{1.25em}\p\brac{\sum_{j=1}^{N_i}\abs{\err_{ij}} \geq \frac{e(nh)^{1/2}}{2C_K^2}\mid N_i \leq c_{1}\log(n)}\\
    &\leq\sum_{j=1}^{N_i} \p\brac{\abs{\err_{ij}} \geq \frac{e(nh)^{1/2}}{2C_K^2N_i}\mid N_i \leq c_1\log(n)} \\
    &\leq N_i\left(\sum_{k=1}^{\ceil{c_1\log(n)}}\p\brac{N_i = k \mid N_i \leq c_1\log(n)}\right)\\
    &\hspace{3em}\times\p\brac{\abs{\err_{i1}} \geq \frac{e(nh)^{1/2}}{2C_K^2c_1\log(n)}~\Big | ~N_i \leq c_1\log(n)}\\
    &\overset{a.s.}{\leq} 2c_1\log^2(n)\E\brac{\exp(C_\err\abs{\err_{ij}})}\exp\brac{- \frac{eC_\err(nh)^{1/2}}{2C_K^2c_1\log(n)}}.
\end{align*}
 The final line uses Assumption \ref{assp:exponential_tails} and the Markov inequality. In this context, $\ceil{\cdot}$ represents the smallest integer that is great than or equal to $c_1\log(n)$. Thus, we establish the following result:
\begin{align*}
    &\eqref{eq:prob_err_exponential_tail} \overset{a.s.}{\leq} c_2\log^2(n)\exp\brac{-c_3\frac{e(nh)^{1/2}}{\log(n)}} + \p\brac{N_i > c_1\log(n)}.
\end{align*}
The same argument is applicable to \eqref{eq:Theta_tail_prob_part_X}. The only difference lies in the calculation process. 
$$\Pr\brac{\abs{X_i(T_{ij} - \mu(T_{ij})} \geq e(nh)^{1/2}/2C_K^2c_1\log(n)\mid N_i \leq c_1\log(n)}.$$ Note that $\abs{X_{i}(T_{ij}) - \mu(T_{ij})} \leq \norm{X - \mu}_{\sup} \leq h^{-1/2}\norm{X-\mu}$. Thus,
\begin{align*}
    \eqref{eq:Theta_tail_prob_part_X} &\overset{a.s.}{\leq} 2c_1\log^2(n)\E\brac{\exp(C_\err\abs{\err_{ij}})}\exp\brac{- \frac{eC_\err n^{1/2}h}{2C_K^2c_1\log(n)}}\\
    &\leq c_4\log^2(n)\exp\brac{-c_5\frac{en^{1/2}h}{\log(n)}} + \p\brac{N_i > c_1 \log(n)}.
\end{align*}
It's important to note that $nh \geq nh^2$ given that $h=o(1)$. Therefore, in the asymptotic sense, we obtain the following results:
\begin{equation*}
    \p(\sup_{t\in\T}\abs{\widetilde{\Theta}_i(t)} > e) \overset{a.s.}{\leq}  c_7\log^2(n)\exp\brac{-c_8\frac{(nh)^{1/2}}{\log(n)}} + 2\p\brac{N_i > c_1\log(n)}.
\end{equation*}
\end{proof}
\section{Additional simulation results}\label{appendix:extra_numerical_results}
\subsection{Prediction errors}
In this section, the out-of-sample performance (OOSP) of the proposed method, compared with PACE from \cite{FY2005}, is presented in Table \ref{tb:OOSP_comparison_delta_0}.

Note that the OOSP is well-defined only if the data from the two groups are collected over a common grid. Otherwise, $Y_{1ij} - Y_{2i'j'}$ may not be comparable, as responses could be collected at different observation times, or one subject may have more observations than another. Consequently, we present the OOSP results only for the second group. Specifically, let $\D$ denote the entire dataset, from which $10\%$ is randomly selected as the test set. Define
\begin{equation*}
    \text{OOSP} = \sum_{i=1}^{n'}\sum_{j=1}^{N_{i}}\brac{Y_{ij}' - \hat{\mu}(T_{ij}')}^2,
\end{equation*}
where $\hat{\mu}$ is derived from the training set, and $\{(Y_{ij}', T_{ij}')\}$ is taken from the test set. The results are summarized in \ref{tb:OOSP_comparison_delta_0}.
\begin{table}[!ht]
\caption{Prediction errors and standard error (in parenthesis) of OOSPs across 1000 Monte Carlo runs for group 2 with $\mu(t) = (2t-0.3)^3 + 0.5t$.}
\centering
\addtolength{\tabcolsep}{-4.7pt} 
\def\arraystretch{1.5}
\begin{tabular}[t]{c c c c c c c c}
\hline
\hline
\multirow{2}{*}{Setting } & \multirow{2}{*}{$N_{\max}$} & \multicolumn{2}{c}{$n_1=200, n_2 = 100$} & \multicolumn{2}{c}{$n_1=200, n_2 = 200$} & \multicolumn{2}{c}{$n_1=200, n_2 = 400$}\\ 
%\multirow{2}{c}{Designs} & 1 & 2 & 3 & 4 & 5 & 6\\
 & & SS & PACE & SS & PACE & SS & PACE\\
\hline
\multirow{4}{*}{c.1} & $6$& 0.770(0.084) & 0.771(0.085) & 0.741(0.039) & 0.741(0.039) & 0.743(0.020) & 0.743(0.020)\\
& $10$& 0.753(0.082) & 0.753(0.082) & 0.748(0.041) & 0.748(0.041) & 0.736(0.019) & 0.736(0.019)\\
& $14$& 0.748(0.068) & 0.748(0.068) & 0.736(0.034) & 0.736(0.034) & 0.737(0.017) & 0.737(0.017)\\
& $18$ & 0.743(0.073) & 0.743(0.073) & 0.739(0.034) & 0.739(0.034) & 0.736(0.016) & 0.736(0.016)\\
\hline
\multirow{4}{*}{c.2} & $6$& 0.694(0.064) & 0.695(0.065) & 0.676(0.033) & 0.676(0.033) & 0.681(0.015) & 0.681(0.015)\\
& $10$& 0.707(0.063) & 0.707(0.063) & 0.667(0.027) & 0.666(0.027) & 0.680(0.014) & 0.680(0.014)\\
& $14$& 0.678(0.055) & 0.679(0.055)& 0.679(0.026) & 0.679(0.026) & 0.677(0.014) & 0.677(0.014)\\
& $18$ & 0.685(0.052) & 0.685(0.053) & 0.681(0.024) & 0.681(0.024) & 0.686(0.013) & 0.686(0.013)\\
\hline
\multirow{4}{*}{c.3} & $6$& 0.657(0.052) & 0.660(0.053) & 0.644(0.024) & 0.645(0.024) & 0.648(0.013) & 0.649(0.013)\\
& $10$& 0.651(0.046) & 0.652(0.046) & 0.657(0.023) & 0.657(0.023) & 0.645(0.011) & 0.646(0.011)\\
& $14$& 0.659(0.043) & 0.660(0.044) & 0.652(0.023) & 0.653(0.023) & 0.650(0.011) & 0.650(0.011)\\
& $18$ & 0.660(0.040) & 0.661(0.041) & 0.650(0.021) & 0.650(0.021) & 0.647(0.010) & 0.647(0.010)\\
\hline
\end{tabular}
\label{tb:OOSP_comparison_delta_0}
\end{table}
\subsection{Global testing results}
In this section, we provide additional global testing results of the proposed method and the competing methods for $N_{\max} \in \{6, 14, 18\}$. These results are shown in Figures \ref{fig:weak_convergence_N6c1}-\ref{fig:weak_convergence_N18c3}. Corresponding numerical tables are presented in Tables \ref{tb:weak_convergence_N6} - \ref{tb:weak_convergence_N18}.
% N = 6
% Table N = 6
\begin{table}[hb!]
\caption{Rejection rates across 1000 Monte Carlo runs at a 5\% significance level with $N_{\max} = 6$}
\centering
\addtolength{\tabcolsep}{-2.5pt} 
\def\arraystretch{1.3}
\begin{tabular}[t]{c c @{\hspace{1.6em}} c c c @{\hspace{2em}} c c c @{\hspace{2em}} c c c}
\hline
\hline\\[-3ex]
\multirow{2}[2]{*}{Setting} & \multirow{2}[2]{*}{Test} & \multicolumn{3}{c}{\hspace{-2.4em} $n_1=200, n_2 = 100$} & \multicolumn{3}{c}{\hspace{-2.2em}$n_1=200, n_2 = 200$} & \multicolumn{3}{c}{\hspace{-0.5em}$n_1=200, n_2 = 400$}\\[0.3ex]
%\multirow{2}{c}{Designs} & 1 & 2 & 3 & 4 & 5 & 6\\
\cline{3-11}\\[-3ex]
 & & $\delta = 0$ & $\delta = 0.5$ & $\delta = 1$ & $\delta = 0$ & $\delta = 0.5$ & $\delta = 1$ & $\delta = 0$ & $\delta = 0.5$ & $\delta = 1$\\[0.5ex]
\hline\\[-3.5ex]
\multirow{3}{*}{c.1} & \textbf{SS} & 0.079 & 0.314 & 0.924 & 0.064 & 0.371 & 0.944 & 0.047 & 0.350 & 0.941\\[0.2ex]
& SH & 0.230 & 0.438 & 0.889 & 0.244 & 0.521 & 0.921 & 0.278 & 0.510 & 0.915\\
& pffr & 0.340 & 0.590 & 0.927 & 0.306 & 0.623 & 0.949 & 0.325 & 0.596 & 0.943 \\
\hline
\multirow{3}{*}{c.2} & \textbf{SS} & 0.072 & 0.360 & 0.930 & 0.043 & 0.367 & 0.950 & 0.053 & 0.371 & 0.945\\[0.2ex]
& SH & 0.218 & 0.452 & 0.906 & 0.236 & 0.513 & 0.943 & 0.257 & 0.547 & 0.944\\
& pffr & 0.313 & 0.614 & 0.940 & 0.309 & 0.606 & 0.964 & 0.343 & 0.625 & 0.953\\
\hline\\[-3.5ex]
\multirow{3}{*}{c.3} & \textbf{SS} & 0.078 & 0.302 & 0.852 & 0.056 & 0.237 & 0.797 & 0.047 & 0.229 & 0.729\\[0.2ex]
& SH & 0.229 & 0.441 & 0.907 & 0.149 & 0.419 & 0.905 & 0.140 & 0.430 & 0.890\\
& pffr & 0.300 & 0.690 & 0.967 & 0.318 & 0.685 & 0.975 & 0.354 & 0.700 & 0.955\\
\hline
\hline
\end{tabular}
\label{tb:weak_convergence_N6}
\end{table}

% Figures N = 6
\begin{figure}[!ht]
    \centering
    \caption{Rejection rates across 1000 Monte Carlo runs at a 5\% significance level with $N_{\max} = 6$ with $n_1 = 100$ under setting c.1.}
    \includegraphics[scale = 0.67]{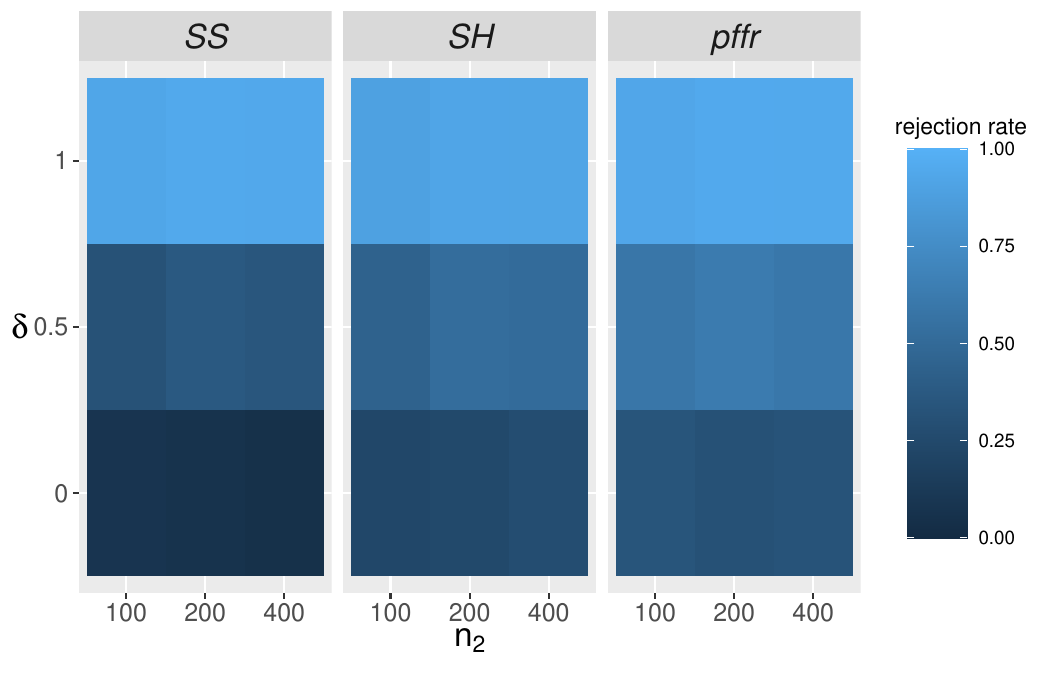}
    \label{fig:weak_convergence_N6c1}
\end{figure}

\begin{figure}[!hb]
    \centering
    \caption{Rejection rates across 1000 Monte Carlo runs at a 5\% significance level with $N_{\max} = 6$ with $n_1 = 100$ under setting c.2.}
    \includegraphics[scale = 0.67]{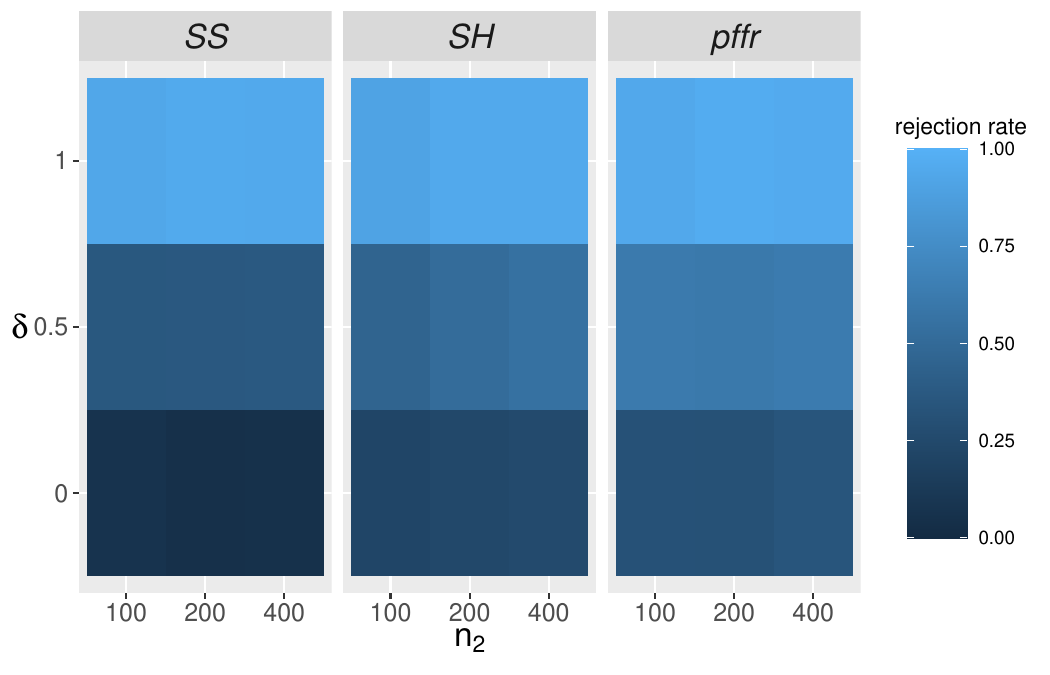}
    \label{fig:weak_convergence_N6c2}
\end{figure}

\clearpage
\begin{figure}[!hb]
    \centering
    \caption{Rejection rates across 1000 Monte Carlo runs at a 5\% significance level with $N_{\max} = 6$ with $n_1 = 100$ under setting c.3.}
    \includegraphics[scale = 0.67]{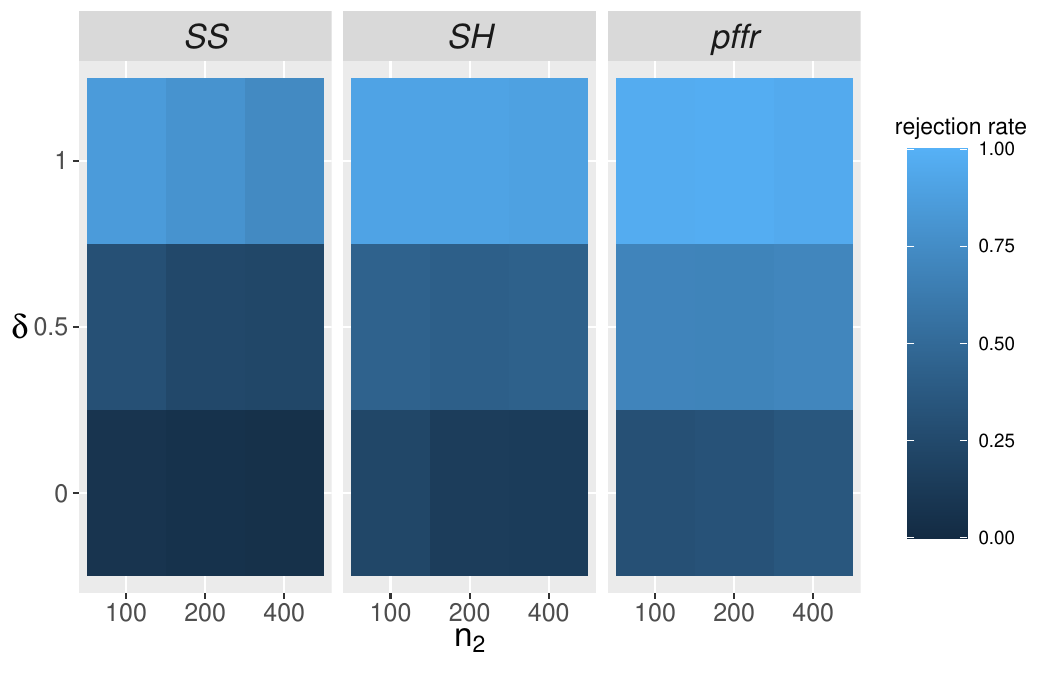}
    \label{fig:weak_convergence_N6c3}
\end{figure}
% N = 14
% Table N = 14
\begin{table}[!hb]
\caption{Rejection rates across 1000 Monte Carlo runs at a 5\% significance level with $N_{\max} = 14$}
\centering
\addtolength{\tabcolsep}{-2.5pt} 
\def\arraystretch{1.3}
\begin{tabular}[t]{c c @{\hspace{1.6em}} c c c @{\hspace{2em}} c c c @{\hspace{2em}} c c c}
\hline
\hline\\[-3ex]
\multirow{2}[2]{*}{Setting} & \multirow{2}[2]{*}{Test} & \multicolumn{3}{c}{\hspace{-2.4em} $n_1=200, n_2 = 100$} & \multicolumn{3}{c}{\hspace{-2.2em}$n_1=200, n_2 = 200$} & \multicolumn{3}{c}{\hspace{-0.5em}$n_1=200, n_2 = 400$}\\[0.3ex]
%\multirow{2}{c}{Designs} & 1 & 2 & 3 & 4 & 5 & 6\\
\cline{3-11}\\[-3ex]
 & & $\delta = 0$ & $\delta = 0.5$ & $\delta = 1$ & $\delta = 0$ & $\delta = 0.5$ & $\delta = 1$ & $\delta = 0$ & $\delta = 0.5$ & $\delta = 1$\\[0.5ex]
\hline\\[-3.5ex]
\multirow{3}{*}{c.1} & \textbf{SS} & 0.057 & 0.405 & 0.974 & 0.053 & 0.453 & 0.989 & 0.047 & 0.446 & 0.991\\[0.2ex]
& SH & 0.206 & 0.527 & 0.955 & 0.232 & 0.577 & 0.974 & 0.251 & 0.563 & 0.972\\
& pffr & 0.533 & 0.778 & 0.984 & 0.534 & 0.814 & 0.988 & 0.524 & 0.791 & 0.988 \\
\hline
\multirow{3}{*}{c.2} & \textbf{SS} & 0.062 & 0.428 & 0.985 & 0.053 & 0.472 & 0.992 & 0.034 & 0.433 & 0.991\\[0.2ex]
& SH & 0.187 & 0.520 & 0.975 & 0.187 & 0.595 & 0.980 & 0.191 & 0.592 & 0.982\\
& pffr & 0.533 & 0.789 & 0.989 & 0.580 & 0.812 & 0.989 & 0.558 & 0.819 & 0.992\\
\hline\\[-3.5ex]
\multirow{3}{*}{c.3} & \textbf{SS} & 0.049 & 0.278 & 0.878 & 0.052 & 0.253 & 0.854 & 0.036 & 0.224 & 0.807\\[0.2ex]
& SH & 0.209 & 0.531 & 0.968 & 0.125 & 0.530 & 0.970 & 0.133 & 0.510 & 0.964\\
& pffr & 0.515 & 0.858 & 0.999 & 0.538 & 0.875 & 0.999 & 0.586 & 0.841 & 0.990\\
\hline
\hline
\end{tabular}
\label{tb:weak_convergence_N14}
\end{table}

% Figures N = 14
\begin{figure}[!ht]
    \centering
    \caption{Rejection rates across 1000 Monte Carlo runs at a 5\% significance level with $N_{\max} = 14$ with $n_1 = 100$ under setting c.1.}
    \includegraphics[scale = 0.67]{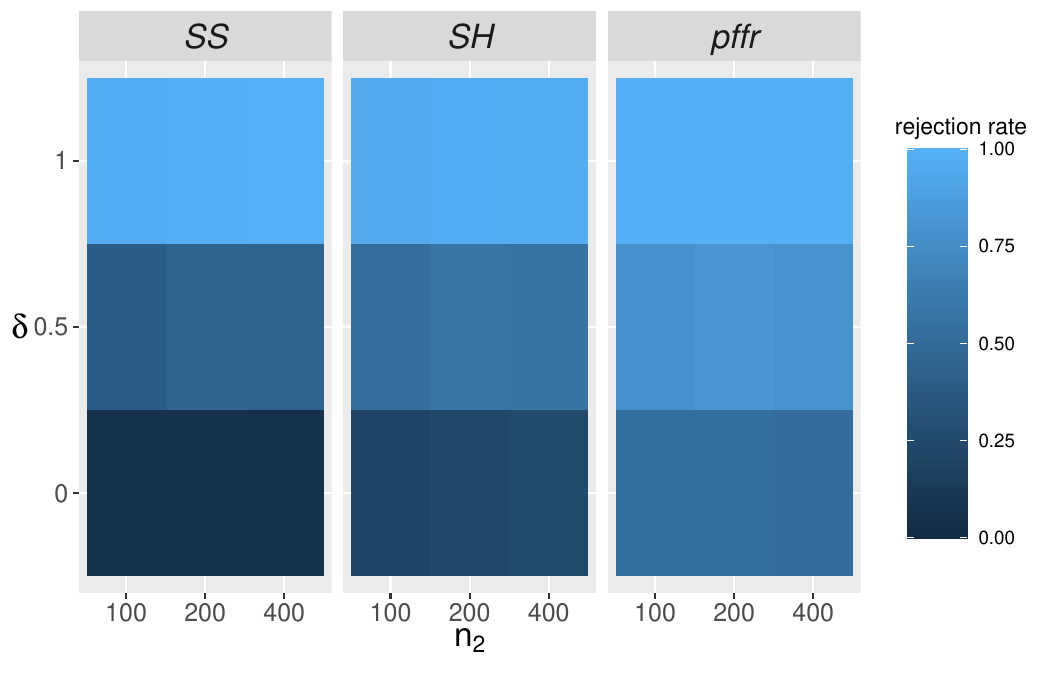}
    \label{fig:weak_convergence_N14c1}
\end{figure}
\begin{figure}[!hb]
    \centering
    \vspace{-2em}
    \caption{Rejection rates across 1000 Monte Carlo runs at a 5\% significance level with $N_{\max} = 14$ with $n_1 = 100$ under setting c.2.}
    \includegraphics[scale = 0.67]{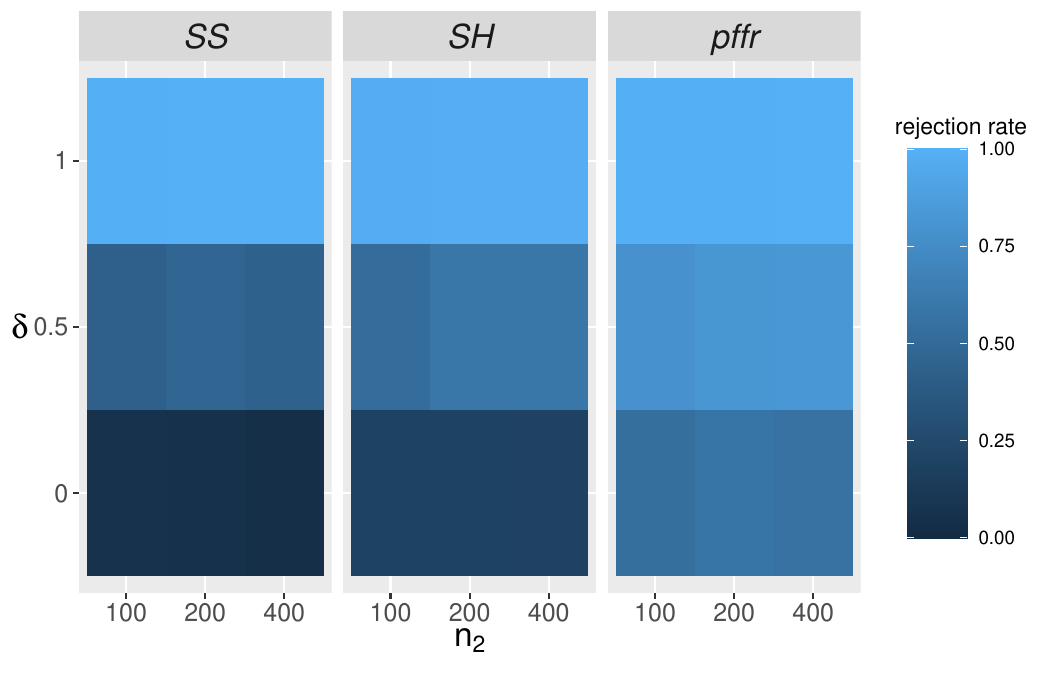}
    \label{fig:weak_convergence_N14c2}
\end{figure}

\clearpage
\begin{figure}[!ht]
    \centering
    \caption{Rejection rates across 1000 Monte Carlo runs at a 5\% significance level with $N_{\max} = 14$ with $n_1 = 100$ under setting c.3.}
    \includegraphics[scale = 0.67]{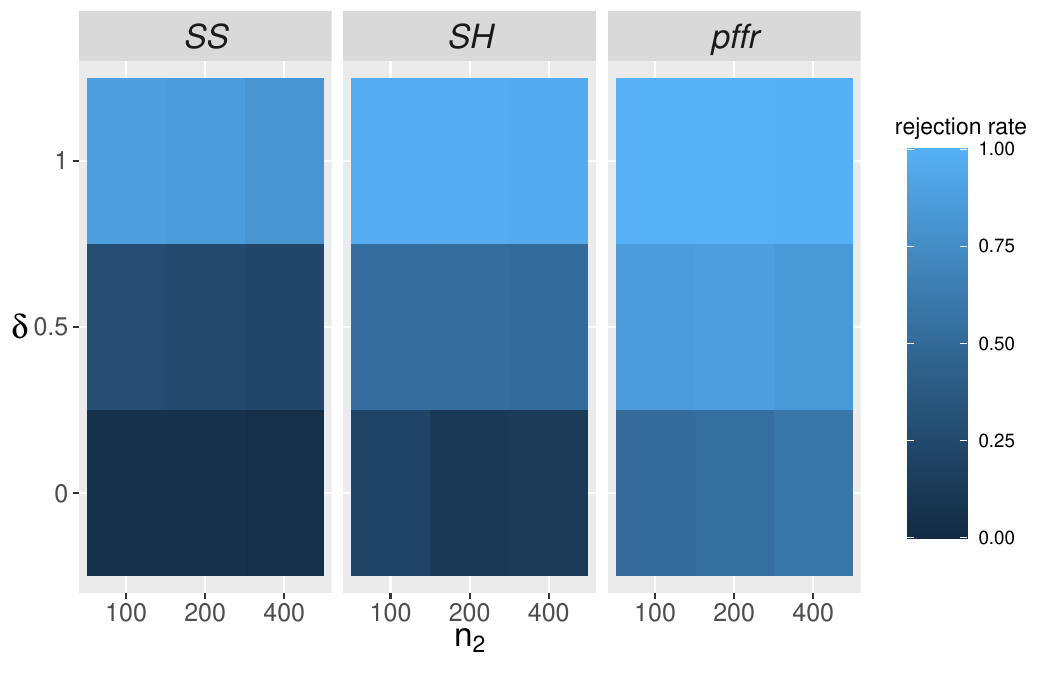}
    \label{fig:weak_convergence_N14c3}
\end{figure}

% N = 18
% Table N = 18
\begin{table}[!hb]
\caption{Rejection rates across 1000 Monte Carlo runs at a 5\% significance level with $N_{\max} = 18$}
\centering
\addtolength{\tabcolsep}{-2.5pt} 
\def\arraystretch{1.3}
\begin{tabular}[t]{c c @{\hspace{1.6em}} c c c @{\hspace{2em}} c c c @{\hspace{2em}} c c c}
\hline
\hline\\[-3ex]
\multirow{2}[2]{*}{Setting} & \multirow{2}[2]{*}{Test} & \multicolumn{3}{c}{\hspace{-2.4em} $n_1=200, n_2 = 100$} & \multicolumn{3}{c}{\hspace{-2.2em}$n_1=200, n_2 = 200$} & \multicolumn{3}{c}{\hspace{-0.5em}$n_1=200, n_2 = 400$}\\[0.3ex]
%\multirow{2}{c}{Designs} & 1 & 2 & 3 & 4 & 5 & 6\\
\cline{3-11}\\[-3ex]
 & & $\delta = 0$ & $\delta = 0.5$ & $\delta = 1$ & $\delta = 0$ & $\delta = 0.5$ & $\delta = 1$ & $\delta = 0$ & $\delta = 0.5$ & $\delta = 1$\\[0.5ex]
\hline\\[-3.5ex]
\multirow{3}{*}{c.1} & \textbf{SS} & 0.048 & 0.425 & 0.989 & 0.051 & 0.472 & 0.998 & 0.047 & 0.438 & 0.991\\[0.2ex]
& SH & 0.184 & 0.566 & 0.975 & 0.216 & 0.586 & 0.974 & 0.214 & 0.608 & 0.973\\
& pffr & 0.572 & 0.837 & 0.992 & 0.603 & 0.853 & 0.996 & 0.624 & 0.851 & 0.993 \\
\hline
\multirow{3}{*}{c.2} & \textbf{SS} & 0.057 & 0.454 & 0.993 & 0.044 & 0.504 & 0.994 & 0.037 & 0.494 & 0.994\\[0.2ex]
& SH & 0.170 & 0.533 & 0.974 & 0.192 & 0.617 & 0.979 & 0.175 & 0.599 & 0.985\\
& pffr & 0.605 & 0.831 & 0.992 & 0.607 & 0.869 & 0.990 & 0.638 & 0.872 & 0.991\\
\hline\\[-3.5ex]
\multirow{3}{*}{c.3} & \textbf{SS} & 0.044 & 0.300 & 0.899 & 0.048 & 0.221 & 0.863 & 0.041 & 0.249 & 0.802\\[0.2ex]
& SH & 0.210 & 0.558 & 0.982 & 0.128 & 0.518 & 0.987 & 0.130 & 0.559 & 0.977\\
& pffr & 0.600 & 0.920 & 1.000 & 0.642 & 0.901 & 0.999 & 0.637 & 0.884 & 0.993\\
\hline
\hline
\end{tabular}
\label{tb:weak_convergence_N18}
\end{table}

% Figures N = 18
\begin{figure}[!ht]
    \centering
    \caption{Rejection rates across 1000 Monte Carlo runs at a 5\% significance level with $N_{\max} = 18$ with $n_1 = 100$ under setting c.1.}
    \includegraphics[scale = 0.67]{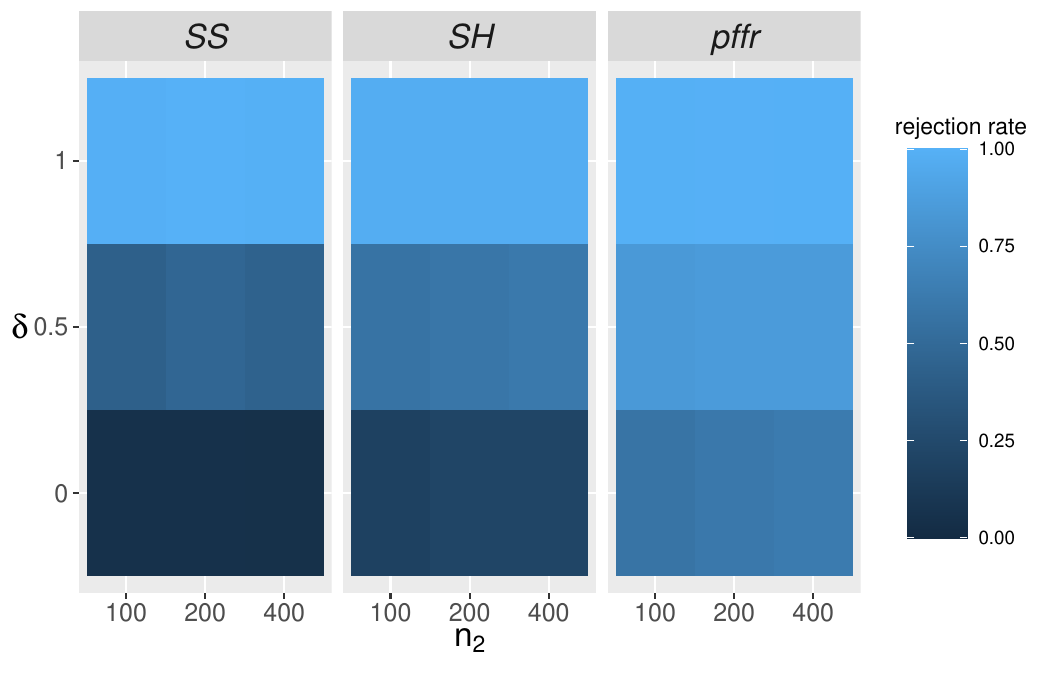}
    \label{fig:weak_convergence_N18c1}
\end{figure}

\begin{figure}[!hb]
    \centering
    \vspace{-2em}
    \caption{Rejection rates across 1000 Monte Carlo runs at a 5\% significance level with $N_{\max} = 18$ with $n_1 = 100$ under setting c.2.}
    \includegraphics[scale = 0.67]{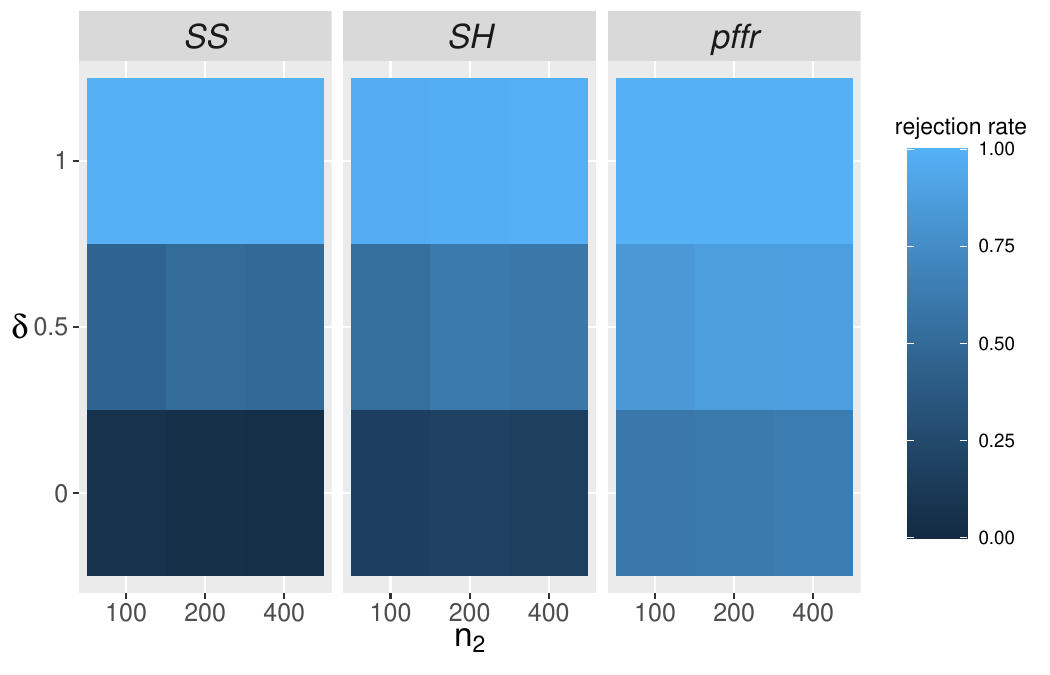}
    \label{fig:weak_convergence_N18c2}
\end{figure}

\begin{figure}[!ht]
    \centering
    \caption{Rejection rates across 1000 Monte Carlo runs at a 5\% significance level with $N_{\max} = 18$ with $n_1 = 100$ under setting c.3.}
    \includegraphics[scale = 0.67]{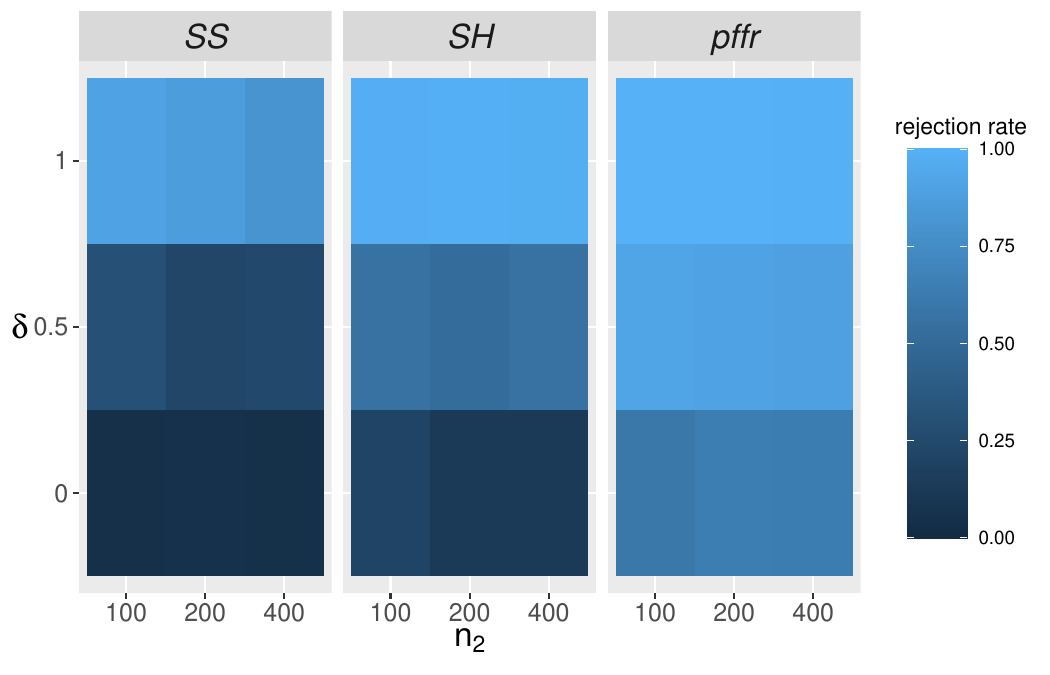}
    \label{fig:weak_convergence_N18c3}
\end{figure}
\end{appendix}
\clearpage
\bibliographystyle{imsart-nameyear.bst} % Style BST file (imsart-number.bst or imsart-nameyear.bst)
\bibliography{bib.bib}       % Bibliography file (usually '*.bib')

\end{document}